\renewcommand{\backref}[1]{}
\renewcommand{\backrefalt}[4]{%
\ifcase #1 %
\or
[p.\ #2]%
\else
[pp.\ #2]%
\fi}
\newcommand{\SWAP}{\mathrm{SWAP}}
\newcommand{\nn}{\nonumber\\}
\DeclareMathOperator*{\E}{\mathbf{E}}
\newcommand{\PREP}{\mathrm{PREP}}
\newcommand{\SEL}{\mathrm{SEL}}
\newcommand{\wh}[1]{\widehat{#1}}
\newcommand{\F}{\mathbb{F}}
\renewcommand{\hat}{\widehat}
\theoremstyle{plain}
\newtheorem{theorem}{Theorem}[section]
\newtheorem{corollary}[theorem]{Corollary}
\newtheorem{definition}[theorem]{Definition}
\newtheorem{proposition}[theorem]{Proposition}
\newtheorem{remark}[theorem]{Remark}
\newtheorem{lemma}[theorem]{Lemma}
\crefname{claim}{Claim}{Claims}
\newtheorem{fact}[theorem]{Fact}
\crefname{fact}{Fact}{Facts}
\newcommand{\Pmod}{\overline{\mathcal P}_n}
\newcommand{\N}{\mathbb{N}}
\newcommand{\diag}{\mathrm{diag}}
\newcommand{\eps}{\varepsilon}
\newcommand{\dist}{\mathsf{dist}}
\newcommand{\distphop}{\dist_{\rm{phaseop}}}
\newcommand{\distdiamond}{\dist_{\diamond}}
\renewcommand{\Pr}{\mathop{\bf Pr\/}}
\newcommand{\Ex}{\mathop{\bf E\/}}
\newcommand{\tr}{\mathrm{tr}} \newcommand{\Tr}{\tr}
\newcommand{\supp}{\mathrm{supp}}
\newcommand{\calA}{\mathcal{A}}
\newcommand{\calC}{\mathcal{C}}
\newcommand{\calD}{\mathcal{D}}
\newcommand{\calP}{\mathcal{P}}
\newcommand{\calS}{\mathcal{S}}
\newcommand{\calW}{\mathcal{W}}
\newcommand{\C}{\mathbb C}
\newcommand{\poly}{\mathsf{poly}}
\newcommand{\abs}[1]{\lvert #1 \rvert}
\newcommand{\norm}[1]{\lVert #1 \rVert}
\newcommand{\opnorm}[1]{\norm{#1}_{\mathrm{op}}}
\newcommand{\fnorm}[1]{\norm{#1}_{F}}
\newcommand{\ketbra}[2]{\ket{#1}\!\!\bra{#2}}
\newcommand{\sympcomp}{\perp}
\newcommand{\paulisupport}{\calW}
\title{Efficient Learning of Structured Quantum Circuits\\via Pauli Dimensionality and Sparsity}
\author{}
\author{Sabee Grewal\thanks{\texttt{sabee@cs.utexas.edu}. The University of Texas at Austin.} 
\and Daniel Liang\thanks{\texttt{daniel.liang@ll.mit.edu}. Portland State University.}}
\date{}
\begin{document}

\maketitle
\begin{abstract}
We study the problem of efficiently learning an unknown $n$-qubit unitary channel in \emph{diamond distance} given query access.
We present a general framework showing that if Pauli operators remain low-complexity under conjugation by a unitary, then the unitary can be learned efficiently.
This framework yields polynomial-time algorithms for a wide range of circuit classes, including $O(\log \log n)$-depth circuits, quantum $O(\log n)$-juntas, near-Clifford circuits, the Clifford hierarchy, fermionic matchgate circuits, and certain compositions thereof. 
Our results unify and generalize prior work, and yield efficient learning algorithms for more expressive circuit classes than were previously known. 

Our framework is powered by new learning algorithms for unitaries whose Pauli spectrum is either supported on a small subgroup or is sparse. 
If the Pauli spectrum is supported on a subgroup of size $2^k$, we give an $\widetilde{O}(2^k/\eps)$-query algorithm and a nearly matching $\Omega(2^k/\eps)$ lower bound. 
For $k = 2n$, we recover the optimal $O(4^n/\eps)$-query algorithm of Haah, Kothari, O’Donnell, and Tang [FOCS ’23].
If the Pauli spectrum is supported on $s$ Pauli operators, we give an $O(s^2/\eps^2)$-query algorithm and an $\Omega(s/\eps)$ lower bound.
\end{abstract}

\newpage 
\hypersetup{linktocpage}
\tableofcontents

\newpage

\section{Introduction}
Given black-box access to an unknown unitary process, how efficiently can one reconstruct its action? This task---known as \emph{unitary process tomography}---is a central problem in quantum information and quantum computation.
It has been extensively studied over the past several decades under various models and performance measures (see \cite[Section 1.3]{haah2023query} for a detailed overview).
Recently, Haah, Kothari, O’Donnell, and Tang~\cite{haah2023query} established that $\Theta(4^n/\eps)$ queries are both necessary and sufficient to learn an unknown $n$-qubit unitary to $\eps$ accuracy in diamond norm. 

A complementary challenge is to identify structured subclasses of unitary channels that admit \emph{efficient} learning algorithms in both query complexity and runtime. 
This direction seeks to characterize which quantum dynamics are tractable to learn, and is practically relevant for the experimental verification of quantum devices. 
Moreover, efficiently learnable circuit classes cannot be pseudorandom, so such algorithms delineate the boundary of pseudorandomness for quantum circuits, a direction that has received significant recent attention~\cite{ma2024note,ma2025howto,schuster2025random,lu2025parallelkacswalkgenerates,pmlr-v291-chia25a,foxman2025randomunitariesconstantquantum}.
For these reasons, a growing line of work has identified efficiently learnable subclasses of quantum circuits, including constant-depth circuits~\cite{huang2024shallow}, fermionic matchgates~\cite{oszmaniec2022fermion}, Clifford circuits~\cite{low2009learning}, and other restricted models. 

Despite this progress, existing efficient-learning results are tailored to specific circuit classes and rely on disparate techniques. 
This raises a basic question: is there a common structural reason why many circuit classes are learnable?
In this work, we answer this question by developing a general framework showing that if a generating set of Pauli operators remains low-complexity under conjugation by a unitary, then the unitary can be learned efficiently.
This perspective provides a unifying explanation for a broad range of prior results, extends them to more expressive circuit classes, and in several cases yields algorithms with improved query and time complexity. 

\subsection{The Unifying Framework}

Our framework studies the action of the unknown unitary $U$ on Pauli operators via conjugation, i.e., the map $P \mapsto U^\dagger P U$. 
Since the Pauli operators form a basis for all $n$-qubit operators, this action fully determines $U$. 
In particular, it suffices to understand the action of $U$ on a \emph{generating set} of Pauli operators. 
A set $G$ of Pauli operators is said to generate the Pauli group if every Pauli operator can be written as a product of elements of $G$ (up to global phase). 
Our framework shows that if the operators $\{U^\dagger g U : g \in G\}$ have low-complexity Pauli spectra, then $U$ can be learned efficiently.

We consider two natural notions of complexity of the Pauli spectrum. Any unitary $U$ admits a Pauli expansion $U = \sum_{P \in \{I,X,Y,Z\}^{\otimes n}} \alpha_P P$, where the coefficients $\{\alpha_P\}$ form the Pauli spectrum of $U$. We say that $U$ has \emph{Pauli dimensionality} $k$ if its spectrum is supported on a subgroup $G$ of size $2^k$, and \emph{Pauli sparsity} $s$ if its spectrum is supported on a set $S$ of size $s$. 
These notions can be viewed as quantum analogues of Fourier dimensionality and sparsity in Boolean function analysis~\cite{gopalan2011testing}.

Let $\distdiamond(\cdot,\cdot)$ denote the diamond distance, i.e., the standard worst-case distance measure between quantum channels. 

\begin{theorem}[Combination of \cref{thm:generating-set-suffices,thm:efficient-learning-thm}]
\label{thm:intro-framework}
Let $G$ be a known generating set of Pauli operators, and let $U$ be an $n$-qubit unitary such that for every $g \in G$, the operator $U^\dagger g U$ has Pauli dimensionality $k$ (resp.\ Pauli sparsity $s$). Then there is an algorithm that outputs a unitary $V$ satisfying $\distdiamond(U, V) \leq \eps$ with probability at least $1-\delta$. The algorithm uses $\poly(2^k, n)\cdot \frac{\log(1/\delta)}{\eps}$ (resp.\ $\poly(s, n)\cdot \frac{\log(1/\delta)}{\eps^2}$) queries.
\end{theorem}

Thus, \cref{thm:intro-framework} reduces unitary learning to identifying a generating set $G$ for which $\{U^\dagger g U : g \in G\}$ have low-complexity Pauli spectra. 
We therefore obtain learning algorithms for broad classes of quantum circuits by establishing this property.

Among the core technical ingredients underlying \cref{thm:intro-framework} are new learning algorithms for learning unitaries with low-complexity Pauli spectra. 

\begin{theorem}[Informal version of \cref{cor:pauli-dimension-bootstrap,thm:learning-sparsity,cor:dimension-lowerbound,cor:pauli-sparsity-LB}]
\label{thm:intro-pauli-dimension-sparsity}
    Let $U$ be an $n$-qubit $k$-Pauli-dimensional (resp.\ $s$-Pauli-sparse) unitary. 
    There is an efficient algorithm that outputs a unitary $V$ satisfying $\distdiamond(U,V) \leq \eps$ with high probability using $O(2^k k/\eps)$  (resp. $O(s^2/\eps^2)$) queries. 
Moreover, any algorithm requires $\Omega(2^k/\eps)$ (resp. $\Omega(s/\eps)$) queries.
\end{theorem}

\cref{thm:intro-pauli-dimension-sparsity} provides the core primitives underlying \cref{thm:intro-framework}. 
In particular, the efficiency of the resulting learning algorithms is governed by the complexity of these subroutines, and improvements to these primitives directly translate into improved algorithms for all circuit classes captured by our framework.
This gives strong motivation for optimizing the performance of these subroutines.

We make several remarks about \cref{thm:intro-framework,thm:intro-pauli-dimension-sparsity}.
For unitaries with Pauli dimensionality $k$, the resulting algorithms are \emph{time efficient} in all cases.
For unitaries with Pauli sparsity $s$, an additional challenge arises: the learning algorithms produce an operator $\hat{A}$ approximating $U$, which must be rounded to a nearby unitary $\hat{U}$. 
In general, it is unclear how to perform this rounding while preserving sparsity efficiently, and we leave this as an open problem. 
Nevertheless, in \cref{sec:sparsity}, we identify several natural settings in which this rounding step can be implemented efficiently. 
All circuit classes considered in this work fall into these settings, and hence all of our resulting learning algorithms run in polynomial time.\footnote{If one only needs to apply a \emph{quantum channel} close to the unknown unitary, the classical rounding step can be avoided. 
In particular, given $\hat{A}$, one can approximately implement its polar decomposition using standard techniques, e.g., the the algorithm of Quek and Rebentrost~\cite{quek2022fast}. This is because $\hat{A}$ is a linear combination of at most $s$ Pauli operators, and thus admits an efficient block-encoding via the Linear Combination of Unitaries (LCU) framework. See \cref{remark:quantum-polar} for details.}

Except for an edge case, the algorithm for learning low-Pauli-dimensional unitaries (\cref{thm:intro-pauli-dimension-sparsity}) is query-optimal.\footnote{A Pauli subgroup of order $2^k$ admits a canonical generating set $\{x_1,\dots,x_a,z_1,\dots,z_{a+b}\}$ with $2a+b=k$. The additional factor of $k$ in the $O(2^k k/\eps)$ query complexity of \cref{thm:intro-pauli-dimension-sparsity} arises only in the regime $a = o(\log b)$. In all other regimes, our algorithm is query optimal. Whether this factor is necessary remains open.}
When $k = 2n$, so that the Pauli spectrum has full support, \cref{thm:intro-pauli-dimension-sparsity} recovers the $\Theta(4^n/\eps)$ bound of Haah, Kothari, O’Donnell, and Tang~\cite{haah2023query}. 
Thus, our result generalizes their optimal tomography algorithm to a broader setting, interpolating between the worst-case regime of arbitrary unitaries and structured subclasses that admit faster learning.

Our work fits into a broader theme in quantum estimation: exploiting structure in the Pauli spectrum of quantum states and circuits. 
This perspective has led to efficient algorithms for learning and testing quantum states, channels, and unitary processes~\cite{gross2021schur,grewal2023efficient,grewal_et_al:LIPIcs.ITCS.2023.64,grewal2024agnostictomographystabilizerproduct,grewal2023improved,hangleiter2023bell,leone2023learning,chen2024stabilizerbootstrappingrecipeefficient,arunachalam2025testing,bao2025tolerant,hinsche2025testing,mehraban2025improved,iyer2025toleranttestingstabilizerstates,Flammia2021paulierror,bao2023testing,nadimpalli2024qac0,chen2023testing,arunachalam_et_al:LIPIcs.ICALP.2024.13,pmlr-v291-vasconcelos25a}, and for quantum estimation more generally~\cite{gu2025magicinduced,grewal2024pseudoentanglementaintcheap}.
Our results extend this paradigm to a unified framework for learning structured quantum circuits.

Finally, a recent work due to Honjani and Heidari~\cite{honjani2026querylearningnearlypauli} study the problem of learning an unknown unitary that is promised to be $\eps$-close to an $s$-Pauli-sparse unitary, where closeness is measured in the $\ell_1$-distance of the Pauli coefficients, i.e., two unitaries $U = \sum_P \alpha_P P$ and $V = \sum_P \beta_P P$ are $\eps$-close if $\sum_P \abs{\alpha_P - \beta_P} \le \eps$. 
Their algorithm uses $\widetilde{O}(s^6/\varepsilon^4)$ queries. In contrast, our \cref{thm:intro-pauli-dimension-sparsity} gives more efficient guarantees in the setting of exactly $s$-Pauli-sparse unitary channels.

\subsection{Applications}
\label{subsec:applications}

\cref{thm:intro-framework} yields efficient learning algorithms for a broad range of quantum circuit classes. 
At first glance, it may be unclear which circuits satisfy the condition that there exists a generating set $G$ for which $\{U^\dagger g U : g \in G\}$ have low-complexity Pauli spectra. 
We show that this condition in fact captures a wide variety of natural and well-studied circuit classes, providing a unifying explanation for many previously disparate learning results. 
Moreover, it extends beyond prior work, yielding efficient algorithms for more expressive classes of circuits than were previously known to be learnable.

We illustrate this with representative examples and defer full details to \cref{sec:applications}. 
In particular, \cref{thm:intro-framework} recovers prior results on learning arbitrary unitaries~\cite{haah2023query}, the Clifford hierarchy~\cite{low2009learning}, near-Clifford circuits~\cite{lai2022learning,leone-stabilizer-nullity}, shallow circuits~\cite{huang2024shallow}, fermionic matchgate circuits~\cite{oszmaniec2022fermion}, the matchgate hierarchy~\cite{matchgate-hierarchy}, and quantum juntas~\cite{chen2023testing}. 
Here, ``near-Clifford'' refers to circuits consisting of Clifford gates together with $O(\log n)$ non-Clifford single-qubit gates, and a quantum $k$-junta is a unitary acting nontrivially on only $k$ of $n$ qubits.

Importantly, prior works relied on techniques tailored to each circuit class. 
In contrast, all of these results arise as direct consequences of \cref{thm:intro-framework}, showing that the low-complexity Pauli spectra condition provides a unifying principle for efficient unitary learning.

\paragraph{Quantum $k$-juntas}\label{para:juntas-intros}
An $n$-qubit quantum $k$-junta is a unitary that acts nontrivially on only $k$ of the $n$ qubits. 
We obtain a query-optimal learning algorithm for quantum $k$-juntas.

\begin{corollary}[Informal version of \cref{cor:optimal-junta,thm:junta-lowerbound}]
\label{thm:junta-informal}
Let $U$ be an $n$-qubit quantum $k$-junta. 
Given query access to $U$, there is a time-efficient algorithm that outputs $V$ satisfying $\distdiamond(U,V) \leq \eps$ with probability at least $1-\delta$ using $O\!\left(\tfrac{4^k}{\eps}\log\!\tfrac{1}{\delta}\right)$ queries. 
This query complexity is optimal up to constant factors.
\end{corollary}

Our result yields exponential improvements in both query and time complexity over prior work. 
Compared to \cite{chen2023testing}, we improve the query complexity quadratically in $\eps$ while achieving the stronger guarantee of learning in diamond distance. 
The algorithm of \cite{huang2024shallow} relies on enumerating all $k$-local Pauli operators and is therefore efficient only when $k = O(1)$, whereas our algorithm remains efficient for $k = O(\log n)$.

\paragraph{Compositions of shallow and near-Clifford circuits}
Prior work gives efficient learning algorithms for constant-depth circuits~\cite{huang2024shallow} and for Clifford circuits with a small number of non-Clifford gates~\cite{leone-stabilizer-nullity}, but these techniques do not extend to their composition.

Using \cref{thm:intro-framework}, we obtain a unified algorithm that improves on both prior works and extends to compositions of shallow and near-Clifford circuits.

\begin{theorem}[Informal version of \cref{thm:learn-magic-heirarchy}]
\label{thm:learn-magic-heirarchy-informal}
Let $U$ be an $n$-qubit unitary of the form $U = QC$ or $U = CQ$, where $Q$ is a depth-$d$ circuit and $C$ is a Clifford circuit augmented with $t$ single-qubit non-Clifford gates. 
Then $U$ can be learned efficiently for $d = O(\log \log n)$ and $t = O(\log n)$.
\end{theorem}

This is the first result that simultaneously handles shallow and near-Clifford circuits, unifying the techniques developed for these settings.
Our algorithm is efficient for $d = O(\log \log n)$ and $t = O(\log n)$, which are both provably optimal.
In particular, we show that any algorithm requires exponential dependence on $t$ and doubly exponential dependence on $d$ (see \cref{sec:lower-bounds}).

Our algorithm remains efficient for depth $O(\log \log n)$ circuits, whereas \cite{huang2024shallow} can only learn constant-depth circuits. 
Similarly, \cite{leone-stabilizer-nullity} handles only Clifford circuits augmented with $T$ gates, whereas our algorithm applies to arbitrary single-qubit gates. 
We note that the class of unitaries covered by \cref{thm:learn-magic-heirarchy-informal} includes the first level of the recently introduced \emph{Magic Hierarchy}~\cite{parham2025quantumcircuitlowerbounds}.

\paragraph{Compositions of fermionic matchgates and Clifford circuits.}
Fermionic matchgates correspond to fermionic Gaussian unitaries under the Jordan--Wigner transformation~\cite{valiant2002quantum,terhal2002classical}. 
Prior work gives efficient learning algorithms for fermionic matchgate circuits~\cite{oszmaniec2022fermion,christensen2026learningfermioniclinearoptics}.

Using \cref{thm:intro-framework}, we extend these results to compositions with Clifford circuits, analogous to \cref{thm:learn-magic-heirarchy-informal}.

\begin{theorem}[Informal version of \cref{thm:learn-match-plus-clifford}]
\label{thm:intro-learn-matchgate-compositions}
Let $U$ be an $n$-qubit unitary of the form $U = FC$ or $U = CF$, where $F$ is a fermionic matchgate circuit and $C$ is a Clifford circuit. 
Then $U$ can be learned efficiently using queries $O(n^7/\eps^2)$ and time $O(n^8/\eps^2)$.
\end{theorem}

This shows that our framework extends beyond shallow circuits and juntas to other structured models such as fermionic systems, and continues to apply even under composition with Clifford operations.

\paragraph{Further applications}
We conclude with two brief remarks highlighting additional consequences of our framework.
First, our framework applies to a substantially more general class of unitaries than quantum $k$-juntas. For example, we can efficiently learn any unitary consisting of $O(1)$ alternating layers of near-Clifford circuits and arbitrary $O(\log n)$-juntas (not necessarily acting on the same qubits); see \cref{cor:alternating-junta-clifford}.

Second, our techniques naturally extend to alternative distance measures. 
For example, one can obtain analogues of our results in Frobenius distance, recovering and extending prior work on learning $\mathsf{QAC}^0$ circuits~\cite{pmlr-v291-vasconcelos25a} and low-degree unitaries~\cite{arunachalam_et_al:LIPIcs.ICALP.2024.13}. 
We omit the details, as they closely follow from our framework and do not introduce new technical ideas.

\subsection{Technical Overview}

\paragraph{The unifying framework}
At a high level, our approach reduces the problem of learning an unknown unitary $U$ to learning the Heisenberg evolution of a generating set of Pauli operators. 
Concretely, \cref{thm:intro-framework} shows that it suffices to learn approximations to the operators $\{U^\dagger g U : g \in G\}$.
This reduction proceeds in three steps.

First, in \cref{thm:generating-set-suffices}, we establish an information-theoretic guarantee: if two unitaries $U$ and $V$ have similar conjugation action on a generating set $G$ (i.e., $U^\dagger g U$ and $V^\dagger g V$ are close for all $g\in G$), then $U$ and $V$ are close in diamond distance. 
This shows that learning the Heisenberg evolution of $G$ is information-theoretically sufficient to learn $U$ itself.

Second, we give an efficient \emph{compiler} (\cref{thm:efficient-learning-thm}) that reconstructs a unitary $V$ from approximate descriptions of $\{U^\dagger g U : g \in G\}$. 
Our construction builds on the approach of Huang et al.~\cite{huang2024shallow}, who showed how to combine information about weight-one Pauli operators into a global approximation of $U$.
We generalize this approach to arbitrary generating sets of Pauli operators.
A key idea in the construction is to express $U^\dagger \otimes U$ as a product of local terms, each depending only on the conjugation of single-qubit Pauli operators, allowing us to assemble a global approximation from a product of local estimates.

Third, we design learning algorithms for the conjugated operators $U^\dagger g U$. 
Specifically, we design algorithms to learn the conjugated operators when they have low-complexity Pauli spectra. 
In particular, we develop algorithms for the cases where the operators are low Pauli dimensional and Pauli sparse.

\paragraph{Learning $k$-Pauli dimensionality}\label{para:k-pauli-dimension}
We now sketch our algorithm for learning unitaries whose Pauli spectrum is supported on a subgroup $G$ of size $2^k$.

At a high level, the algorithm proceeds in four steps. 
First, we approximately learn the subgroup $G$ supporting the Pauli spectrum of $U$ via Bell sampling of the Choi state. 
Second, we conjugate $U$ by an efficiently computable Clifford circuit to reduce the problem to learning a unitary that is \emph{approximately block-diagonal}. 
Third, we learn this approximately block-diagonal unitary and reconstruct $U$. 
Finally, we apply a bootstrapping technique to improve the dependence on $\eps$ from $1/\eps^2$ to $1/\eps$, achieving Heisenberg-limited scaling without inverse access to $U$.

The main technical challenge lies in the third step. 
A natural approach is to apply the unitary learning algorithm of \cite{haah2023query} independently to each block. 
However, this fails to preserve relative phase information between blocks and leads to an overall query complexity of $O(4^k)$, which is considerably worse than the $O(2^k)$ achieved by our algorithm.
Our key idea is to instead learn all blocks \emph{simultaneously}. 
We perform tomography on superpositions over the block index so that each recovered column encodes all blocks in parallel, preserving their relative phases.

This parallelization approach, however, only succeeds if the unitary is \emph{exactly} block-diagonal. 
Our reduction yields only an \emph{approximately} block-diagonal unitary, and the resulting off-block entries introduce noise that destroys the parallelization.
A key technical tool developed in this work is \emph{Pauli projection}. 
Given a subgroup $\widehat{G}$ and query access to $U=\sum_{P\in\mathcal P}\alpha_P P$, we define the Pauli projection of $U$ onto $\widehat{G}$ as
\[
\Pi_{\widehat{G}}(U) \coloneqq \sum_{P\in \widehat{G}} \alpha_P P.
\]
Note that $\Pi_{\widehat{G}}(U)$ will not generally be unitary. 
The crucial observation is that $\Pi_{\widehat{G}}(U)$ admits a Pauli-twirl representation,
\[
\Pi_{\widehat{G}}(U)=\mathbb E_{P\in \widehat{G}^{\sympcomp}}[PUP],
\]
which allows us to implement a block-encoding of $\Pi_{\widehat{G}}(U)$ using only $O(1)$ forward queries to $U$. 
This enables us to replace an approximately block-diagonal unitary with an exactly block-diagonal proxy that remains close to the original unitary, allowing the parallel tomography procedure to succeed.
To our knowledge, this is the first use of linear-combination-of-unitaries (LCU) and block-encoding techniques to enable a unitary or state learning algorithm.

\paragraph{Learning $s$-Pauli sparsity}
We now sketch our algorithm for learning unitaries whose Pauli spectrum is supported on $s$ operators. 
At a high level, we reduce unitary learning to a sparse state tomography problem.

The key observation is that the Choi state of a unitary $U = \sum_{P \in \supp(U)} \alpha_P P$ can be written as a superposition over $s$ Bell basis states, with amplitudes given by the Pauli coefficients $\{\alpha_P\}$. 
Thus, learning $U$ reduces to the following task: given copies of a state $\ket{\psi}$ supported on $s$ computational basis states, output a classical description of $\ket{\psi}$ that is $\eps$-close in trace distance.

We develop a copy-optimal tomography algorithm for such states, which uses $O(s/\eps^2)$ samples and runs in $\poly(s)$ time.  
By running the state tomography procedure, we can learn approximations $\{\wh{\alpha}_x\}$ of the Pauli coefficients and output the operator $\hat{A} = \sum_{P} \wh{\alpha}_P P$ that approximates $U$ and has support contained in $\supp(U)$.

A key distinction from the low-Pauli-dimensional setting is that this approach yields an \emph{improper} learner in general: the output $\hat{A}$ is not guaranteed to be unitary. 
While information-theoretically one can round $\hat{A}$ to a nearby unitary (e.g., via polar decomposition), doing so efficiently while preserving sparsity appears challenging, and whether such rounding can be performed in general remains an open question.
Nevertheless, we identify several natural settings in which efficient rounding is possible, and all circuit classes studied in this work fall into these settings. 
Consequently, all of our resulting learning algorithms run in polynomial time.

\paragraph{Applications}
The applications highlighted in \cref{subsec:applications} are obtained by instantiating our framework on specific circuit classes, by showing that their conjugation action on a suitable generating set has low-complexity Pauli spectra. 
We further extend this approach to an infinite hierarchy of unitary classes in \cref{subsec:hierarchy}, yielding a general framework that also encompasses the Clifford hierarchy~\cite{low2009learning}, the matchgate hierarchy~\cite{matchgate-hierarchy}, and compositions such as $O(1)$ alternations of $O(\log n)$-juntas with near-Clifford circuits. We defer further details to \cref{sec:qnc-plus-clifford,sec:applications}.

\subsection{Open Problems}
\label{subsec:disc-open-prob}

Our work takes a step toward unifying unitary learning algorithms and extending them to more expressive circuit classes. 
It leaves open several interesting problems related to the efficient learnability of structured quantum circuits.

\paragraph{A framework for state learning}
In \cref{thm:intro-framework}, we presented a sufficient condition for efficient unitary learning. 
A natural question is whether there is an analogous framework for learning quantum states. 
Is there a structural condition---analogous to the low-complexity Pauli spectrum condition studied here---that characterizes efficiently learnable classes of quantum states?

\paragraph{Extending \cref{thm:intro-framework}}
\cref{thm:intro-framework} reduces learning a unitary to learning the conjugated images of a generating set of Pauli operators, using a \emph{nonadaptive} reduction. 
Can this reduction be strengthened further? 
For example, can adaptive strategies extend the scope of the framework or improve its efficiency?

It would also be valuable to identify additional natural and physically motivated classes of unitaries that satisfy the condition in \cref{thm:intro-framework}. 
Conversely, it would be equally interesting to understand the limitations of the framework by exhibiting efficiently learnable classes of unitaries that do not appear to fit within it.

\paragraph{Inverse-free learning and pseudorandomness.}
The algorithms obtained through \cref{thm:intro-framework} require query access to both $U$ and $U^\dagger$, whereas the algorithms in \cref{thm:intro-pauli-dimension-sparsity,thm:junta-informal} use only forward access to $U$. 
For instance, \cref{thm:learn-magic-heirarchy-informal}, which learns compositions of depth-$O(\log \log n)$ circuits with near-Clifford circuits, uses inverse queries. 
Can this class be learned using only forward access to $U$?

Both possibilities would be interesting. 
A positive answer would require new algorithmic ideas beyond those developed here. 
A negative answer could have implications for quantum pseudorandomness: it would suggest a class of unitaries that is learnable given access to both $U$ and $U^\dagger$, but not with access to $U$ alone. 
Such a class would provide a candidate separation between weak and strong pseudorandom unitary ensembles (PRUs), where weak PRUs are secure against forward access while strong PRUs remain secure even given access to $U^\dagger$.

\paragraph{Rounding of Pauli-sparse matrices to unitaries}
Our learning algorithm for $s$-sparse unitaries outputs an approximate matrix $\hat{A}$ that is close to a Pauli-sparse unitary $U$ using polynomial time and queries. 
However, we do not know how to efficiently round $\hat{A}$ to a unitary.
We identify several natural settings where such rounding can be performed efficiently (see \cref{fact:very-sparse-rounding,fact:commute-rounding,fact:majorana-rounding}), and all applications in this work fall into these settings.
Ideally, Pauli sparsity is also preserved in the rounding process.
Whether efficient sparsity-preserving rounding is possible in general remains an interesting open problem.

\section{Preliminaries}\label{sec:prelims}

We use $\exp(x) = e^x$, and $\log$ denotes the natural logarithm.
For a set $S = \{x_1, \dots, x_k\} \subseteq \F^n$, we write $\langle S \rangle$ for the linear span of the elements of $S$, and $\dim(\cdot)$ for the dimension of a subspace. 

We will use the following standard consequence of a multiplicative Chernoff bound, which ensures that, with high probability, a sufficient number of successes occur in repeated Bernoulli trials.

\begin{lemma}\label{lemma:bernoulli_bound_mult}
	Let $X_1, \dots, X_m$ be i.i.d.\ Bernoulli random variables with probability $p$. If \[m = \frac{2}{p}\left(d + \log(1/\delta)\right),\] then
	\[
	\Pr\left[\sum_{i=1}^m X_i < d \right] \leq \delta.
	\]
\end{lemma}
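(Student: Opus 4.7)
The plan is to derive the bound as a direct, textbook application of the multiplicative Chernoff inequality; there is no real conceptual obstacle, only a short algebraic verification to make the constants work out. Let $S = \sum_{i=1}^m X_i$ and $\mu = \E[S] = mp$. The choice $m = \frac{2}{p}(d + \log(1/\delta))$ gives $\mu = 2(d+\log(1/\delta))$, so in particular $d \leq \mu/2$, which puts us in the lower-tail regime of Chernoff.

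Next, I would invoke the standard multiplicative lower-tail bound
\[
    \Pr[S \leq (1-\eta)\mu] \leq \exp(-\eta^2 \mu/2) \qquad \text{for } \eta \in [0,1],
\]
and set $\eta = 1 - d/\mu$ so that $(1-\eta)\mu = d$. With $\mu = 2(d+\log(1/\delta))$, this gives
\[
    \eta = \frac{d + 2\log(1/\delta)}{2(d+\log(1/\delta))}.
\]
Substituting into the bound yields
\[
    \Pr[S < d] \leq \exp\!\left(-\frac{(d + 2\log(1/\delta))^2}{4(d+\log(1/\delta))}\right).
\]

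The final step is to check $\eta^2 \mu/2 \geq \log(1/\delta)$, i.e., $(d+2\log(1/\delta))^2 \geq 4(d+\log(1/\delta))\log(1/\delta)$. Expanding both sides and cancelling the common $4d\log(1/\delta) + 4\log^2(1/\delta)$ terms reduces this to $d^2 \geq 0$, which is trivially true. Hence $\Pr[S < d] \leq \exp(-\log(1/\delta)) = \delta$, as claimed. The only minor subtlety is that $m$ as defined may not be an integer; one should read the hypothesis as $m \geq \frac{2}{p}(d+\log(1/\delta))$, and the argument goes through verbatim since making $m$ (and thus $\mu$) larger only strengthens the Chernoff bound.
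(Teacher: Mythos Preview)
Your proof is correct and follows exactly the approach the paper indicates: the paper does not actually prove this lemma but cites it as a ``standard consequence of a multiplicative Chernoff bound,'' and your argument is precisely that standard derivation. The algebra checking $(d+2\log(1/\delta))^2 \geq 4(d+\log(1/\delta))\log(1/\delta)$ is clean and correct.
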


Let $\calP_n \coloneqq \{\alpha P_1 \otimes \dots \otimes P_n : P_i \in \{I, X, Y, Z\}, \,\alpha \in \{\pm 1, \pm i\}\}$ denote the $n$-qubit Pauli group.
It is often convenient to identify Pauli operators (up to phase) with elements of $\F_2^{2n}$ (and vice versa).
In particular, $x = (a,b) \in \F_2^{2n}$ corresponds to the Pauli operator 
\[
W_x = i^{a \dot b} \bigotimes_{j=1}^n X^{a_j} Z^{b_j} \in \calP_n.
\]
We refer to the set $\{W_x\}_{x \in \F_2^{2n}} \subseteq \calP_n$ as the \emph{Weyl operators}, i.e., the set of Pauli operators modulo phase.
Formally, consider $\Pmod \coloneqq \calP_n / \{\pm 1, \pm i\}$, the Pauli group modulo phase; each coset in $\Pmod$ is represented uniquely by some $W_x$. 

We equip the space $\F_2^{2n}$ with the following bilinear form, called the symplectic product.
\begin{definition}[Symplectic product]
For $x,y \in \F_2^{2n}$, the \emph{symplectic product} is defined as
\[
[x,y] \coloneqq \sum_{i=1}^n x_i y_{n+i} \;+\; x_{n+i} y_i \quad (\mathrm{mod}\; 2).
\]
\end{definition}

The symplectic product precisely captures commutation relations: $[x,y] = 0$ if and only if $W_x$ and $W_y$ commute, and $[x,y]=1$ if and only if they anticommute.

The symplectic product gives rise to the notion of a symplectic complement. 

\begin{definition}[Symplectic complement]
    For a subspace $S \subseteq \F_2^{2n}$, the symplectic complement $S^\sympcomp$ is defined as
    \[
        S^\sympcomp \coloneqq \{x \in \F_2^{2n} : \forall y \in S,\, [x,y] = 0\}.
    \]
\end{definition}

Equivalently, if $S$ is viewed as a set of Weyl operators, then $S^\sympcomp$ is the set of all Weyl operators that commute with every element of $S$. 
The symplectic complement is always a subspace, and satisfies the dimension identity $\dim(S) + \dim(S^\sympcomp) = 2n$. 

We will also need the following basic Fourier-analytic fact, which says that symplectic characters of $S^\sympcomp$ vanish unless the input lies in $S$.

\begin{proposition}[{See e.g., \cite[Lemma 2.11]{grewal2023improved}}]\label{prop:sum-over-characters}
    For all subspaces $S \subseteq \F_2^{2n}$:
    \[\E_{x \in S^\sympcomp}\left[(-1)^{[a, x]}\right] = \mathbbm{1}_{a \in S}\]
    where $\mathbbm{1}$ is the indicator function.
\end{proposition}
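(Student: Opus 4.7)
The plan is to split into cases based on whether $a \in S$. If $a \in S$, then by the very definition of the symplectic complement, $[a,x] = 0$ for every $x \in S^\sympcomp$, so every summand equals $1$ and the expectation is $1$, matching $\mathbbm{1}_{a \in S}$. The substantive case is when $a \notin S$, where we must show the expectation vanishes.

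For the second case, I would observe that the map $\chi_a \colon S^\sympcomp \to \F_2$ defined by $\chi_a(x) = [a,x]$ is a group homomorphism (the symplectic product is bilinear). By a standard character-orthogonality argument on the abelian group $S^\sympcomp$, the expectation $\E_{x \in S^\sympcomp}[(-1)^{\chi_a(x)}]$ equals $1$ if $\chi_a \equiv 0$ and equals $0$ otherwise, since in the non-trivial case $\chi_a^{-1}(0)$ is an index-$2$ subgroup whose nontrivial coset $\chi_a^{-1}(1)$ has the same size. So it suffices to show $\chi_a$ is not identically zero, i.e., that there exists $x \in S^\sympcomp$ with $[a,x] = 1$.

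The main ingredient is the identity $(S^\sympcomp)^\sympcomp = S$. One inclusion, $S \subseteq (S^\sympcomp)^\sympcomp$, is immediate from the definition. The reverse inclusion follows by dimension counting: applying the dimension identity $\dim(T) + \dim(T^\sympcomp) = 2n$ (stated in the excerpt) twice yields
\[
\dim((S^\sympcomp)^\sympcomp) = 2n - \dim(S^\sympcomp) = 2n - (2n - \dim(S)) = \dim(S),
\]
so the two subspaces coincide. Consequently, if $a \notin S = (S^\sympcomp)^\sympcomp$, then by definition of the outer complement there must be some $x \in S^\sympcomp$ with $[a,x] = 1$, establishing that $\chi_a \not\equiv 0$ and completing the case.

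The only step that requires any thought is the double-complement identity $(S^\sympcomp)^\sympcomp = S$; everything else is bookkeeping with the definition and the standard fact that a nonzero $\F_2$-valued character sums to zero over a finite abelian group. Since the dimension identity is already in hand from the surrounding preliminaries, there is no real obstacle, and the entire argument should fit in a few lines.
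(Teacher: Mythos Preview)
Your proof is correct and is the standard argument for this character-orthogonality fact. Note, however, that the paper does not actually prove this proposition: it is stated with a citation to \cite[Lemma 2.11]{grewal2023improved} and used as a black box, so there is no ``paper's proof'' to compare against. Your argument via the double-complement identity $(S^\sympcomp)^\sympcomp = S$ (derived from the dimension formula in the preliminaries) together with the vanishing of a nontrivial $\F_2$-character is exactly the expected proof and would serve well as a self-contained justification.
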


The Weyl operators form an orthonormal basis under the normalized Hilbert-Schmidt inner product $\langle A,B\rangle=\tfrac1{2^n}\Tr(A^\dagger B)$, so every operator admits a unique expansion in this basis. 

\begin{definition}[Pauli expansion]
The \emph{Pauli expansion} of $A \in \C^{2^n \times 2^n}$ is 
\[
A = \frac{1}{2^n} \sum_{x \in \F_2^{2n}} \tr(AW_x)\, W_x.
\]
The coefficients $\tr(AW_x)$ are called the \emph{Pauli spectrum} of $A$.
\end{definition}

We introduce the following notation to track the Weyl operators that appear in the expansion.

\begin{definition}[Pauli support]
The \emph{Pauli support} of $A \in \C^{2^n \times 2^n}$ is defined as 
\[
\supp(A) \coloneqq \{x \in \F_2^{2n} : \tr(W_x A) \neq 0\}.
\]
\end{definition}

We now introduce Pauli dimensionality, a notion of complexity defined via the Pauli expansion of an operator. It can be viewed as the noncommutative analogue of Fourier dimensionality from Boolean analysis~\cite{gopalan2011testing}.

\begin{definition}[Pauli dimensionality]
A matrix $A$ has Pauli dimension $k$ (or, is $k$-Pauli-dimensional) when the span of its support, $\langle \supp(A) \rangle$, has dimension $k$.
\end{definition}

Since all supports, expansions, and spans in this work are taken with respect to the Pauli/Weyl basis, we often omit the qualifier ``Pauli'' and simply speak of, e.g., the ``support of $U$'' or the ``span of $\{W_1,\dots,W_m\}$.''

The following subset of Weyl operators will appear often in our analysis.
\begin{definition}\label{def:pauli-support-block-diagonal}
For $a, b, n \in \N$ with $a + b \leq n$, define
\[
\paulisupport_{a,b} \coloneqq 
0^{\,n-a} \times \F_2^a \times 0^{\,n-a-b} \times \F_2^{\,a+b}.
\]
As operators, this subspace corresponds to
\[
I^{\otimes (n-a-b)} \otimes \{I,Z\}^{\otimes b} \otimes \{I,X,Y,Z\}^{\otimes a}.
\]
In words, $\paulisupport_{a,b}$ consists of all $n$-qubit Pauli operators that act trivially on the first $n-a-b$ qubits, act as either $I$ or $Z$ on the next $b$ qubits, and act arbitrarily on the final $a$ qubits.
\end{definition}

We conclude this section by establishing our notation and conventions for distances between unitary channels. 
We define the operator norm $\opnorm{A} \coloneqq \max_i \sigma_i$, the Frobenius norm $\fnorm{A} \coloneqq \sqrt{\sum_i \sigma_i^2}$, and the trace norm $\norm{A}_{\tr} \coloneqq \sum_i \sigma_i$, where $\{\sigma_i\}$ are the singular values of $A$.
These are all Schatten norms, and hence unitarily invariant: $\norm{UAV} = \norm{A}$ for all unitaries $U,V$.
These norms induce distances $\dist(U,V) = \norm{U-V}$ between unitaries.
We record several standard facts:

\begin{fact}\label{fact:-weyl-to-frob}
    For matrix $A \in \C^{2^n \times 2^n}$ with Weyl decomposition $A = \sum_{x \in \F_2^{2n}} \alpha_x W_x$, \[\fnorm{A} = \sqrt{2^n  \sum_{x \in \F_2^{2n}} \abs{\alpha_x}^2} = \sqrt{\sum_{x, y \in \F_2^n} \abs{\braket{x | A | y}}^2}.\]
\end{fact}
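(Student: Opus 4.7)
The plan is to reduce both equalities to the identity $\fnorm{A}^2 = \Tr(A^\dagger A)$ and then evaluate this trace in two different orthonormal bases of $\C^{2^n \times 2^n}$: the Weyl basis (rescaled to be orthonormal) for the first equality, and the computational matrix-unit basis $\{\ketbra{x}{y}\}_{x,y \in \F_2^n}$ for the second.

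For the first equality, I would substitute the Weyl decomposition $A = \sum_x \alpha_x W_x$ into $\Tr(A^\dagger A)$ and use the orthogonality relation $\Tr(W_x^\dagger W_y) = 2^n \, \mathbbm{1}_{x=y}$. This relation is immediate from the fact that $W_x^\dagger W_y$ is proportional to a Weyl operator, and the only Weyl operator with nonzero trace is the identity (this is implicit in the paper's statement that Weyl operators are orthonormal under the normalized Hilbert-Schmidt inner product $\langle A, B\rangle = \tfrac{1}{2^n}\Tr(A^\dagger B)$). The double sum then collapses to $2^n \sum_x \abs{\alpha_x}^2$, giving the first equality.

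For the second equality, I would expand $\Tr(A^\dagger A) = \sum_{x \in \F_2^n} \bra{x} A^\dagger A \ket{x}$, insert a resolution of the identity $\sum_{y \in \F_2^n} \ketbra{y}{y}$ between $A^\dagger$ and $A$, and recognize each term as $\abs{\braket{y | A | x}}^2$. Relabeling indices yields $\sum_{x,y \in \F_2^n} \abs{\braket{x | A | y}}^2$.

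There is no real obstacle here; the statement is two routine applications of Parseval/Plancherel in different orthonormal bases, and the only mild care needed is keeping track of the $2^n$ normalization factor that comes from the Weyl operators being orthogonal (under the unnormalized Hilbert–Schmidt inner product) rather than orthonormal.
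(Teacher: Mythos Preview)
Your proposal is correct. The paper states this as a \emph{Fact} without proof, so there is no ``paper's own proof'' to compare against; your argument via $\fnorm{A}^2 = \Tr(A^\dagger A)$ evaluated in the Weyl and computational bases is the standard justification and exactly what one would expect.
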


\begin{fact}\label{fact:operator-induced}
    For matrix $A \in \C^{d \times d}$, \[\opnorm{A} = \norm{\max_{\ket \psi} A\ket{\psi}}_2 = \max_{\ket \phi,\, \ket \psi} \braket{\phi | A | \psi},\]
    where the maximization is over unit vectors $\ket \psi$ and $\ket \phi$.
\end{fact}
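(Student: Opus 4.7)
The plan is to prove both equalities via the singular value decomposition (SVD). Write $A = \sum_i \sigma_i \ket{u_i}\bra{v_i}$, where $\{\ket{u_i}\}$ and $\{\ket{v_i}\}$ are orthonormal bases of $\C^d$ and the singular values are ordered so that $\sigma_1 = \max_i \sigma_i = \opnorm{A}$ by definition of the operator norm.

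For the first equality, I would expand an arbitrary unit vector in the right singular basis as $\ket\psi = \sum_i c_i \ket{v_i}$ with $\sum_i \abs{c_i}^2 = 1$. Then $A\ket\psi = \sum_i \sigma_i c_i \ket{u_i}$, and by orthonormality of $\{\ket{u_i}\}$,
\[
\norm{A\ket\psi}_2^2 \;=\; \sum_i \sigma_i^2 \abs{c_i}^2 \;\leq\; \sigma_1^2 \sum_i \abs{c_i}^2 \;=\; \sigma_1^2,
\]
with equality achieved by $\ket\psi = \ket{v_1}$. Thus $\max_{\ket\psi}\norm{A\ket\psi}_2 = \sigma_1 = \opnorm{A}$ (interpreting the middle expression of the statement as this maximum over unit vectors $\ket\psi$).

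For the second equality, I would apply Cauchy--Schwarz to get, for any unit vectors $\ket\phi,\ket\psi$,
\[
\abs{\braket{\phi|A|\psi}} \;\leq\; \norm{\ket\phi}_2 \cdot \norm{A\ket\psi}_2 \;=\; \norm{A\ket\psi}_2 \;\leq\; \opnorm{A}.
\]
For the matching lower bound, fix any $\ket\psi$ with $A\ket\psi \neq 0$ and take $\ket\phi = A\ket\psi / \norm{A\ket\psi}_2$; then $\braket{\phi|A|\psi} = \norm{A\ket\psi}_2$. Optimizing over $\ket\psi$ and combining with the first equality yields $\max_{\ket\phi,\ket\psi}\abs{\braket{\phi|A|\psi}} = \opnorm{A}$.

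There is no real obstacle here, as this is a textbook characterization of the spectral norm. The only subtlety worth flagging is phase conventions: since $\braket{\phi|A|\psi}$ is generally complex, the outer maximization in the statement is implicitly over $\abs{\braket{\phi|A|\psi}}$, and the choice $\ket\phi \propto A\ket\psi$ automatically aligns phases so that the resulting inner product is real and positive, which is why no absolute value is needed after this alignment.
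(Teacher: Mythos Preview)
Your proof is correct and entirely standard. The paper does not actually prove this statement; it is listed among ``several standard facts'' in the preliminaries and stated without proof, so there is no paper proof to compare against. Your SVD-plus-Cauchy--Schwarz argument is exactly the textbook justification one would expect, and your remark about interpreting the middle expression as $\max_{\ket\psi}\norm{A\ket\psi}_2$ and about the implicit absolute value on $\braket{\phi|A|\psi}$ correctly addresses the minor notational looseness in the statement.
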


\begin{fact}\label{fact:schatten-norm-conversion}
    For matrix $A \in \C^{d \times d}$, with rank $r$, $\opnorm{A} \leq \fnorm{A} \leq \norm{A}_\tr \leq \sqrt{r}\fnorm{A} \leq r \opnorm{A}$.
\end{fact}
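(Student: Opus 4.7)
The plan is to prove the chain of four inequalities directly from the definitions in terms of singular values. Let $\sigma_1 \geq \sigma_2 \geq \dots \geq \sigma_r > 0$ denote the non-zero singular values of $A$ (set $\sigma_i = 0$ for $i > r$), so that by definition $\opnorm{A} = \sigma_1$, $\fnorm{A} = \sqrt{\sum_{i=1}^r \sigma_i^2}$, and $\norm{A}_\tr = \sum_{i=1}^r \sigma_i$. Each of the four inequalities then reduces to an elementary statement about non-negative real numbers, so no deeper structure of $A$ is needed.

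For the first inequality, $\opnorm{A} = \sigma_1 = \sqrt{\sigma_1^2} \leq \sqrt{\sum_{i=1}^r \sigma_i^2} = \fnorm{A}$. For the second, since the $\sigma_i$ are non-negative, expanding the square gives $\paren{\sum_{i=1}^r \sigma_i}^2 = \sum_{i=1}^r \sigma_i^2 + 2\sum_{i < j} \sigma_i \sigma_j \geq \sum_{i=1}^r \sigma_i^2$, so $\norm{A}_\tr \geq \fnorm{A}$. For the third, Cauchy--Schwarz applied to the vectors $(\sigma_1, \dots, \sigma_r)$ and $(1, \dots, 1)$ gives $\sum_{i=1}^r \sigma_i \leq \sqrt{r}\sqrt{\sum_{i=1}^r \sigma_i^2}$, i.e.\ $\norm{A}_\tr \leq \sqrt{r}\fnorm{A}$. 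For the fourth, bounding each singular value by $\sigma_1$ yields $\fnorm{A} = \sqrt{\sum_{i=1}^r \sigma_i^2} \leq \sqrt{r \sigma_1^2} = \sqrt{r}\opnorm{A}$, and multiplying both sides by $\sqrt{r}$ gives $\sqrt{r}\fnorm{A} \leq r\opnorm{A}$.

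There is no real obstacle here; the entire statement is a standard textbook chain of Schatten-norm comparisons obtained by working with the vector of singular values in the $\ell_\infty$, $\ell_2$, and $\ell_1$ norms and applying the monotonicity $\norm{\cdot}_\infty \leq \norm{\cdot}_2 \leq \norm{\cdot}_1$ together with the reverse $\ell_1 \leq \sqrt{r}\,\ell_2$ and $\ell_2 \leq \sqrt{r}\,\ell_\infty$ bounds on an $r$-dimensional vector. The only subtlety worth flagging is that the factor of $r$ (rather than $d$) in the last two inequalities requires using the \emph{rank} of $A$ as the effective dimension, which is why we restrict the sums and Cauchy--Schwarz to the $r$ non-zero singular values.
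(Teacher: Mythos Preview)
Your proof is correct. The paper states this as a standard fact without proof, so there is nothing to compare against; your singular-value argument is the usual textbook derivation and is entirely appropriate.
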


\begin{fact}\label{fact:frob-repeat-identity}
    Let $I_d$ be the $d \times d$ identity matrix, then $\fnorm{I_d \otimes A} = \sqrt{d} \cdot \fnorm{A}$ and $\opnorm{I_d \otimes A} = \opnorm{A}$.
\end{fact}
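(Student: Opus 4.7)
The plan is to prove both identities by reducing them to standard properties of singular values under tensor products. The key observation is that if $A$ has singular values $\sigma_1,\dots,\sigma_r$ (counted with multiplicity), then the singular values of $I_d \otimes A$ are exactly these values each repeated $d$ times. From this single fact, both identities follow: the operator norm picks out the maximum singular value (unchanged by repetition), while the Frobenius norm sums the squares of the singular values (scaled by a factor of $d$). I would justify the singular value claim either by noting that if $A = U\Sigma V^\dagger$ is an SVD, then $I_d \otimes A = (I_d \otimes U)(I_d \otimes \Sigma)(I_d \otimes V^\dagger)$ is also an SVD after an appropriate basis reordering, or equivalently by computing $(I_d \otimes A)^\dagger(I_d \otimes A) = I_d \otimes A^\dagger A$ and reading off eigenvalues.

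For the Frobenius identity, a self-contained direct route avoids singular values entirely. Using $\fnorm{B}^2 = \tr(B^\dagger B)$, the multiplicativity of trace under tensor products gives
\[
\fnorm{I_d \otimes A}^2 = \tr\!\bigl(I_d \otimes A^\dagger A\bigr) = \tr(I_d)\,\tr(A^\dagger A) = d\,\fnorm{A}^2,
\]
from which $\fnorm{I_d \otimes A} = \sqrt{d}\,\fnorm{A}$ follows immediately.

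For the operator identity, I would invoke the variational characterization in \cref{fact:operator-induced}. Any unit vector $\ket{\psi} \in \C^d \otimes \C^{2^n}$ can be written as $\ket{\psi} = \sum_i c_i\,\ket{i}\otimes\ket{\phi_i}$ with the $\ket{\phi_i}$ unit vectors and $\sum_i |c_i|^2 = 1$. Since $I_d \otimes A$ leaves the first factor intact, orthogonality of the $\ket{i}$'s yields
\[
\norm{(I_d \otimes A)\ket{\psi}}_2^2 = \sum_i |c_i|^2 \norm{A\ket{\phi_i}}_2^2 \leq \opnorm{A}^2,
\]
with equality attained by any product state $\ket{1}\otimes\ket{\phi}$ for $\ket{\phi}$ an optimizer of $\opnorm{A}$. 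This proves $\opnorm{I_d \otimes A} = \opnorm{A}$.

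There is no substantial obstacle here; the statement is a standard consequence of how tensor products interact with singular value decompositions. The only minor care required is in the operator norm argument, where one should ensure the decomposition $\ket{\psi} = \sum_i c_i\,\ket{i}\otimes\ket{\phi_i}$ is written in a way that makes the $\ket{\phi_i}$ normalized (grouping norms into the $c_i$), so that the bound by $\opnorm{A}^2$ drops out cleanly.
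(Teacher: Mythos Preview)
Your proof is correct. The paper states this as a standard fact without proof (it appears in a list of ``standard facts'' in the preliminaries), so there is no paper proof to compare against; your argument via $\tr(I_d \otimes A^\dagger A) = d\,\tr(A^\dagger A)$ for the Frobenius norm and the variational characterization for the operator norm is exactly the kind of routine verification one would supply if a proof were required.
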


Next, we define the diamond distance, which measures the maximum distinguishability between two unitary channels, even in the presence of ancilla.\footnote{The diamond distance is defined more generally for arbitrary quantum channels, but we restrict to unitary channels since that is the only case needed in this work.}

\begin{definition}[Diamond distance]
For unitaries $U,V \in \C^{d \times d}$,
\begin{equation*}
    \distdiamond(U, V) \coloneqq \max_\rho \norm{(I_A \otimes U) \rho(I_A \otimes U^\dagger) - (I_A \otimes V) \rho(I_A \otimes V^\dagger)}_\tr
\end{equation*}
where $I_A$ is the identity operator on the ancilla register.
\end{definition}

It is often easier to work with the following distance measure, which implies bounds on the diamond distance. 

\begin{definition}[Phase-Aligned Operator Distance]\label{def:phase-op-dist}
    For unitaries $U, V \in \C^{d \times d}$,
    \[
        \distphop(U, V) = \min_{\theta \in [0, 2\pi)} \opnorm{e^{i \theta} U - V}.
    \]
\end{definition}

\begin{fact}[{\cite[Proposition 1.6]{haah2023query}}]\label{fact:op-to-diamond}
    For all unitaries $U, V \in \C^{d \times d}$,
    \[
    \distdiamond(U, V) \leq 2 \distphop(U, V) \leq 2\distdiamond(U, V).
    \]
\end{fact}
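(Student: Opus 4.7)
The plan is to reduce both distances to spectral quantities of $M \coloneqq U^\dagger V$, whose eigenvalues $\{e^{i\phi_k}\}$ all lie on the unit circle in $\mathbb{C}$.

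First I would dispatch the upper bound $\distdiamond(U, V) \leq 2\distphop(U, V)$. The starting point is the standard fact $\distdiamond(A, B) \leq 2\opnorm{A - B}$ for any unitaries $A, B$, which follows from the telescoping identity
\[
(I_A \otimes A)\rho(I_A \otimes A^\dagger) - (I_A \otimes B)\rho(I_A \otimes B^\dagger) = (I_A \otimes (A - B))\rho(I_A \otimes A^\dagger) + (I_A \otimes B)\rho(I_A \otimes (A^\dagger - B^\dagger)),
\]
combined with the triangle inequality for the trace norm and H\"older's inequality $\norm{XY}_\tr \leq \opnorm{X}\norm{Y}_\tr$, together with $\norm{W\rho}_\tr = \norm{\rho}_\tr = 1$ for unitary $W$ and density matrix $\rho$. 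Because multiplying $V$ by a global phase does not alter the induced channel, $\distdiamond(U, V) = \distdiamond(U, e^{i\theta}V) \leq 2\opnorm{U - e^{i\theta}V}$ for every $\theta$, and minimizing over $\theta$ gives $\distdiamond(U, V) \leq 2\distphop(U, V)$.

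For the reverse inequality $\distphop(U, V) \leq \distdiamond(U, V)$, I would rewrite both sides spectrally. Purifying the ancilla register, one finds
\[
\distdiamond(U, V) = \max_\rho 2\sqrt{1 - |\Tr(\rho M)|^2},
\]
where $\rho$ ranges over density matrices on the system (this uses that the trace norm between pure states $\ket{\phi}, \ket{\chi}$ equals $2\sqrt{1 - |\braket{\phi|\chi}|^2}$, and that every density matrix arises as the reduced state of some purification). Since $\{\Tr(\rho M) : \rho \text{ a density matrix}\}$ is the numerical range of $M$, which for normal $M$ equals the convex hull of its eigenvalues, the maximum is attained when $|\Tr(\rho M)|$ is minimized, giving $\distdiamond(U, V) = 2\sqrt{1 - c^2}$, where $c$ is the distance from the origin to $\mathrm{conv}\{e^{i\phi_k}\}$. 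On the other side, unitary invariance of the operator norm yields $\distphop(U, V) = \min_\theta \opnorm{e^{i\theta}I - M} = \min_\theta \max_k |e^{i\theta} - e^{i\phi_k}|$.

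The crux, and the step I expect to be the main obstacle, is the geometric selection of $\theta$. The plan is to choose $\theta$ so that $e^{i\theta}$ points in the direction of the closest point $p$ of $\mathrm{conv}\{e^{i\phi_k}\}$ to the origin, i.e.\ $p = c \, e^{i\theta}$ (taking any $\theta$ if $c = 0$). A supporting hyperplane argument at $p$ then gives $\Re(e^{-i\theta} e^{i\phi_k}) \geq c$ for every $k$, and hence $|e^{i\theta} - e^{i\phi_k}|^2 = 2 - 2\Re(e^{-i\theta} e^{i\phi_k}) \leq 2 - 2c$. This yields $\distphop(U, V) \leq \sqrt{2 - 2c}$. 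Combined with $\distdiamond(U, V)^2 = (2-2c)(2+2c)$ and the trivial bound $c \geq 0$ (so $2 + 2c \geq 2$), we conclude $\distphop(U, V)^2 \leq 2 - 2c \leq \distdiamond(U, V)^2$, completing the proof.
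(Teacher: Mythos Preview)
The paper does not supply its own proof of this fact; it simply cites \cite[Proposition~1.6]{haah2023query}. So there is nothing in the present paper to compare against, and your proposal should be judged on its own merits.

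Your argument is correct. The first inequality is the standard telescoping bound, and the phase-optimization step is clean. For the second inequality, your reduction to the spectrum of $M = U^\dagger V$ is right: by convexity the diamond norm is attained on a pure bipartite state, whose partial trace realizes an arbitrary system density matrix, so $\distdiamond(U,V) = 2\sqrt{1-c^2}$ with $c$ the distance from $0$ to $\mathrm{conv}\{e^{i\phi_k}\}$. The supporting-hyperplane choice of $\theta$ then gives $\cos(\phi_k-\theta)\ge c$ for every $k$, hence $\distphop(U,V)^2 \le 2-2c$. One small remark: your final chain actually proves the sharper inequality $2\,\distphop(U,V)^2 \le (2-2c)(2+2c) = \distdiamond(U,V)^2$ (since $2+2c\ge 2$), i.e.\ $\sqrt{2}\,\distphop \le \distdiamond$, though you only extract the weaker $\distphop \le \distdiamond$ needed for the statement. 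Also, the edge case $c=0$ deserves a word (the supporting-hyperplane picture degenerates), but there the claim is immediate since $\distdiamond = 2$ and $\distphop \le 2$ trivially.
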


We note that $\distdiamond$ and $\distphop$ are both unitarily invariant. 

\begin{fact}\label{fact:dist-composition}
    For unitaries $U_1, U_2, V_1 V_2 \in \C^{d \times d}$ and any unitarily invariant distance $\dist(\cdot, \cdot)$:
    \[\dist(U_1 U_2, V_1 V_2) \leq \dist(U_1, V_1) + \dist(U_2, V_2).\]
\end{fact}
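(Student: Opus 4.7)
The plan is the standard ``add and subtract'' argument. The key algebraic identity is
\[
U_1 U_2 - V_1 V_2 \;=\; U_1 (U_2 - V_2) + (U_1 - V_1)\, V_2,
\]
which one verifies by direct expansion (the cross term $U_1 V_2$ cancels). For any distance of the form $\dist(A,B) = \norm{A-B}$ with $\norm{\cdot}$ a unitarily invariant norm --- in particular the operator, Frobenius, and trace norms discussed in the paper --- the triangle inequality then yields
\[
\norm{U_1 U_2 - V_1 V_2} \;\leq\; \norm{U_1(U_2 - V_2)} + \norm{(U_1 - V_1)\, V_2}.
\]
Since $U_1$ and $V_2$ are unitary and the norm is unitarily invariant, each term on the right collapses to the corresponding ``unmultiplied'' difference, giving $\norm{U_2 - V_2} + \norm{U_1 - V_1} = \dist(U_1,V_1) + \dist(U_2,V_2)$, as claimed.

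For the two ``distances'' in the paper that are not literally a single norm, namely $\distphop$ and $\distdiamond$, the same argument goes through with a small tweak. For $\distphop$, let $\theta_1, \theta_2$ achieve the minima for $(U_1, V_1)$ and $(U_2, V_2)$, set $\theta = \theta_1 + \theta_2$, and apply the identity to $e^{i\theta} U_1 U_2 - V_1 V_2 = e^{i \theta_1} U_1 \bigl(e^{i \theta_2} U_2 - V_2\bigr) + \bigl(e^{i \theta_1} U_1 - V_1\bigr) V_2$; unitary invariance of the operator norm peels off the leading $e^{i\theta_1} U_1$ and the trailing $V_2$, and the resulting bound is an upper bound on $\distphop(U_1 U_2, V_1 V_2)$ since $\theta_1 + \theta_2$ need not be the optimal global phase for the product. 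For $\distdiamond$, I would apply the identity to the induced channel action on an arbitrary joint state $\rho$ on system plus ancilla, then use the triangle inequality for trace distance together with the fact that conjugating by $I_A \otimes U_1$ (an isometry) preserves trace distance.

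I do not expect any genuine obstacle: the statement is the operator analog of the elementary inequality $\abs{ab - a'b'} \leq \abs{a}\,\abs{b - b'} + \abs{a - a'}\,\abs{b'}$, specialized to $\abs{a} = \abs{b'} = 1$. The only place care is required is in confirming that each notion of distance used in the paper supports both the triangle inequality and an appropriate unitary invariance, which is immediate for Schatten norms and easy to check for $\distphop$ and $\distdiamond$.
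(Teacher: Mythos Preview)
Your proof is correct and uses essentially the same add-and-subtract idea as the paper. The paper's version is phrased directly at the metric level---inserting the intermediate point $V_1 U_2$ and applying the triangle inequality for $\dist$ together with unitary invariance of the metric---which handles all unitarily invariant distances (including $\distphop$ and $\distdiamond$) in one line without the separate case analysis you gave.
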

\begin{proof}
We have
    \[
        \dist(U_1 U_2, V_1 V_2) \leq \dist(U_1 U_2, V_1 U_2) + \dist(V_1 U_2, V_1 V_2) = \dist(U_1, V_1) + \dist(U_2, V_2),
    \]
    where the first inequality follows from the triangle inequality and the second follows from the fact that $\dist(\cdot, \cdot)$ is unitarily invariant. 
\end{proof}

Finally, we mention a second important distance measure used in prior work (see \cite[Section 5.1.1]{montanaro2016survey} for details).
It can be seen as a more average case distance as $\dist_{\text{phaseF}}(U, V) \leq \distphop(U, V) \leq \sqrt{d} \cdot  \dist_{\text{phaseF}}(U, V)$.
It is also unitarily invariant.

\begin{definition}[Phase-aligned normalized Frobenius distance]\label{def:frob-dist}
    For unitaries $U, V \in \C^{d \times d}$,
    \[
        \dist_{\text{phaseF}}(U, V) = \frac{1}{\sqrt{d}} \min_{\theta \in [0, 2\pi)} \fnorm{e^{i \theta} U - V}.
    \]
\end{definition}

\section{Pauli Projection via Linear Combination of Unitaries}\label{sec:lcu}

In this section, we introduce Pauli projection, which lets us ``filter out'' certain Pauli operators from the spectrum of a unitary. 
Specifically, given a subspace of Pauli operators $S$ and query access to a unitary $U$, our goal is to apply the operator obtained from $U$ by zeroing out all $U$'s Pauli coefficients that are outside of $S$. 
The procedure relies on block encodings and the linear combination of unitaries technique. 
Although these tools are most commonly used for Hamiltonian simulation and related linear-algebraic tasks, we leverage them in a novel way to design algorithms for learning unitary channels. 

We refer to the projected operator---obtained by restricting $U$ to its Pauli coefficients inside $S$---as the \emph{Pauli projection}, which we now formalize in the following definition.

\begin{definition}[Pauli projection]\label{def:pauli-proj}
Let $a, b, n \in \N$ with $a + b \leq n$, and let $A \in \C^{2^n \times 2^n}$ be a matrix. Given a subspace $S \subseteq \F_2^{2n}$, the Pauli projection of $A$ onto subspace $S$, denoted $\Pi_S(A)$, is defined as 
\[
    \Pi_S(A) \coloneqq \frac{1}{2^n}\sum_{x \in S} \tr(A W_x) W_x.
\]
In other words, we zero out all Pauli coefficients of $A$ that are outside of $S$.
\end{definition}

The Pauli projection of a unitary channel can be expressed as a convex combination of unitaries. In fact, it can be written as a Pauli twirl.

\begin{lemma}\label{lemma:pauli-twirl}
Let $U \in \C^{2^n \times 2^n}$ be a unitary matrix, and let $S \subseteq \F_2^{2n}$ be a subspace. The Pauli projection of $U$ onto $S$ can be expressed as a Pauli twirl:
\[
\Pi_S(U) = \E_{q \sim S^\perp}[W_q U W_q^\dagger], 
\]
where the Pauli operator $W_q$ is chosen uniformly at random from $S^\perp$.
\end{lemma}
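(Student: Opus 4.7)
The plan is to expand $U$ in the Weyl basis and then use the commutation relation of Pauli operators under conjugation to reduce the expectation to an average of symplectic characters, which by \cref{prop:sum-over-characters} collapses to an indicator that selects exactly the Weyl operators in $S$.

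Concretely, I would first write the Pauli expansion
\[
U = \frac{1}{2^n} \sum_{x \in \F_2^{2n}} \tr(U W_x)\, W_x,
\]
and then conjugate termwise by an arbitrary $W_q$ to get
\[
W_q U W_q^\dagger = \frac{1}{2^n} \sum_{x \in \F_2^{2n}} \tr(U W_x)\, W_q W_x W_q^\dagger.
\]
The key identity here is the standard Pauli commutation rule $W_q W_x W_q^\dagger = (-1)^{[q,x]} W_x$, which follows because Weyl operators either commute or anticommute according to the parity of the symplectic product, and $W_q$ is unitary so $W_q^\dagger = W_q^{-1}$. Substituting this in yields
\[
W_q U W_q^\dagger = \frac{1}{2^n} \sum_{x \in \F_2^{2n}} (-1)^{[q,x]}\, \tr(U W_x)\, W_x.
\]

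Next, I would take the expectation over $q$ drawn uniformly from $S^\sympcomp$, swap the expectation with the finite sum over $x$, and isolate the character sum:
\[
\E_{q \sim S^\sympcomp}\!\bigl[W_q U W_q^\dagger\bigr] = \frac{1}{2^n} \sum_{x \in \F_2^{2n}} \tr(U W_x)\, W_x \cdot \E_{q \sim S^\sympcomp}\!\bigl[(-1)^{[q,x]}\bigr].
\]
Applying \cref{prop:sum-over-characters} (noting that $[q,x] = [x,q]$ since the symplectic form is symmetric over $\F_2$), the inner expectation equals $\mathbbm{1}_{x \in S}$, which prunes the outer sum down to exactly $\sum_{x \in S} \tr(U W_x) W_x / 2^n = \Pi_S(U)$ by \cref{def:pauli-proj}.

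There is no real obstacle here; the argument is essentially a one-line symplectic Fourier computation, and the only thing to verify with care is the direction in which \cref{prop:sum-over-characters} is applied, namely that averaging a character $(-1)^{[\cdot, x]}$ over $q \in S^\sympcomp$ produces the indicator that $x$ lies in $(S^\sympcomp)^\sympcomp = S$. Notably, unitarity of $U$ is not actually used in the derivation: the identity $\Pi_S(A) = \E_{q \sim S^\sympcomp}[W_q A W_q^\dagger]$ holds for every matrix $A \in \C^{2^n \times 2^n}$, and the unitarity of $U$ only becomes relevant later when we exploit that each $W_q U W_q^\dagger$ is itself a unitary so that the right-hand side is a genuine convex combination of unitaries suitable for LCU.
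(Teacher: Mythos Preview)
Your proof is correct and follows essentially the same approach as the paper: both expand in the Weyl basis, use the commutation identity $W_q W_x W_q^\dagger = (-1)^{[q,x]} W_x$, and then invoke \cref{prop:sum-over-characters} to collapse the character average to the indicator $\mathbbm{1}_{x \in S}$. Your remark that unitarity of $U$ plays no role is also echoed in the paper, which concludes the argument ``for any operator $A$'' by linearity.
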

\begin{proof}
Let $W_x$ be an arbitrary Weyl operator.
We have 
\begin{align}
    \E_{q \sim S^\perp}[W_q W_x W_q^\dagger]
    &= \E_{q \sim S^\perp}[(-1)^{[q,x]}] W_x. 
\end{align}
If $x \in S$, then $\E_q[(-1)^{[q,x]}] = 1$. 
However, if $x \notin S$, then $\E_q[(-1)^{[q,x]}] = 0$ by \cref{prop:sum-over-characters}. Hence, by linearity, we have that, for any operator $A$, the Pauli twirl will zero out all $W_x \notin S$, which is precisely the action of $\Pi_S(\cdot)$.
\end{proof}

Next, we recall a standard tool in quantum algorithms: block encodings.

\begin{definition}[Block encoding {\cite[Definition 1.1]{tang2023quantum}} (See also {\cite[Definition 43]{gilyen2019qsvt}},{\cite[Definition 1]{rall2020physical-quantities}}]
   Let $A \in \C^{r \times c}$, and let $U \in \C^{2^n \times 2^n}$ be a unitary matrix. We say that $U$ is a $Q$-\emph{block encoding} of $A$ if $U$ is implementable with $O(Q)$ gates and has the form 
   \[
   U = \begin{pmatrix}
   A & \cdot \\ 
   \cdot & \cdot 
   \end{pmatrix},
   \]
   where $\cdot$ denotes unspecified block matrices.
\end{definition}

\begin{lemma}[{\cite[Lemma 1.5]{tang2023quantum}}]\label{lemma:block-encoding-state-prep}
   Let $U \in \C^{2^n \times 2^n}$ be a $Q$-block encoding of $A \in \C^{r \times c}$, and let $\ket\psi \in \C^c$ be an input state. Then there is a quantum circuit using $1$ query to $U$ (and therefore $O(Q)$ gates) that prepares the state $\frac{A \ket\psi}{\norm{A\ket\psi}_2}$ with probability $\norm{A \ket\psi}_2^2$.
\end{lemma}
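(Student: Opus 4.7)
The plan is to directly unfold the block-encoding structure and then postselect on an ancilla register. By the definition of a $Q$-block encoding, $U \in \C^{2^n \times 2^n}$ is a unitary whose top-left block of size $r \times c$ equals $A$. Writing $r = 2^{n-s}$ and $c = 2^{n-t}$, the block condition is equivalent to the operator identity $(\bra{0^s} \otimes I)\, U\, (\ket{0^t} \otimes I) = A$, where the $0^t$-flagged subspace on the input side selects the correct block column and the $0^s$-flagged subspace on the output side selects the correct block row.

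First I would append $t$ zero-ancilla qubits to $\ket\psi$ to form $\ket{0^t}\ket\psi$ (no additional gates), then apply the block encoding $U$, which uses $O(Q)$ gates by hypothesis. Inserting a resolution of the identity on the output ancilla register gives
\[
U\bigl(\ket{0^t}\ket\psi\bigr) \;=\; \ket{0^s}\,A\ket\psi \;+\; \sum_{x \neq 0^s} \ket{x}\ket{\phi_x}
\]
for some (unnormalized) residual vectors $\ket{\phi_x}$; the first summand is forced by the block identity above. Next I would measure the $s$-qubit output ancilla in the computational basis. By Born's rule, the outcome $0^s$ occurs with probability exactly $\norm{A\ket\psi}_2^2$, and conditioned on that outcome the post-measurement state on the remaining register is $A\ket\psi / \norm{A\ket\psi}_2$, as promised.

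The only real obstacle is bookkeeping: one has to verify that the ancilla conventions chosen on the input and output sides are consistent with the top-left block orientation in the definition, and that the $O(Q)$ gate budget is understood as covering only the block-encoding circuit itself (the input $\ket\psi$ is furnished by the caller, and the final measurement is a single computational-basis read-out on $s$ qubits). Both points are immediate from the stated definitions, so the argument is essentially the one-line expansion displayed above together with Born's rule.
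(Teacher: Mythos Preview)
Your argument is correct and is exactly the standard one-line proof of this fact. Note that the paper does not supply its own proof of this lemma; it is quoted from \cite[Lemma~1.5]{tang2023quantum}, so there is nothing to compare against beyond observing that your expansion of $U(\ket{0^t}\ket\psi)$ followed by measurement and postselection is the intended mechanism. One small bookkeeping remark: you silently assume $r=2^{n-s}$ and $c=2^{n-t}$ are powers of two so that the block boundaries align with qubit registers, whereas the stated definition allows arbitrary $r,c$; this is harmless in the paper's applications (where the blocks are always qubit-sized) and is in any case easily patched by padding, but it is worth a parenthetical if you want the proof to match the generality of the statement verbatim.
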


We use the linear combination of unitaries (LCU) method to construct block encodings of more general operators. 
This technique allows one to turn a weighted sum of unitaries into a block encoding of the corresponding operator.

\begin{lemma}[{\cite{childs-kothari-somma2017,gilyen2019qsvt}}]\label{lemma:lcu}
 Let $A = \sum_k a_k U_k$ be a linear combination of unitary operators where $a_k \geq 0$. Define the preparation operator $\PREP$ as
 \[
 \PREP \ket{0} = \sum_k \sqrt{\frac{a_k}{\sum_k a_k}} \ket{k}, 
 \]
 and the select operator $\SEL$ as 
 \[
 \SEL \ket k \ket \psi = \ket k U_k \ket\psi. 
 \]
 Then the circuit $\PREP^\dagger \cdot \SEL \cdot \PREP$ is a block encoding of $\frac{A}{\sum_k a_k}$.
\end{lemma}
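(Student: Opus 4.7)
The plan is to verify the block-encoding claim by direct computation: show that $(\bra 0 \otimes I)\,(\PREP^\dagger\cdot \SEL\cdot \PREP)\,(\ket 0 \otimes I) = A/\alpha$, where $\alpha := \sum_k a_k$. This is exactly the $(0,0)$-block of the full unitary, so combined with unitarity of the composition this matches the definition of block encoding. The implementability-with-$O(Q)$-gates condition is then inherited from the assumption that $\PREP$ and $\SEL$ are implementable.

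The computation proceeds in three steps, mirroring the three factors of the circuit. First, applying $\PREP$ to the $\ket 0$ register transforms $\ket 0 \otimes \ket\psi$ into $\sum_k \sqrt{a_k/\alpha}\,\ket k \otimes \ket\psi$. Second, applying $\SEL$ gives $\sum_k \sqrt{a_k/\alpha}\,\ket k \otimes U_k\ket\psi$. Third, applying $\PREP^\dagger$ and projecting the ancilla onto $\bra 0$ uses the identity $\bra 0 \PREP^\dagger = (\PREP\ket 0)^\dagger = \sum_k \sqrt{a_k/\alpha}\,\bra k$, yielding
\[
\sum_k \frac{a_k}{\alpha}\,U_k\ket\psi \;=\; \frac{A}{\alpha}\ket\psi.
\]
Since $\ket\psi$ was arbitrary, this proves the desired block identity. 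The factor of $1/\alpha$ appears naturally because the amplitudes $\sqrt{a_k/\alpha}$ contributed by $\PREP$ and $\PREP^\dagger$ multiply.

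There is no real obstacle here; the argument is essentially bookkeeping of amplitudes across the three layers, and the only care required is to track that the $\PREP$ and $\PREP^\dagger$ contributions combine to give $a_k/\alpha$ (rather than some other normalization). I would also briefly remark that positivity of the $a_k$ is used only to make the square roots in $\PREP$ well-defined as real nonnegative amplitudes; signs or complex phases in a general LCU can be absorbed into the $U_k$.
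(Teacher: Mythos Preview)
Your proof is correct and is the standard direct computation for the LCU block-encoding. The paper does not supply its own proof of this lemma; it simply cites it from \cite{childs-kothari-somma2017,gilyen2019qsvt}, so there is nothing to compare against beyond noting that your argument is the canonical one.
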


We are now ready to present the main algorithm of this section. In particular, given a single query to $U$ and a subspace of Pauli operators $S$, one can perform the mapping \[\ket\psi \mapsto \frac{\Pi_S(U)\ket\psi}{ \norm{\Pi_S(U)\ket\psi}_2}\] with postselection.

\begin{lemma}\label{lemma:apply-pauli-projection}
   Let $U \in \C^{2^n \times 2^n}$ be a unitary matrix, let $S\subseteq \F_2^{2n}$ be a subspace, and let $\ket\psi$ be an $n$-qubit input state. There is a quantum algorithm that uses a single query to $U$ and prepares the state $\frac{\Pi_S(U)\ket\psi}{\norm{\Pi_S(U)\ket\psi}_2}$ with probability $\norm{\Pi_S(U)\ket\psi}_2^2$. 
   The algorithm uses $O(n \cdot \dim(S^\sympcomp)) = O(n^2)$ gates and requires $\dim(S^\perp) \leq 2n$ ancilla qubits.
\end{lemma}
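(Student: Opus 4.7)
The plan is to apply the LCU framework (\cref{lemma:lcu}) to the Pauli twirl representation from \cref{lemma:pauli-twirl}, which gives
\[
\Pi_S(U) = \frac{1}{|S^\sympcomp|} \sum_{q \in S^\sympcomp} W_q U W_q^\dagger,
\]
a convex combination of unitaries with uniform weights $a_q = 1/|S^\sympcomp|$ summing to $1$. So \cref{lemma:lcu} will yield a block encoding of $\Pi_S(U)$ itself (no rescaling required), and \cref{lemma:block-encoding-state-prep} then converts this block encoding into exactly the state-preparation procedure claimed, with the advertised success probability $\norm{\Pi_S(U)\ket\psi}_2^2$.

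To instantiate $\PREP$ and $\SEL$, I would first pick a basis $g_1, \dots, g_k$ for $S^\sympcomp$, where $k = \dim(S^\sympcomp) \leq 2n$, and introduce $k$ ancilla qubits indexed by $c \in \{0,1\}^k$. Then $\PREP$ is just $k$ parallel Hadamards, producing $\tfrac{1}{\sqrt{2^k}}\sum_c \ket c$, which corresponds to the uniform distribution over $q(c) = \sum_i c_i g_i \in S^\sympcomp$. For $\SEL$, I would implement the controlled conjugation $\ket c \ket\psi \mapsto \ket c\, W_{q(c)} U W_{q(c)}^\dagger \ket\psi$ in three stages: apply a sequence of $k$ singly-controlled Paulis from the ancillas to the target that realizes $W_{q(c)}^\dagger$, then apply a single \emph{uncontrolled} query to $U$, then apply another $k$ singly-controlled Paulis realizing $W_{q(c)}$. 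Because $U$ itself is applied without any control, this costs exactly one forward query to $U$. Each controlled-$W_{g_i}$ is a tensor product of controlled single-qubit Paulis and costs $O(n)$ gates, giving $O(nk) = O(n\dim(S^\sympcomp))$ total gates, as claimed.

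The main subtlety I expect to have to spell out carefully is the phase bookkeeping. The product $\prod_i W_{g_i}^{c_i}$ need not literally equal $W_{q(c)}$ on the nose; it can differ by a scalar in $\{\pm 1, \pm i\}$ depending on the symplectic products of the generators and the order of multiplication. The fix is to arrange the two controlled-Pauli blocks so that one is the Hermitian adjoint of the other, in particular applied in opposite orders; then whatever phase $\alpha(c)$ appears in the left block is cancelled by $\alpha(c)^*$ in the right block, so the net effect is exactly $W_{q(c)} U W_{q(c)}^\dagger$, which is phase-invariant in $W_{q(c)}$ to begin with. With this phase cancellation verified, the analysis reduces to directly invoking \cref{lemma:lcu} (to conclude that $\PREP^\dagger \cdot \SEL \cdot \PREP$ block-encodes $\Pi_S(U)$) and \cref{lemma:block-encoding-state-prep} (to extract the normalized state with the stated success probability), yielding the desired procedure with $k \leq 2n$ ancillas, $O(n\dim(S^\sympcomp)) = O(n^2)$ gates, and a single forward query to $U$.
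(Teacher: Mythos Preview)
Your proposal is correct and follows essentially the same approach as the paper: write $\Pi_S(U)$ as a uniform Pauli twirl over $S^\sympcomp$ via \cref{lemma:pauli-twirl}, instantiate $\PREP$ as Hadamards on $\dim(S^\sympcomp)$ ancillas and $\SEL$ as controlled-Pauli, uncontrolled $U$, controlled-Pauli, then invoke \cref{lemma:lcu} and \cref{lemma:block-encoding-state-prep}. Your explicit handling of the phase bookkeeping (arranging the two controlled-Pauli blocks as mutual adjoints so the scalar phases cancel) is in fact more careful than the paper's own proof, which glosses over this point.
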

\begin{proof}
    By \cref{lemma:pauli-twirl}, we can write $\Pi_S(U) = \E_{q \sim S^\perp}[W_q U W_q^\dagger]$, where the expectation is over uniformly random $q \in S^\perp \subseteq \F_2^{2n}$.
    Note that this is a convex combination of unitaries (i.e., the sum of the coefficients is $1$, so we don't run into scaling issues from \cref{lemma:lcu}). 
    Let $\ell = \dim S^\perp$, and identify each operator in $S^\perp$ with an index $k \in [2^\ell]$. 
    Let $\PREP$ denote a unitary that maps \( \ket{0^\ell} \) to the uniform superposition, i.e., Hadamard gates on $\ell$ qubits. 
    Let the select operator $\SEL$ be $\ketbra{k}{k}\otimes W_k U W_k^\dagger$. Note that $\SEL$ can be decomposed as $(\ketbra{k}{k} \otimes W_k) \cdot (I^{\otimes \ell} \otimes U)\cdot (\ketbra{k}{k}\otimes W_k^\dagger)$, and hence can be implemented using two controlled-Pauli operations and a single query to $U$.
    Then, by \cref{lemma:lcu}, $\PREP^\dagger \cdot \SEL \cdot \PREP$ is a block encoding of $\Pi_S(U)$. Finally \cref{lemma:block-encoding-state-prep} implies that we can prepare $\frac{\Pi_S(U)\ket\psi}{\norm{\Pi_S(U)\ket\psi}_2}$ with probability $\norm{\Pi_S(U)\ket\psi}_2^2$.

    Constructing the unitary $\ketbra{k}{k} \otimes W_k$ can be done by first taking generators of $S^\sympcomp = \langle \{g_1, \dots, g_\ell\} \rangle$.
    Then have each $W_{g_i}$ be controlled on qubit $i$ of the control register.
    Each controlled $W_{g_i}$ can be constructed as the product of at most $2n$ controlled single-qubit Pauli operations (with the control, this becomes a two-qubit gate), for a total of at most $O(n \ell)$ two-qubit gates.\footnote{In the particular scenario that we will use \cref{lemma:apply-pauli-projection}, we can find a set of \emph{sparse} generators for $S^\sympcomp$ such that each controlled $W_{g_i}$ is already just a two-qubit gate. This leads to only needing $O(\dim(S^\sympcomp))$ many additional gates.}
\end{proof}

\section{Learning Approximately Block-Diagonal Unitaries}\label{sec:learn-block-diag}

In this section, we develop an algorithm for learning unitary channels that are approximately block-diagonal. 
The algorithm for learning $k$-Pauli-dimensional unitary channels, which we present in \cref{sec:complete-algo}, will follow from a reduction to the approximately block-diagonal setting. 
We begin by formally defining ``approximately block-diagonal.''

\begin{definition}[Block-diagonal matrix]
Let $a, b, n \in \N$ with $a + b \leq n$, and let $A \in \C^{2^n \times 2^n}$ be a matrix. 
We say that $A$ is $(a,b)$-block-diagonal if there exists $2^b$ matrices $A_y \in \C^{2^a \times 2^a}$ such that $A = I^{\otimes n-a-b} \otimes \left( \bigoplus_{y \in \{0,1\}^b}A_y \right)$.
That is, $A$ is a block-diagonal matrix whose diagonal blocks are $A_1$, \dots, $A_{2^b}$, each of size $2^a \times 2^a$, and this $2^{a+b} \times 2^{a+b}$ block structure is repeated $2^{n-a-b}$ times along the diagonal. 
\end{definition}

Let $A = I^{\otimes n-a-b} \otimes \left( \bigoplus_{y \in \{0,1\}^b}A_y \right)$ be an $(a,b)$-block-diagonal matrix.
It is helpful to view $A$ as acting on $n$ qubits, grouped into three registers: 
the first $n-a-b$ qubits are unaffected by $A$; the second $b$ qubits select which diagonal block we are in; and the last $a$ qubits index a column within that block. 
For $x \in \{0, 1\}^{n-a-b}$, $y \in \{0, 1\}^b$, and $z \in \{0, 1\}^a$, we have $A(\ket x \! \ket y \! \ket z) = \ket{x} \ket{y} (A_y \ket z)$.

\begin{definition}[Block-diagonal unitary]
A matrix $U \in \C^{2^n \times 2^n}$ is an $(a,b)$-block-diagonal unitary if it is both unitary and an $(a,b)$-block-diagonal matrix.
\end{definition}

\begin{definition}[Approximately block-diagonal unitary]
    Let $a,b,n \in \N$ with $a+b \leq n$, and let $U \in \C^{2^n \times 2^n}$ be a unitary matrix. We say $U$ is $(a,b,\eps)$-approximately block-diagonal (with respect to a given matrix norm $\norm{\cdot}$) if there exists an $(a,b)$-block-diagonal unitary $V$ satisfying $\norm{U-V} \leq \eps$. 
\end{definition}

Unless otherwise stated, $(a,b,\varepsilon)$-approximately block-diagonal unitaries are always with respect to the operator norm $\opnorm{\cdot}$.

With these definitions in place, we can now state our main technical result for this section: a tomography algorithm that efficiently estimates approximately block-diagonal unitaries in diamond norm.

\begin{restatable}{theorem}{approxblockdiag}\label{thm:approx-block-diag}
Let $a,b, n \in \N$ with $a + b \leq n$.
There is a tomography algorithm that, given query access to an $(a,b,\eps)$-approximately block diagonal unitary channel $U \in \C^{2^n \times 2^n}$ as well as parameters $\delta, \eps > 0$, applies $U$ at most $O\left(2^{a+b}\cdot (2^a+b)\frac{\log(1/\delta)}{\eps^2}\right)$ times and outputs an estimate $V$ satisfying $\distphop(U, V) \leq O(\eps)$ with probability at least $1-\delta$. 
Moreover:
\begin{itemize}
    \item The output $V$ is $(a,b)$-block-diagonal.
    \item The algorithm runs in time $\poly\!\left(n,\, 2^{2a+b},\, \varepsilon^{-1},\, \log \delta^{-1}\right)$.
    \item The algorithm uses only forward access to $U$ (i.e., it does not require $U^\dagger$ or controlled-$U$).
    \item The algorithm uses $2n - 2a -b$ additional qubits of space.
\end{itemize}
\end{restatable}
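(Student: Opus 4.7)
By hypothesis, there exists an $(a,b)$-block-diagonal unitary $\tilde U = I^{\otimes n-a-b} \otimes \bigoplus_{y} A_y$ with $\opnorm{U - \tilde U} \leq \eps$, and $\tilde U$ has Pauli support contained in the subspace $\paulisupport_{a,b}$ from \cref{def:pauli-support-block-diagonal}. The first step is to apply Pauli projection as developed in \cref{sec:lcu}. Since \cref{lemma:pauli-twirl} writes $\Pi_{\paulisupport_{a,b}}$ as an average of unitary conjugations, it is a contraction in operator norm, so $\opnorm{\Pi_{\paulisupport_{a,b}}(U) - \tilde U} = \opnorm{\Pi_{\paulisupport_{a,b}}(U - \tilde U)} \leq \eps$. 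Thus \cref{lemma:apply-pauli-projection} yields, with a single forward query to $U$, a block encoding of the \emph{exactly} $(a,b)$-block-diagonal operator $M \coloneqq \Pi_{\paulisupport_{a,b}}(U) = I^{\otimes n-a-b} \otimes \bigoplus_y M_y$, with each $\opnorm{M_y - A_y}\leq\eps$ and using $\dim(\paulisupport_{a,b}^{\sympcomp}) = 2n - 2a - b$ ancilla qubits.

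Next, I would learn $M$ via parallel column tomography. For each $z \in \{0,1\}^a$, prepare $\ket{\psi_z} \coloneqq \ket{0^{n-a-b}} \otimes \tfrac{1}{\sqrt{2^b}}\sum_y \ket{y} \otimes \ket{z}$, apply the block encoding of $M$, and postselect on the ancilla. By \cref{lemma:block-encoding-state-prep} the success probability is $\norm{M\ket{\psi_z}}_2^2 \geq (1-\eps)^2$, so $O(1)$ attempts produce a fresh sample of $\tfrac{1}{\sqrt{2^b}}\sum_y \ket{y} \otimes M_y \ket{z}$ on the last $a+b$ qubits. This vector simultaneously encodes the $z$-th column of every block. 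I would then invoke a query-optimal pure-state tomography subroutine with $O(2^{a+b}/\eps^2)$ samples; slicing the resulting estimate on the $y$-register recovers column estimates $\hat M_y\ket{z}$ for all $y$ at once. Iterating over all $2^a$ basis vectors $\ket z$ uses $O(2^{2a+b}/\eps^2)$ queries in total. I collate columns into candidates $\hat M_y$, round each to a unitary $V_y$ via polar decomposition, and return $V \coloneqq I^{\otimes n-a-b} \otimes \bigoplus_y V_y$, which is $(a,b)$-block-diagonal by construction; the resulting operator-norm bound on $V - U$ immediately upper bounds $\distphop(U,V)$.

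The main obstacle is the error analysis. A straightforward union bound over $2^a$ tomography runs would introduce a $\log(2^a/\delta)$ overhead, and more seriously, interpreting a joint-state tomography error of $\eps$ as a bound on each of the $2^b$ sliced columns can inflate worst-case Euclidean errors by $\sqrt{2^b}$ per slice, and then by $\sqrt{2^a}$ when the columns are stacked into a block. Following the strategy of \cite{haah2023query}, the fix is to use a tomography subroutine whose failure probability is $\exp(-\Omega(2^{a+b}))$---small enough to union bound over both the $2^a$ runs and the $2^b$ blocks without any log overhead---and whose typical error is essentially Haar-random in the orthogonal complement of the true state. Such Haar-random errors concentrate both under slicing in $y$ and under column concatenation in $z$, so the final operator-norm error in each $V_y$ remains $O(\eps)$ at the advertised query count, with only a standard $\log(1/\delta)$ repetition factor. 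Only forward access to $U$ is used throughout (via \cref{lemma:apply-pauli-projection}), and the runtime bound follows from the polynomial-time implementations of the LCU block encoding, the tomography subroutine, and the polar decomposition.
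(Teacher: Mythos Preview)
Your plan matches the paper's approach almost exactly---Pauli projection via \cref{lemma:apply-pauli-projection}, parallel tomography on the $(a+b)$-qubit states $\tfrac{1}{\sqrt{2^b}}\sum_y \ket{y}\otimes M_y\ket{z}$, slicing and collating into blocks, polar rounding, and the use of a tomography subroutine with $\exp(-\Omega(2^{a+b}))$ failure probability and Haar-random residual error to control both the union bound and the $\sqrt{2^b}$ blowup. That part is sound.

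However, there is one genuine gap: you never address the \emph{relative phases between columns}. Pure-state tomography can only recover $\ket{\psi_z}$ up to an unknown global phase $\phi_z$, so after collating, your block estimate $\hat M_y$ is close not to $M_y$ but to $M_y\Phi$, where $\Phi = \mathrm{diag}(\phi_z)$ is an unknown diagonal unitary that is \emph{the same across all blocks} $y$ but varies in $z$. Polar rounding then gives $V_y$ close to $A_y\Phi$, and your returned $V$ is close to $U(I^{\otimes n-a}\otimes\Phi)$, which can be $\Omega(1)$ away from $U$ in $\distphop$. These phases are not part of the Haar-random error---they are a structural ambiguity of state tomography and do not average out.

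The paper fixes this with an additional phase-alignment stage (\cref{lem:phase-align-unitary}, following \cite[Proposition~2.3]{haah2023query}): rerun the entire procedure on $U(I^{\otimes n-a}\otimes H^{\otimes a})$, obtaining $V'_y$ close to $A_y H^{\otimes a}\Phi'$ for a different phase matrix $\Phi'$, and then compare $V_0$ and $V'_0$ on a single block to extract a diagonal unitary $D$ with $\distphop(D,\Phi^\dagger)\leq O(\eps)$. The final output is $V(I^{\otimes n-a}\otimes D)$. This step only costs a constant-factor overhead and $O(8^a)$ classical time (crucially, the alignment is done on a single $2^a\times 2^a$ block, not on the full unitary), so your complexity claims survive once you add it.
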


Parameter counting shows that $\Omega(2^{2a +b})$ queries are necessary, so our dependence on $a$ and $b$ is optimal up to an additive logarithmic factor in $b$.
Achieving this optimal scaling requires some care.
Ideally, we would like to simply treat $U$ as being truly block-diagonal and therefore learn all of the relevant blocks of $U$ in parallel (recall that learning them sequentially would require $O(4^{a+b})$ queries to recover the relative phases between blocks), as described in \cref{para:k-pauli-dimension}.
However, because $U$ is only $\eps$-close to block-diagonal, its off-diagonal blocks behave like noise, and a na\"ive parallel approach will fail.

To overcome this, we replace $U$ with a Pauli projection $\Pi_{\paulisupport_{a,b}}(U)$ (\cref{def:pauli-support-block-diagonal,def:pauli-proj}) that has three crucial properties: (i) it is exactly $(a,b)$-block diagonal, (ii) it is within $2\eps$ of $U$ in operator norm, and (iii) it can be efficiently applied using LCUs and block encodings. 
With this operator in hand, our algorithm essentially applies $\Pi_{\paulisupport_{a,b}}(U)$ to states of our choice and then uses state tomography to recover the columns of all blocks of $\Pi_{\paulisupport_{a,b}}(U)$ simultaneously, thereby achieving the optimal dependence on $a$ and $b$. 

We now present our learning algorithm.

\begin{algorithm}[H]
\SetKwInOut{Promise}{Promise}
\caption{Learning approximately block-diagonal unitaries}
\label{alg:base-approx-bd}
\DontPrintSemicolon
\KwInput{Query access to an $(a,b,\eps)$-approximately block diagonal unitary $U$ for $\eps \leq 1/K$ where $K \geq 1$ is some universal constant}
\KwOutput{Description of a $(a,b)$-block-diagonal unitary $V$ such that $\distphop(U, V) \leq O(\eps)$ with probability at least $\frac{2}{3}$}

\ForEach{$z \in \{0,1\}^a$}{
Use \cref{lemma:apply-pauli-projection} to apply $\Pi_{\paulisupport_{a,b}}(U)$ to the $n$-qubit state $\ket{0^{n-a-b}}\ket{+}^{\otimes b}\ket{z}$.
\label{alg-step:postselect}

Discard the first register to obtain an $(a+b)$-qubit state $\ket{\psi_z}$ on the remaining qubits. 

Run the pure state tomography algorithm described in \cref{lem:state-tomo} on $\ket{\psi_z}$ to error $\eps \cdot \sqrt{\frac{2^a}{2^a+b}}$ and failure probability $\exp(-5 \cdot 2^{a+b})$. Denote the output of this step by $\ket{\wh{\psi_z}}$.
}
\label{alg-step:tomography}

\ForEach{$y \in \{0, 1\}^b$}{
\label{alg-step:collate-0}

Construct a matrix $A_y \in \C^{2^a \times 2^a}$ whose $z$th column is $\sqrt{2^b} \cdot \left(\bra{y} \otimes I^{\otimes a}\right) \ket{\wh{\psi_z}}$.
\label{alg-step:collate-1}

Compute $A_y = L_y\Sigma_y R_y^\dagger$, the SVD of $A_y$. Let $V_y = L_y R_y^\dagger$.
}
\label{alg-step:collate-2}

Let $V = I^{\otimes n-a-b} \otimes \left(\bigoplus_{y \in \{0, 1\}^b} V_y\right)$.
\label{alg-step:merge-blocks}

Repeat the above procedure on $U (I^{\otimes n-a} \otimes \!H^{\otimes a})$ to recover $V' = I^{\otimes n-a-b} \otimes \!\left(\!\bigoplus_{y \in \{0, 1\}^b} V_y'\right)$.
\label{alg-step:fix-phase1}

Let $D \in \C^{2^a \times 2^a}$ be the diagonal unitary obtained from applying \cref{lem:phase-align-unitary} to $V_0$ and $V_0'$.
\label{alg-step:fix-phase3}

\Return $V (I^{\otimes n-a} \otimes D) = I^{\otimes n-a-b} \otimes \left(\bigoplus_{y \in \{0, 1\}^b} V_y D\right)$.
\label{alg-step:fix-phase5}
\end{algorithm}

The core of the algorithm lies in the first two stages.
The initial foreach loop performs tomography to learn (up to relative phase) the columns of every block of $\Pi_{\paulisupport_{a,b}}(U)$ in parallel, and the second foreach loop assembles these into block matrices and rounds each to the nearest unitary via polar decomposition (i.e., sets the singular values to be $1$).
The final stage approximately recovers the relative phases between the columns within each block; this phase-alignment procedure follows the approach of \cite[Proposition 2.3]{haah2023query} (see \cref{lem:phase-align-unitary}).

We now turn to the analysis of \cref{alg:base-approx-bd}.
We begin by proving a series of lemmas that will be important in the analysis.
Recall that $\paulisupport_{a,b} \subseteq \F_2^{2n}$ can be viewed as the set of phaseless Paulis $I^{\otimes n - a -b}\otimes \{I,Z\}^{b} \otimes \{I,X,Y,Z\}^{\otimes a}$. Let $\Pi_{\paulisupport_{a,b}}(U)$ denote the Pauli projection of $U$ onto the corresponding subspace. 
We will prove that $\Pi_{\paulisupport_{a,b}}(U)$ is exactly $(a,b)$-block diagonal and close to $U$. 

\begin{fact}\label{fact:block-diagonal-equiv}
    For a matrix $A$, the following are all equivalent conditions:
    \begin{enumerate}
        \item $A$ is $(a,b)$-block diagonal.
        \item $\supp(A) \subseteq \paulisupport_{a,b}$.
        \item $\Pi_{\paulisupport_{a,b}}(A) = A$.
    \end{enumerate}
\end{fact}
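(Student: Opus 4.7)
The plan is to prove the three-way equivalence by showing $(2) \Leftrightarrow (3)$ and $(1) \Leftrightarrow (2)$ separately. The first equivalence is essentially immediate from the definitions, while the second requires unpacking how the tensor structure of $\paulisupport_{a,b}$ corresponds to the block-diagonal structure.

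For $(2) \Leftrightarrow (3)$, I will apply the definition of Pauli projection (\cref{def:pauli-proj}) together with the fact that the Weyl operators form an orthonormal basis. Since $\Pi_{\paulisupport_{a,b}}(A) = \frac{1}{2^n} \sum_{x \in \paulisupport_{a,b}} \tr(AW_x) W_x$ and $A = \frac{1}{2^n} \sum_{x \in \F_2^{2n}} \tr(AW_x) W_x$, the identity $\Pi_{\paulisupport_{a,b}}(A) = A$ holds if and only if $\tr(AW_x) = 0$ for every $x \notin \paulisupport_{a,b}$, which is exactly the statement $\supp(A) \subseteq \paulisupport_{a,b}$.

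For $(1) \Rightarrow (2)$, I will write $A = I^{\otimes n-a-b} \otimes \bigl( \bigoplus_{y \in \zo^b} A_y \bigr)$ and expand each $A_y$ in the Pauli basis on the last $a$ qubits. Then I will observe that $\bigoplus_{y} A_y$ is a block-diagonal operator on $b+a$ qubits that acts as $\ketbra{y}{y}_b \otimes A_y$ in each block, and since $\ketbra{y}{y} = \prod_{i=1}^{b} \tfrac{1}{2}(I + (-1)^{y_i} Z_i)$, every $\ketbra{y}{y}$ has Pauli expansion supported in $\{I,Z\}^{\otimes b}$. Combined with the trivial action on the first $n-a-b$ qubits, this shows $\supp(A) \subseteq \paulisupport_{a,b}$.

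For $(2) \Rightarrow (1)$, I will assume $A = \frac{1}{2^n} \sum_{x \in \paulisupport_{a,b}} \tr(AW_x) W_x$ and show this is $(a,b)$-block diagonal. By definition of $\paulisupport_{a,b}$, every Weyl operator in the sum has the form $I^{\otimes n-a-b} \otimes D \otimes P$ where $D \in \{I,Z\}^{\otimes b}$ and $P \in \{I,X,Y,Z\}^{\otimes a}$. Factoring out yields $A = I^{\otimes n-a-b} \otimes M$ for some $M$ on $a+b$ qubits. Since $M$ is a linear combination of $D \otimes P$ terms with $D$ diagonal, $M$ is diagonal in the computational basis on the middle $b$-qubit register; equivalently, for any $y \neq y' \in \zo^b$, $(\bra{y} \otimes I^{\otimes a}) M (\ket{y'} \otimes I^{\otimes a}) = 0$. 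Setting $A_y := (\bra{y} \otimes I^{\otimes a}) M (\ket{y} \otimes I^{\otimes a})$ gives $M = \bigoplus_{y} A_y$, establishing that $A$ is $(a,b)$-block diagonal.

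The only mildly delicate step is ensuring the equivalence between ``diagonal on the middle $b$ qubits'' and ``Pauli expansion uses only $\{I,Z\}^{\otimes b}$ there,'' but this follows from the standard identity $\ketbra{y}{y} = \prod_i \tfrac{1}{2}(I + (-1)^{y_i} Z_i)$ in both directions, so there is no serious obstacle.
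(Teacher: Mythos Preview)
Your proposal is correct and follows essentially the same approach as the paper: the $(2)\Leftrightarrow(3)$ equivalence via the definition of Pauli projection, and $(1)\Leftrightarrow(2)$ via the observation that $\ketbra{y}{y}$ is supported on $\{I,Z\}^{\otimes b}$. Your $(2)\Rightarrow(1)$ direction is a bit more explicit than the paper's (which simply notes that each Weyl operator in $\paulisupport_{a,b}$ is itself $(a,b)$-block diagonal, hence so is any linear combination), but the underlying idea is identical.
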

\begin{proof}
We start by showing that the first two conditions are equivalent.
Suppose $A$ is an $(a,b)$-block diagonal matrix. Then 
\[
A 
= I^{\otimes n - a - b} \otimes \left(\bigoplus_{y \in \{0,1\}^b} A_y \right) 
= I^{\otimes n - a - b} \otimes \sum_{y \in \{0,1\}^b} \ketbra{y}{y} \otimes A_y, 
\]
where each $A_y \in \C^{2^a \times 2^a}$.
It is easy to check that $b$-qubit operator $\ketbra{y}{y}$ is only supported on $\{I,Z\}^{\otimes b}$ for all $y \in \{0,1\}^b$. 
Each $A_y$ is trivially supported on $\{I,X,Y,Z\}^{a}$, because $\{I,X,Y,Z\}^{a}$ is a complete basis for operators on $a$-qubits. 
Hence, $\supp(A) \subseteq \paulisupport_{a,b}$. 
Conversely, suppose that $A$ is supported in $\paulisupport_{a,b}$.
Each operator in $\paulisupport_{a,b}$ is $(a,b)$-block diagonal, and so any linear combination of them is also $(a,b)$-block diagonal.

Now we show that conditions 2 and 3 are equivalent.
Let $A$ be an operator with $\supp(A) \subseteq \paulisupport_{a,b}$.
Then it is obvious that $\Pi_{\paulisupport_{a,b}}$ will have no effect on $A$, because it only affects Pauli coefficients outside of $\supp(A)$. 
Conversely, if $\Pi_{\paulisupport_{a,b}}(A)=A$, then all of the Pauli coefficients outside of $\paulisupport_{a,b}$ must be $0$, i.e., $\supp(A) \subseteq \paulisupport_{a,b}$.
\end{proof}

\begin{lemma}\label{lem:learn-projection}
    If $U$ is $(a,b,\eps)$-approximately block diagonal for any unitarily invariant norm $\norm{\cdot}$, then $\norm{U-\Pi_{\paulisupport_{a,b}}(U)} \leq 2\eps$.
\end{lemma}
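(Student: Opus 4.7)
The plan is to use the triangle inequality together with two key observations: (i) any $(a,b)$-block-diagonal matrix is fixed by $\Pi_{\paulisupport_{a,b}}$, and (ii) $\Pi_{\paulisupport_{a,b}}$ is a contraction with respect to any unitarily invariant norm.

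First, by assumption, there exists an $(a,b)$-block-diagonal unitary $V$ with $\norm{U - V} \leq \eps$. By \cref{fact:block-diagonal-equiv}, this gives $\Pi_{\paulisupport_{a,b}}(V) = V$. Using this and the linearity of $\Pi_{\paulisupport_{a,b}}$, I would rewrite
\[
U - \Pi_{\paulisupport_{a,b}}(U) = (U - V) + \Pi_{\paulisupport_{a,b}}(V - U),
\]
so that by the triangle inequality,
\[
\norm{U - \Pi_{\paulisupport_{a,b}}(U)} \leq \norm{U - V} + \norm{\Pi_{\paulisupport_{a,b}}(V - U)}.
\]

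The remaining task is to show $\norm{\Pi_{\paulisupport_{a,b}}(A)} \leq \norm{A}$ for any matrix $A$ and any unitarily invariant norm. For this, I would invoke \cref{lemma:pauli-twirl}, which (by linearity of its proof, which extends verbatim from unitaries to arbitrary matrices) expresses $\Pi_{\paulisupport_{a,b}}(A) = \E_{q \sim \paulisupport_{a,b}^\sympcomp}[W_q A W_q^\dagger]$ as a convex combination of unitary conjugates of $A$. Since unitarily invariant norms satisfy $\norm{W_q A W_q^\dagger} = \norm{A}$, the triangle inequality for the expectation gives $\norm{\Pi_{\paulisupport_{a,b}}(A)} \leq \norm{A}$. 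Applied to $A = V - U$, this yields $\norm{\Pi_{\paulisupport_{a,b}}(V - U)} \leq \eps$, completing the bound $\norm{U - \Pi_{\paulisupport_{a,b}}(U)} \leq 2\eps$.

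There is no real obstacle here; the only subtle point is ensuring that the Pauli twirl identity from \cref{lemma:pauli-twirl}, stated for unitaries, is used only as a statement about operators expressed in the Weyl basis (which its proof makes clear), so that it applies to the non-unitary matrix $V - U$.
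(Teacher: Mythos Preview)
Your proof is correct and follows essentially the same approach as the paper: both use the triangle inequality through a nearby block-diagonal unitary $V$, invoke \cref{fact:block-diagonal-equiv} to get $\Pi_{\paulisupport_{a,b}}(V)=V$, and then bound $\norm{\Pi_{\paulisupport_{a,b}}(V-U)}$ via the Pauli-twirl representation from \cref{lemma:pauli-twirl} together with unitary invariance. Your explicit remark that \cref{lemma:pauli-twirl} extends by linearity to non-unitary matrices is exactly the point the paper uses implicitly.
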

\begin{proof}
Because $U$ is $(a,b,\eps)$-approximately block diagonal, there exists an $(a,b)$-block diagonal unitary $V$ such that $\norm{U - V}\leq \eps$.
Then
\begin{align}
    \norm{U - \Pi_{\paulisupport_{a,b}}(U)} 
    &\leq \norm{U - V} + \norm{V - \Pi_{\paulisupport_{a,b}}(U)} \\
    &= \norm{U - V} + \norm{\Pi_{\paulisupport_{a,b}}(V -U)} \\
    &= \norm{U - V} + \norm{\Ex_q\left[W_q(V -U)W_q\right]} \\
    &\leq \norm{U - V} + \Ex_q\left[\norm{W_q(V -U)W_q}\right] \\
    &= \norm{U - V} + \norm{U-V} \\
    &\leq 2\eps. 
\end{align}
The first line uses the triangle inequality. The second line uses the fact that $\Pi_{\paulisupport_{a,b}}(V) = V$ by \cref{fact:block-diagonal-equiv}. The third line follows from \cref{lemma:pauli-twirl}. The fourth line is the triangle inequality. The fifth line follows from the fact that the norm is unitarily invariant.
\end{proof}

Next, we show that if a unitary $U$ is close to some $(a,b)$-block diagonal matrix (not necessarily unitary), then $U$ is also close to a genuine $(a,b)$-block diagonal unitary. 
In other words, we can ‘round’ the approximate block-diagonal structure to an exact one without losing more than a constant factor in error. 
Moreover, this rounding can be done in polynomial time in the matrix dimension. 
One should view the non-unitary block diagonal matrix here as the approximation to $\Pi_{\paulisupport_{a,b}}(U)$ produced by the learning algorithm (i.e., before the SVD step in the second foreach loop).

\begin{lemma}\label{lem:round-to-unitary}
Let $U$ be a unitary and suppose there exists an $(a,b)$-block diagonal matrix $A$ (not necessarily unitary) such that $\norm{U - A} \leq \eps$ for some unitarily invariant norm $\norm{\cdot}$. Then $U$ is an $(a,b,2\eps)$-approximately block diagonal unitary.

Moreover, given a classical description of $A$, there is an algorithm that outputs an $(a,b)$-block diagonal unitary $V$ satisfying $\norm{U - V} \leq 2\eps$, and this algorithm runs in $O(2^{3a+b})$ time. 
\end{lemma}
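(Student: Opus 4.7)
The plan is to construct the $(a,b)$-block diagonal unitary $V$ explicitly via polar decomposition of $A$. Writing $A = I^{\otimes n-a-b} \otimes \bigoplus_{y \in \{0,1\}^b} A_y$, I would compute the SVD $A_y = L_y \Sigma_y R_y^\dagger$ of each block and set $V_y \coloneqq L_y R_y^\dagger$, the unitary factor of the polar decomposition of $A_y$. Defining $V \coloneqq I^{\otimes n-a-b} \otimes \bigoplus_y V_y$, the result is manifestly an $(a,b)$-block diagonal unitary (the trivial identity factor is unitary, and each $V_y$ is unitary as a product of unitaries from an SVD).

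For the distance bound, my two ingredients would be the triangle inequality
\[
\norm{U - V} \;\leq\; \norm{U - A} + \norm{A - V} \;\leq\; \eps + \norm{A - V},
\]
and the Fan--Hoffman theorem: for any matrix $M$ with a polar decomposition $M = W|M|$, the unitary polar factor $W$ minimizes $\norm{M - W'}$ over all unitaries $W'$ in every unitarily invariant norm. The key observation is that the polar factor inherits the block structure of $A$: since $A^\dagger A = I^{\otimes n-a-b}\otimes \bigoplus_y A_y^\dagger A_y$ is block diagonal, $|A| = I^{\otimes n-a-b}\otimes \bigoplus_y |A_y|$ is block diagonal, so the $V$ above is (a choice of) the polar factor of $A$. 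Applying Fan--Hoffman with $U$ itself as the competing unitary gives $\norm{A - V} \leq \norm{A - U} \leq \eps$, so $\norm{U - V} \leq 2\eps$, establishing that $U$ is $(a,b,2\eps)$-approximately block diagonal.

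For the algorithmic claim, the only work, given the classical description of $A$, is to run SVD on each of the $2^b$ blocks $A_y \in \C^{2^a \times 2^a}$ and assemble $V_y = L_y R_y^\dagger$. Each SVD takes $O(2^{3a})$ time, for a total of $O(2^{3a+b})$ arithmetic operations; assembling the block-diagonal output takes no more time than this.

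The only subtle point is the non-uniqueness of the polar decomposition when some $A_y$ is rank-deficient. In that case, multiple unitary polar factors exist, but any of them, including $L_y R_y^\dagger$ computed from a particular SVD, is a closest unitary to $A_y$ in every unitarily invariant norm, and hence the resulting block-diagonal $V$ is still a closest unitary to $A$. Thus the Fan--Hoffman step goes through without change, and this is the only place in the argument that requires any care.
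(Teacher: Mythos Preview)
Your proof is correct and follows essentially the same approach as the paper: construct $V$ via block-wise polar decomposition (SVD), invoke the fact that the polar factor is the closest unitary in any unitarily invariant norm (which you name as Fan--Hoffman), and apply the triangle inequality; the runtime analysis is identical. Your treatment is slightly more careful than the paper's in justifying that the block-wise polar factor coincides with the polar factor of $A$ and in addressing the rank-deficient case.
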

\begin{proof}
We can write $A$ in block form as 
$A = I^{\otimes n-a-b} \otimes \left(\bigoplus_{y \in \{0, 1\}^b} A_y \right).$
For each block $A_y \in \C^{2^a \times 2^a}$, compute its SVD $A_y = L_y \Sigma_y R_y^\dagger$. 
Define $V \coloneqq I^{\otimes n-a-b} \otimes \left(\bigoplus_{y \in \{0, 1\}^b} L_y R_y^\dagger\right)$.
By construction, $V$ is $(a,b)$-block diagonal, and, in particular, it is the unitary obtained from the polar decomposition of $A$. 

It is a standard fact that, in any unitarily invariant norm, the unitary from the polar decomposition is the closest unitary to a given matrix. Thus,
\[
\norm{V - A} \leq \norm{U - A} \leq \eps,
\]
since $U$ is also unitary.
The triangle inequality gives
\[
\norm{U - V} \;\leq\; \norm{U - A} + \norm{A - V} \;\leq\; 2\eps,
\]
so $U$ is an $(a,b,2\eps)$-approximately block diagonal unitary.  

Finally, the runtime: computing an SVD (or equivalently, the polar decomposition) of a $2^a \times 2^a$ matrix takes $O(2^{3a})$ time. Since there are $2^b$ blocks, the total cost is $O(2^{3a+b})$.
\end{proof}

Together, \cref{lem:learn-projection,lem:round-to-unitary} show that the following two conditions are equivalent up to constant factors:
\begin{enumerate}
    \item $U$ is close to some $(a,b)$-block-diagonal unitary $V$, and 
    \item $U$ is close to its Pauli projection $\Pi_{\paulisupport_{a,b}}(U)$.
\end{enumerate}
For our analysis we will primarily analyze $U$ only through its Pauli projection.

The next lemma analyzes the cost of applying the Pauli projection $\Pi_{\paulisupport_{a,b}}(U)$. In particular, it bounds the number of queries to $U$ required in order to generate a sufficient number of copies of the projected state for use in the state tomography procedure.

\begin{lemma}\label{lem:post-selection-sample-complexity}
Let $U$ be an $(a,b,\eps)$-approximate block diagonal unitary,  and let $\ket{\psi}$ be any $n$-qubit state.
For any constant $c > 0$, there is a procedure that, given access to copies of $\ket{\psi}$ and using at most \[\frac{4}{1-2\eps} \cdot \frac{c \cdot 2^{a+b} + \log(1/\delta)}{\eps^2}\] queries to $U$, prepares $\frac{c \cdot 2^{a+b}}{\eps^2}$ copies of the state 
\[\frac{\Pi_{\paulisupport_{a,b}}(U)\ket{\psi}}{\norm{\Pi_{\paulisupport_{a,b}}(U)\ket{\psi}}_2}\] 
with probability at least $1-\exp(-2^{a+b})$.
\end{lemma}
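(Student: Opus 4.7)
The plan is to instantiate \cref{lemma:apply-pauli-projection} repeatedly, each call using a fresh copy of $\ket{\psi}$ and one query to $U$, treating the outcome as a Bernoulli trial. A successful trial deterministically yields a copy of the target state $\frac{\Pi_{\paulisupport_{a,b}}(U)\ket{\psi}}{\norm{\Pi_{\paulisupport_{a,b}}(U)\ket{\psi}}_2}$, and occurs with probability exactly $p \coloneqq \norm{\Pi_{\paulisupport_{a,b}}(U)\ket{\psi}}_2^2$. The entire proof therefore reduces to (i) lower bounding $p$, and (ii) plugging the bound into the multiplicative Chernoff inequality of \cref{lemma:bernoulli_bound_mult}.

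For step (i), I would use \cref{lem:learn-projection} (applied with the operator norm) to conclude $\opnorm{U - \Pi_{\paulisupport_{a,b}}(U)} \leq 2\eps$. Then, because $U\ket{\psi}$ is a unit vector, the triangle inequality gives
\[
\norm{\Pi_{\paulisupport_{a,b}}(U)\ket{\psi}}_2 \;\geq\; \norm{U\ket{\psi}}_2 - \opnorm{U-\Pi_{\paulisupport_{a,b}}(U)} \;\geq\; 1 - 2\eps,
\]
so $p \geq (1-2\eps)^2 \geq (1-2\eps)/2$ whenever $\eps \leq 1/4$ (which is implied by the hypothesis $\eps \leq 1/K$ for the appropriate constant $K$).

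For step (ii), I would instantiate \cref{lemma:bernoulli_bound_mult} with this lower bound on $p$, target count $d = c\cdot 2^{a+b}/\eps^2$, and failure parameter $\delta' = \exp(-2^{a+b})$, so that $\log(1/\delta') = 2^{a+b}$. The lemma then guarantees that
\[
m \;=\; \frac{2}{p}\Paren{d + \log(1/\delta')} \;\leq\; \frac{4}{1-2\eps}\PAREN{\frac{c\cdot 2^{a+b}}{\eps^2} + 2^{a+b}}
\]
trials suffice to accumulate at least $d$ successes except with probability $\exp(-2^{a+b})$. Absorbing the additive $2^{a+b}$ term into the $1/\eps^2$ factor (using $1/\eps^2 \geq 1$) matches the stated query bound.

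There is no substantial obstacle; the argument is essentially a one-line Chernoff bound on top of the norm estimate above. The only care required is in the constants: one must verify that the quadratic loss from squaring $(1-2\eps)$ can be absorbed into the factor of $4/(1-2\eps)$ in the final bound, which is exactly what the halving $p \geq (1-2\eps)/2$ accomplishes. Everything else (access to copies of $\ket{\psi}$, the per-trial query count of one, and independence of trials) is immediate from the construction in \cref{lemma:apply-pauli-projection}.
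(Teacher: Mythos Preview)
Your proposal is correct and follows essentially the same route as the paper: invoke \cref{lemma:apply-pauli-projection} to get a Bernoulli trial with success probability $p=\norm{\Pi_{\paulisupport_{a,b}}(U)\ket{\psi}}_2^2$, lower bound $p$ via \cref{lem:learn-projection}, and plug into \cref{lemma:bernoulli_bound_mult}. If anything, you are slightly more careful than the paper about the squaring step (the paper writes $p\geq 1-2\eps$ directly, whereas you correctly track $(1-2\eps)^2$ and absorb the extra factor into the $4/(1-2\eps)$).
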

\begin{proof}
By \cref{lemma:block-encoding-state-prep,lemma:lcu}, we successfully prepare a single copy of the desired state using a single query to $U$ with probability $\norm{\Pi_{\paulisupport_{a,b}}(U)\ket\psi}_2^2$.
By \cref{lem:learn-projection}, $\Pi_{\paulisupport_{a,b}}(U)$ is $2\eps$-close to a unitary in operator norm, so 
$\norm{\Pi_{\paulisupport_{a,b}}(U)\ket\psi}_2^2 \geq 1 - 2\eps$.

To obtain $d = \frac{c \cdot 2^{a+b}}{\eps^2}$ copies, we repeat the procedure over many independent trials. 
By \cref{lemma:bernoulli_bound_mult}, with $p \geq 1-2\eps$, $d = \frac{c \cdot 2^{a+b}}{\eps^2}$, and $\delta = \frac{\delta}{\exp(2^{a+b})}$, gives that
\[
M \leq \frac{4}{1 - 2\eps}  \cdot \frac{c \cdot 2^{a+b} + \log(1/\delta)}{\eps^2} 
\]
suffices.
\end{proof}

We also record the following basic fact, which says that rescaling the columns of a matrix can only perturb it by an amount proportional to the rescaling factor.
This is relevant when dealing with the $\frac{1}{\norm{\Pi_{\paulisupport_{a,b}}(U) \ket \psi}_2}$ factor in \cref{lemma:lcu}. 

\begin{fact}\label{lem:re-normalize-dist}
    Let $A$ be a matrix such that $\opnorm{A} \leq 1$.
    Then for arbitrary diagonal matrix $D \coloneqq \diag(d_1, \dots, d_{2^n})$, 
    $\opnorm{AD - A} \leq \max_{i} \abs{d_i - 1}$.
\end{fact}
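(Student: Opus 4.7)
The plan is to factor out $A$ and then apply submultiplicativity. Specifically, I would write $AD - A = A(D - I)$, where $D - I = \diag(d_1 - 1, \dots, d_{2^n} - 1)$ is itself a diagonal matrix. By submultiplicativity of the operator norm,
\[
\opnorm{AD - A} = \opnorm{A(D - I)} \leq \opnorm{A} \cdot \opnorm{D - I}.
\]

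The first factor is bounded by $1$ by hypothesis. For the second factor, since $D - I$ is diagonal, its singular values are exactly $\{|d_i - 1|\}_{i=1}^{2^n}$, so $\opnorm{D - I} = \max_i |d_i - 1|$. Combining these gives the claimed bound.

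There is no real obstacle here; the only thing to be a bit careful about is that submultiplicativity holds for the operator norm (since it is induced by the Euclidean vector norm), and that the operator norm of a diagonal matrix equals the maximum absolute value of its diagonal entries. Both are standard, so the proof reduces to a one-line calculation.
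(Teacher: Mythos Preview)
Your proof is correct and essentially identical to the paper's own argument: both factor $AD - A = A(D-I)$, apply submultiplicativity together with $\opnorm{A}\leq 1$, and use that $\opnorm{D-I} = \max_i |d_i - 1|$ for a diagonal matrix.
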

\begin{proof}
    \[
        \opnorm{AD - A} 
        = \opnorm{A(D - I)}  
        \leq \opnorm{A}\opnorm{D - I} \leq \max_{i} \abs{d_i - 1}. \qedhere\]

\end{proof}

Finally, we also need two technical ingredients from \cite{haah2023query}. The first is a state tomography algorithm whose error lies in a Haar-random direction. The second is a post-processing routine that aligns the relative phases of the reconstructed unitary’s columns.

\begin{lemma}[{\cite[Proposition 2.2]{haah2023query}} and {\cite[Theorem 2, Section 4.3]{Guta_2020}}]\label{lem:state-tomo}
    There is a pure state tomography algorithm with the following behavior. Given access to copies of an $n$-qubit pure state $\ket \psi$, it sequentially and nonadaptively makes von Neumann measurements on $O\left(\frac{2^n + \log(1/\delta)}{\eps^2}\right)$ copies of $\ket \psi$.\footnote{As with \cite[Footnote 4]{haah2023query}, we can only make such measurements up to a certain machine precision depending on the underlying hardware. However, this only adds a time complexity that is dominated by other terms in \cref{alg:base-approx-bd}.} Then, after classically collating and processing the measurement outcomes in $O(8^n)$ time, it outputs (a classical description of) an estimate pure state 
    \[\ket{\wh{\psi}} = \phi \sqrt{1-\wh{\eps}^2} \ket \psi + \wh{\eps} \ket{w}\]
    such that:
    (1) $\phi$ is a complex phase;
    (2) the trace distance, $\wh{\eps}$, is at most $\eps$ except with probability at most $\frac{\delta}{\exp(5d)}$;
    (3) the vector $\ket w$ is distributed Haar-randomly among all states orthogonal to $\ket \psi$.
\end{lemma}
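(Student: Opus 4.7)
My plan is to establish the lemma by combining a sample-optimal pure state tomography primitive with a symmetrization argument that forces the residual error direction to be Haar-random in the subspace orthogonal to $\ket{\psi}$. The sample complexity $O((2^n + \log(1/\delta))/\eps^2)$ matches the known optimum for pure state tomography in trace distance, so the three output guarantees---a clean complex phase factor, small $\wh{\eps}$ with high probability, and a Haar-distributed residual direction---should be verified as properties of a single concrete construction rather than proved in isolation.

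For the sample complexity and trace-distance guarantee, I would adopt a non-adaptive scheme in which each copy of $\ket{\psi}$ is measured in a basis $\{U_i \ket{x}\}_x$, where the $U_i$ are drawn i.i.d.\ from the Haar measure on $U(2^n)$ (or from a suitable unitary $3$-design). The classical post-processing computes the linear-inversion unbiased estimator $\wh{\rho}$ of $\ketbra{\psi}{\psi}$ from the measurement histograms, then returns its top eigenvector as $\ket{\wh{\psi}}$. A matrix Bernstein bound in the style of Gu\c{t}\u{a}--Kahn--Kueng--Tropp gives $\fnorm{\wh{\rho} - \ketbra{\psi}{\psi}} = O(\eps)$ from $N = O((2^n + \log(1/\delta))/\eps^2)$ copies, except with probability $\delta$; a Davis--Kahan argument then upgrades this to the trace-distance bound $\wh{\eps} = O(\eps)$ for $\ket{\wh{\psi}}$. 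To achieve the stronger failure probability $\delta/\exp(5d)$, where $d = 2^n$, one replaces $\log(1/\delta)$ by $\log(\exp(5d)/\delta) = O(d + \log(1/\delta))$ in the sample count, which only changes the constant in the already-present $d/\eps^2$ term.

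To guarantee that $\ket{w}$ is Haar-random in $\ket{\psi}^\perp$, I would exploit the rotational symmetry of the measurement ensemble. The key observation is that both the measurement distribution and the post-processing map are unitarily equivariant: for any fixed unitary $V$, running the procedure on $V\ket{\psi}$ produces an output whose distribution equals that of $V \ket{\wh{\psi}}$ obtained from running on $\ket{\psi}$. Conditioning on the overlap magnitude $|\braket{\psi | \wh{\psi}}|$ fixes $\wh{\eps}$, and the remaining randomness in $\ket{\wh{\psi}}$ must then be invariant under the stabilizer subgroup $\{V : V\ket{\psi} = \ket{\psi}\}$. This subgroup acts transitively on the unit sphere of $\ket{\psi}^\perp$, which forces $\ket{w}$ to be Haar-distributed there. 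The complex phase $\phi$ is extracted by writing $\braket{\psi | \wh{\psi}} = \phi \sqrt{1 - \wh{\eps}^2}$ in polar form.

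The main obstacle, in my view, is to ensure that the specific estimator (linear inversion followed by top-eigenvector extraction) truly commutes with unitary conjugation, so that the symmetry argument above yields \emph{exact} rather than approximate Haar-randomness of $\ket{w}$. Any basis-dependent rounding or tie-breaking step in the classical post-processing would in principle break the invariance; care is therefore required to keep every step of the reconstruction $U(2^n)$-equivariant (for instance, by breaking eigenvector ties with fresh Haar-random rotations within any near-degenerate subspace). For the time complexity, the $N = O(d/\eps^2)$ measurement outcomes contribute $O(Nd^2)$ work to form $\wh{\rho}$, and diagonalizing a $d \times d$ Hermitian matrix takes $O(d^3)$ time, consistent with the stated $O(8^n)$ bound.
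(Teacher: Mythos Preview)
The paper does not prove this lemma at all: it is stated with attribution to \cite[Proposition~2.2]{haah2023query} and \cite[Theorem~2, Section~4.3]{Guta_2020} and then invoked as a black box. Your sketch is therefore not competing with any argument in the present paper, but rather reconstructing the proof that lives in those references.

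That said, your reconstruction is essentially faithful to what \cite{haah2023query} does. The two key ingredients are exactly the ones you identify: (i) the sample-optimal linear-inversion estimator of \cite{Guta_2020} based on Haar-random (or design) measurement bases, with top-eigenvector extraction and a matrix-Bernstein/Davis--Kahan analysis, and (ii) the observation that the entire pipeline is $U(2^n)$-equivariant, so that the conditional law of $\ket{\wh{\psi}}$ given $\ket{\psi}$ is invariant under the stabilizer of $\ket{\psi}$, forcing the residual $\ket{w}$ to be Haar in $\ket{\psi}^\perp$. Your handling of the boosted failure probability $\delta/\exp(5d)$ by absorbing the extra $5d$ into the already-present $2^n/\eps^2$ term is also the intended reading.

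The one place where your discussion goes beyond what the references explicitly address is the caveat about tie-breaking in the eigenvector step potentially spoiling exact equivariance. This is a real issue in principle, and your proposed fix (Haar-random tie-breaking within degenerate eigenspaces) is the natural one; the cited works implicitly assume an equivariant eigenvector selection, as does the present paper in its footnote about machine precision.
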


\begin{lemma}[Proof of {\cite[Proposition 2.3]{haah2023query}}]\label{lem:phase-align-unitary}
    Let $V, V' \in \C^{2^n \times 2^n}$ be classical descriptions of unitary matrices. Suppose there exist diagonal unitaries $\Phi_V, \Phi_{V'}$ and an operator $A\in\C^{2^n \times 2^n}$ such that $\opnorm{A \Phi_V - V} \leq \eps \leq \frac{1}{8}$ and $\opnorm{A H^{\otimes n} \Phi_{V'} - V'} \leq \eps \leq \frac{1}{8}$. 
    Then there is an algorithm that outputs a description of a diagonal unitary $D$ such that $\distphop(D, \Phi_V^\dagger) \leq 24\eps$ using $O(8^n)$ classical time. 
\end{lemma}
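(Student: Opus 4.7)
The plan is to read off the phases of $\Phi_V^\dagger$ from the product $M := V^\dagger V'$, which the hypotheses force to be close to $\Phi_V^\dagger H^{\otimes n}\Phi_{V'}$ in operator norm. Writing $V = A\Phi_V + R$ with $\opnorm R \leq \eps$ and using that $V$ is unitary, expanding $V^\dagger V = I$ gives $\opnorm{A^\dagger A - I} \leq 2\eps + 3\eps^2$ and $\opnorm{A}\leq 1+\eps$. A few triangle inequalities then yield $\opnorm{M - \Phi_V^\dagger H^{\otimes n}\Phi_{V'}} = O(\eps)$. The obstacle at this stage is that every entry of $\Phi_V^\dagger H^{\otimes n}\Phi_{V'}$ has magnitude $1/\sqrt{2^n}$, so reading off $\bar\phi_k$ directly from, say, $M_{k,0}/M_{0,0}$ amplifies the operator-norm error to $\sqrt{2^n}\,\eps$ -- useless beyond very small $n$.

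To avoid this, I would use a conjugation that lifts the signal to $O(1)$-scale. For each $k\in\{0,1\}^n$, let $D_k := \diag((-1)^{k\cdot j})_j$, equivalently the Pauli $Z_1^{k_1}\cdots Z_n^{k_n}$. Two identities are crucial: (i) $\Phi_{V'} D_k \Phi_{V'}^\dagger = D_k$ because diagonal matrices commute; and (ii) $H^{\otimes n} D_k H^{\otimes n}$ equals the bit-flip permutation $P_k : \ket{j} \mapsto \ket{j\oplus k}$. Together these give
\[
M D_k M^\dagger \;\approx\; \Phi_V^\dagger H^{\otimes n} D_k H^{\otimes n} \Phi_V \;=\; \Phi_V^\dagger P_k \Phi_V,
\]
whose $(k,0)$-entry is \emph{exactly} the unit-modulus quantity $\bar\phi_k\,\phi_0$. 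The algorithm therefore computes
\[
d_k^\star \;:=\; (M D_k M^\dagger)_{k,0} \;=\; \sum_j (-1)^{k\cdot j}\, M_{k,j}\,\overline{M_{0,j}}
\]
for each $k$ and outputs the diagonal unitary $D$ with $D_{kk} := d_k^\star/|d_k^\star|$. Forming $M$ and computing all $d_k^\star$'s runs in $O(8^n)$ classical time.

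For the error bound, I would decompose $M D_k M^\dagger - \Phi_V^\dagger P_k \Phi_V$ into a telescope that substitutes $V \approx A\Phi_V$ and $V' \approx AH^{\otimes n}\Phi_{V'}$ one factor at a time. Each summand is controlled by one of $\opnorm{V - A\Phi_V}\leq\eps$, $\opnorm{V' - AH^{\otimes n}\Phi_{V'}}\leq\eps$, $\opnorm{A^\dagger A - I}\leq 2\eps + 3\eps^2$, or $\opnorm A \leq 1+\eps$, so the total operator-norm error -- and hence the $(k,0)$-entry error, since a single entry is bounded by the operator norm -- is $|d_k^\star - \bar\phi_k \phi_0| = O(\eps)$. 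Projecting $d_k^\star$ onto the unit circle at most doubles this distance (the assumption $\eps\leq 1/8$ is precisely what keeps $|d_k^\star|$ bounded away from $0$), and choosing the global phase $e^{i\theta} := \bar\phi_0$ converts the entry-wise bound into $\max_k |e^{i\theta} D_{kk} - \bar\phi_k| \leq 24\eps$, i.e., $\distphop(D, \Phi_V^\dagger)\leq 24\eps$. The main obstacle is keeping the constants in the telescoping tight enough that the pre-normalization error stays safely below $1$ across the full range $\eps\leq 1/8$; this bookkeeping is what pins down the final constant $24$.
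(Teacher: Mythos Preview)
Your overall strategy is correct and coincides with the Haah--Kothari--O'Donnell--Tang argument that the paper defers to: form $M=V^\dagger V'$, compute $d_k^\star=\sum_j(-1)^{k\cdot j}M_{k,j}\overline{M_{0,j}}$, normalize to the unit circle, and read off $\Phi_V^\dagger$ up to a global phase. Your conceptual explanation via $M D_k M^\dagger\approx\Phi_V^\dagger P_k\Phi_V$ (using $H^{\otimes n}D_kH^{\otimes n}=P_k$ and that diagonals commute) is a clean way to see why the estimator has signal at unit scale.

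There is, however, a genuine issue with the telescoping you sketch. Routing the error through $\opnorm{A^\dagger A - I}\le 2\eps+O(\eps^2)$ is not tight enough: if you telescope $M D_k M^\dagger-\Phi_V^\dagger P_k\Phi_V$ by substituting $V\to A\Phi_V$, $V'\to AH^{\otimes n}\Phi_{V'}$ one factor at a time and then pay $\opnorm{A^\dagger A-I}$ to collapse the inner $A^\dagger A$, the accumulated bound on $|d_k^\star-\bar\phi_k\phi_0|$ exceeds $1$ at $\eps=1/8$, so the normalization step is not guaranteed to be well-defined. The fix---and the reason the paper can assert that the argument ``goes through for arbitrary $A$''---is to avoid $A^\dagger A$ altogether via the identity
\[
V^\dagger(A\Phi_V)\;=\;V^\dagger(V-R)\;=\;I-V^\dagger R,
\]
which immediately gives $\opnorm{V^\dagger A-\Phi_V^\dagger}\le\eps$ and hence $\opnorm{M-\Phi_V^\dagger H^{\otimes n}\Phi_{V'}}\le 2\eps$. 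Feeding this into your estimator yields $|d_k^\star-\bar\phi_k\phi_0|\le 4\eps+2\eps^2<1$ for all $\eps\le 1/8$, and after normalization $\distphop(D,\Phi_V^\dagger)\le 8\eps+4\eps^2\le 24\eps$, as claimed. So your plan is right; only the bookkeeping step needs to be redirected away from $A^\dagger A-I$.
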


The proof of \cite[Proposition 2.3]{haah2023query} assumes that $A$ is unitary.
However, the argument does not use this, and in fact the proof goes through for arbitrary $A$.

With all the ingredients in place, we can now prove \cref{thm:approx-block-diag}, the main theorem of this section, which establishes the correctness of \cref{alg:base-approx-bd}.
We restate the theorem for convenience. 

\approxblockdiag*

\begin{proof}
    
    We assume $\eps \leq 1/C$ for some constant $C > 1$ to be fixed later. If instead $\eps > 1/C$, we may run \cref{alg:base-approx-bd} with $\eps = 1/C$, incurring only a constant-factor overhead of $C^2$, which is absorbed into the big-$O$ notation. In addition, we assume $a+b$ is a sufficiently large constant so that, for a universal constant $C'$ to be specified later, $\tfrac{a C'}{2^{a+b}} \leq \frac{1}{2}$. Note that if $a+b = O(1)$, then our desired query complexity can be achieved by na\"ively learning the entire $2^{a+b}\times 2^{a+b}$ block to Frobenius distance (as in \cite{chen2023testing}).
    This would still have $O(8^{a+b}/\eps^2) = O(1/\eps^2)$ query complexity as desired. 

    By \cref{fact:block-diagonal-equiv}, the Pauli projection $\Pi_{\paulisupport_{a,b}}(U)$ can be expressed in block-diagonal form as 
    \begin{equation}\label{eq:pauli-proj}
    \Pi_{\paulisupport_{a,b}}(U) = I^{\otimes n-a-b} \otimes \bigoplus_{y \in \{0, 1\}^b} B_y.
    \end{equation}
    Since $\Pi_{\paulisupport_{a,b}}(U)$ is obtained via a block-encoding (\cref{lemma:apply-pauli-projection}), we have $\opnorm{\Pi_{\paulisupport_{a,b}}(U)} \leq 1$.
    Furthermore, \cref{lem:learn-projection} gives that $\opnorm{\Pi_{\paulisupport_{a,b}}(U) - U} \leq 2\eps$. 
    Thus for every normalized state $\ket\psi$, $\norm{\Pi_{\paulisupport_{a,b}}(U)\ket\psi}_2 \geq 1-2\eps$. 

    Define $\alpha_z \coloneqq \norm{\Pi_{\paulisupport_{a,b}}\ket{0^{n-a-b}}\ket{+}^{\otimes b}\ket{z}}_2 \geq 1-2\eps$. 
    In \cref{alg-step:postselect}, we apply the postselection procedure of \cref{lemma:apply-pauli-projection} to prepare copies of the normalized state
     \begin{equation}\label{eq:state-tomo-input}
        \frac{\Pi_{\paulisupport_{a,b}}(U)\ket{0^{n-a-b}}\ket{+}^{\otimes b}\ket{z}}{\alpha_z} = \ket{0^{n-a-b}} \left(\frac{1}{\sqrt{2^b}} \sum_{y \in \{0, 1\}^b}\ket{y}\otimes \frac{B_y}{\alpha_z} \ket{z}\right) \eqqcolon \ket{0^{n-a-b}}\ket{\psi_z}
     \end{equation}
    which are then passed to the tomography step (\cref{alg-step:tomography}).
    By \cref{lem:post-selection-sample-complexity}, \[O\left(\tfrac{2^{a+b}}{\eps^2(1-2\eps)} \frac{2^a+b}{2^a}\right) = O\left(\frac{2^{a+b}}{\eps^2}(1+\frac{b}{2^a})\right)\] queries to $U$ suffice to produce the required number of copies, with failure probability at most $\exp(-5\cdot 2^{a+b})$. A union bound then guarantees that postselection succeeds for all $2^a$ columns except with probability at most $1/100$. Therefore, the overall query complexity of this stage (and the whole algorithm) is $O\left(\tfrac{2^{a+b}}{\eps^2}(2^a+b)\right)$. 

    Let us now focus on the output of \cref{lem:state-tomo} run on \cref{alg-step:tomography}. For each $z \in \{0,1\}^a$, the tomography algorithm produces a classical approximation $\ket{\wh{\psi_z}}$ of $\ket{\psi_z}$ defined in \cref{eq:state-tomo-input}.  
    Conditioned on the tomography succeeding, the algorithm returns a state of the form
    \begin{equation}\label{eq:state-tomo-output}
        \ket{\wh{\psi_z}} = \phi_z \sqrt{1-\eps_z^2} \left(\frac{1}{\sqrt{2^b}} \sum_{y \in \{0, 1\}^b}\ket{y}\otimes \frac{B_y}{\alpha_z} \ket{z}\right) + \eps_z \ket{w_z} 
        = \phi_z \sqrt{1 - \eps_z^2} \ket{\psi_z} + \eps_z \ket{w_z}
    \end{equation}
    where $\ket{w_z}$ is an $(a+b)$-qubit Haar-random state orthogonal to $\ket{\psi_z}$, $\eps_z \leq \eps \sqrt{\frac{2^a}{2^a+b}}$ is the accuracy of the tomography algorithm for the $z$th column, and $\phi_z$ is a random global phase. 
    By \cref{lem:state-tomo}, the tomography algorithm succeeds with probability at least $1-\exp(-5 \cdot 2^{a+b})$. By a union bound, all $2^a$ tomography algorithms succeed simultaneously except with probability at most $1/100$.
    In what follows, we condition on this success event. 

    Recall that our goal is to recover the matrices $B_y$ that appear in \cref{eq:pauli-proj}. The next step of the algorithm (\cref{alg-step:collate-1}) is to collate our state estimates (\cref{alg-step:tomography}) into block matrices $A_y$. For the analysis, express each $A_y$ in the form 
    \[
        A_y = B_y D \Phi E + W_y F, 
    \]
    where the matrices $D, \Phi, E, W_y, F$ are defined as follows.
    $D = \diag(\{\alpha_z^{-1}\})$ is the diagonal matrix of scaling factors $\alpha_z^{-1}$; $\Phi = \diag(\{\phi_z\})$ is the diagonal matrix of phases $\phi_z$; $E \coloneqq \diag(\{\sqrt{1-\eps_z^2}\}_z)$ and $F \coloneqq \diag(\{\eps_z\}_z)$ are the diagonal matrices of error terms.
    Lastly, for each $y \in \{0,1\}^b$, $W_y$ is the error matrix whose columns are given by 
    \[\sqrt{2^b} \cdot \left(\bra{y} \otimes I\right) \ket{w_z}.\]
    
    We will show that the operator norm of the error matrices $W_y$ is negligible using tools from random matrix theory. 
    To set this up, introduce uniformly random angles $\theta_z \in [0, 2\pi)$ and independent random variables random variables $\delta_1, \dots, \delta_{2^a}$, where each $\delta_z$ is distributed as $\abs{\braket{0 | \mathrm{Haar}}}^2$ for a Haar-random state $\ket{\mathrm{Haar}}$. In particular, each $\delta_z$ is a subexponential random variable with parameter $\norm{\delta_z}_{\psi_1} = \tfrac{1}{2^{a+b}}$. (Recall that $\norm{\cdot}_{\psi_1}$ is the subexponential norm, which is standard notation in probability theory.)
    Define
    \[
        \ket{h_z} \coloneqq \sqrt{\delta_z} e^{i \theta_z} \ket{\psi_z} + \sqrt{1-\delta_z} \ket{w_z} 
        = \sqrt{\delta_z} e^{i \theta_z} \frac{1}{\sqrt{2^b}} \sum_{y \in \{0, 1\}^b}\ket{y}\otimes \tfrac{B\prime_y}{\alpha_z} \ket{z} + \sqrt{1-\delta_z} \ket{w_z}.
    \]
    By construction, each $\ket{h_z}$ is an independent Haar-random state on $(a+b)$-qubits.
    Let $M \in \C^{2^{a+b} \times 2^a}$ be the matrix whose columns are the $\ket{h_z}$ and then let $M_y \coloneqq (\bra{y} \otimes I^{\otimes a}) M$. 
    Standard results in random matrix theory (\cite[Theorems 3.4.6 and 4.6.1]{Vershynin_2018}) show that for a fixed $y \in \{0, 1\}^b$, $\opnorm{M_y} \leq O\left(\frac{\sqrt{2^a+b}}{\sqrt{2^{a}}}\right)$ with probability at least $1-\exp(-5 \cdot (2^a+b))$.\footnote{We note that \cite{haah2023query} applies \cite{Vershynin_2018} in a similar manner. Note that each $\sqrt{2^{a+b}}\ket{w_z}$ is mean-zero, subgaussiank, and isotropic with $\norm{2^{a+b}\ket{w_z}}_{\psi_2} = 1$. Therefore, $\sqrt{2^a}(\bra{y} \otimes I^{\otimes a}) \ket{w_z}$ is mean-zero, subgaussian, and isotropic with parameter $\norm{\sqrt{2^a}(\bra{y} \otimes I^{\otimes a}) \ket{w_z}}_{\psi_2} = 1$.}
    By a union bound, all $M_y$ satisfy $\opnorm{M_y} \leq  O\left(\frac{\sqrt{2^a+b}}{\sqrt{2^{a}}}\right))$ with probability at least $1-\exp(-5 \cdot 2^a)$.
    Now decompose $\sqrt{2^b} M_y =  B_y D \Delta_0 + W_y \Delta_1$ where $\Delta_0 \coloneqq \diag(\{\sqrt{\delta_z} \cdot e^{-i \theta_z}\}_z)$ and $\Delta_1 \coloneqq \diag(\{\sqrt{1-\delta_z}\}_z)$.
    We therefore can bound $\opnorm{W_y}$ as follows. 
    \[
    \opnorm{W_y} \leq \left(\sqrt{2^b} \cdot \opnorm{M_y} + \opnorm{B_y D \cdot  \Delta_0}\right) \cdot \opnorm{\Delta_1^{-1}} \leq  \frac{O\left(\frac{\sqrt{2^a+b}}{\sqrt{2^{a}}}\right) + O\left(\max_z \sqrt{\delta_z}\right)}{\sqrt{1 - \max_z \delta_z}}.
    \]
    Because the $\delta_z$ are subexponential random variables with parameter $\norm{\delta_z}_{\psi_1} = \tfrac{1}{2^{a+b}}$, standard tail bounds \cite[Proposition 2.7.1, Lemma 2.7.6, Theorem 3.4.6]{Vershynin_2018} imply that, for any fixed $z$, $\delta_z \leq \tfrac{a C'}{2^{a+b}}$ except with failure probability at most $\frac{1}{100 \cdot 2^a}$ for a universal constant $C' > 0$. It is this constant $C'$ for which we need $\frac{aC^\prime}{2^{a+b}} \leq \frac{1}{2}$.
    By a union bound over all $2^a$ columns, this bound holds simultaneously for every $z$ except with probability at most $1/100$. Conditioned on this event, we have 
    \begin{equation}\label{eq:haar-term-small}
    \opnorm{W_y} \leq O\left(\frac{\sqrt{2^a+b}}{\sqrt{2^{a}}}\right)
    \end{equation}
    for all $y \in \{0,1\}^b$.

    Next, we have that for all $y \in \{0, 1\}^b$,
    \begin{align}\label{eq:per-block-bound}
        \opnorm{B_y\Phi - A_y} 
        &=\opnorm{B_y\Phi -B_yD\Phi} + \opnorm{B_y D \Phi - A_y} \nn 
        &\leq \opnorm{B_y - B_yD} + \opnorm{B_y D \Phi - A_y} \nn
        &= \opnorm{B_y - B_yD} + \opnorm{B_y D \Phi - B_y D \Phi E + W_y F} \nn
        &\leq \opnorm{B_y - B_yD} + \opnorm{B_y D} \opnorm{I - E} + \opnorm{W_y}\opnorm{F} \nn
        &\leq O(\eps). 
    \end{align}
    Recall that $\opnorm{I-E} = \max_{z \in \{0, 1\}^b} 1-\sqrt{1-\eps_z^2} \leq \eps_z \leq 1$ and $\opnorm{F} = \max_{z \in \{0, 1\}^b} \eps_z \leq  \eps \frac{2^a}{2^a+b}$.
    In the above, most steps follow from the triangle inequality and the submultiplicativity of the operator norm. In the second-to-last line, we use that, by \cref{lem:re-normalize-dist}, $\opnorm{B_y - B_y D } \leq \frac{1}{1-2\eps} - 1 = \frac{2\eps}{1-2\eps} \leq O(\eps)$ and we apply \cref{eq:haar-term-small}.
    Importantly, we have assumed that $\eps \leq 1/C$ for a universal constant $C > 0$. We take $C$ to be sufficiently large so that this bound is less than $\tfrac{1}{16}$. 
    Thus, when we round our output $A_y$ to the unitary $V_y$ in \cref{alg-step:collate-2}, the distance to $B_y \Phi$ will at most double to $\frac{1}{8}$ by \cref{lem:round-to-unitary}.

    We now correct for the relative phases in $\Phi$.
    By the group structure of $\paulisupport_{a,b}$, we have that
    \[
        \Pi_{\paulisupport_{a,b}}\left(U (I^{\otimes n-a} \otimes H^{\otimes a})\right) 
        = \Pi_{\paulisupport_{a,b}}(U) (I^{\otimes n-a} \otimes H^{\otimes a}) = I^{\otimes n-a-b} \otimes \left(\bigoplus_{y \in \{0, 1\}^b} B_y H^{\otimes a} \right)
    \]
    is an $(a,b)$-block diagonal matrix that is close to $U (I^{\otimes n-a} \otimes H^{\otimes a})$.
    By the same reasoning as the analysis above, \cref{alg-step:fix-phase1} recovers an $(a, b)$-block-diagonal unitary whose blocks $V'_y$ are at most $\frac{1}{8}$-far from $B_y H^{\otimes a} \Phi'$ for some other random phases $\Phi'$.%
    Hence, invoking \cref{lem:phase-align-unitary} on $V_0$ and $V_0^\prime$ in \cref{alg-step:fix-phase3} yields a diagonal unitary $D$ such that $\distphop(\Phi^\dagger, D) \leq O(\eps)$.

    Let $V = I^{\otimes n-a-b} \otimes \bigoplus_{y \in \{0, 1\}^b} V_y$ be the output of \cref{alg-step:merge-blocks}.
    By the triangle inequality and unitary invariance,
    \begin{align}\label{eq:fix-phases}
        \distphop(\Pi_{\paulisupport_{a,b}}(U), V (I^{\otimes n-a} \otimes D))
        &= \max_{y \in \{0, 1\}^b} \distphop(B_y, V_y D)\nn
        &\leq \max_{y \in \{0, 1\}^b} \distphop(B_y, V_y \Phi^\dagger) + \distphop(V_y \Phi^\dagger, V_y D)\nn
        &\leq \max_{y \in \{0, 1\}^b} \opnorm{B_y - V_y \Phi^\dagger} + \distphop(\Phi^\dagger,  D)\nn
        &\leq O(\eps). 
    \end{align}
    The last line follows from \cref{eq:per-block-bound} and the fact that $\distphop(\Phi^\dagger, D) \leq O(\eps)$.
    We note that it is essential to invoke \cref{lem:phase-align-unitary} on a $2^a \times 2^a$ block (rather than the entire unitary) to avoid an exponential dependence on $n$ in the final query and time complexity.

    We finally bound the distance between the output of the algorithm to $U$.
    Utilizing the triangle inequality once more:
    \begin{align*}
       \distphop(U, V (I^{\otimes n-a} \otimes D))
        &\leq \distphop\left(U, \Pi_{\paulisupport_{a,b}}(U) \right) + \distphop\left(\Pi_{\paulisupport_{a,b}}(U), V (I^{\otimes n-a} \otimes D)\right) \\
        &\leq \opnorm{U - \Pi_{\paulisupport_{a,b}}(U)} + O(\eps) \\
        &\leq O(\eps). 
    \end{align*}
    The second line follows from \cref{eq:fix-phases}, and the last line follows from \cref{lem:learn-projection}.

    The total failure probability is some small constant that we can suppress to an arbitrary $\delta$ by incurring a multiplicative $\log(1/\delta)$ factor in query complexity via standard amplification (see e.g., \cite[Proposition 2.4]{haah2023query}).
    The total runtime in $a+b$ is dominated by the complexity of running state tomography from \cref{lem:state-tomo}, which requires $O(8^{a+b})$ time, for $2^b \cdot \log(1/\delta)$ many repetitions.
    The $n$ and $\eps$ dependence is dominated by simply measuring the state for each tomography algorithm, which requires $O(n)$ per copy and $O\left(n \cdot \frac{2^{2a+b}}{\eps^2} (1+\frac{b}{2^a}) \log(1/\delta)\right)$ in total.
    The $\delta$ dependence is dominated by the $8^a 2^b \log^2(1/\delta)$ from \cite[Proposition 2.4]{haah2023query}.
    This gives a total runtime of
    \[
   O\!\left(
    n \tfrac{2^{a+b}}{\varepsilon^{2}} \left(2^a+b\right) \log \tfrac{1}{\delta}
    + 2^{3a+4b} \log \tfrac{1}{\delta}
    + 2^{3a+b} \log^{2} \tfrac{1}{\delta}
\right)
= \mathrm{poly}\!\bigl(n,\, 2^{2a+b},\, \varepsilon^{-1},\, \log \delta^{-1}\bigr).\qedhere
    \]
\end{proof}

\section{Reducing from \texorpdfstring{$k$}{\emph{k}}-Dimensionality to Block Diagonality}\label{sec:reduction-dimensionality}

In this section, we describe how to reduce the problem of learning a $k$-Pauli-dimensional unitary channel to the problem of learning an $(a,b,\eps)$-approximately block-diagonal unitary channel, for suitable integers $a,b$ with $2a+b \le k$.
Our reduction has two phases. In the first phase, we identify $\supp(U)$, the set of Pauli operators appearing in the expansion of the unknown unitary $U$. 
We give two algorithms for this task: one that requires only forward access to $U$, and another that additionally uses inverse access; the latter achieves a quadratic reduction in query complexity compared to the former.
In the second phase, we construct a Clifford circuit $C$ (using standard techniques) that maps the Pauli operators identified in the first phase into the canonical form $\paulisupport_{a,b}$.

Together, these two phases yield the desired reduction.
After the first phase, we obtain, with high probability, a subspace $S$ of Pauli operators that closely approximates $\supp(U)$.
Conjugating $U$ by the Clifford $C$ from the second phase maps $S$ into the canonical form $\paulisupport_{a,b}$, ensuring that $C U C^\dagger$ is $(a,b,\eps)$-approximately block-diagonal.
Consequently, the problem of learning $U$ reduces to that of learning an $(a,b,\eps)$-approximately block-diagonal unitary channel, which we solved in \cref{sec:learn-block-diag}.
We now present both phases in detail, which together form the first step of the complete algorithm described in \cref{sec:complete-algo}.

\subsection{Learning The Support}\label{ssec:learning-support}

The first step of our reduction is to identify the Pauli support of the unknown unitary $U$.
Recall that any $n$-qubit unitary can be expressed in the Pauli basis as
\[
U = \sum_{x \in \F_2^{2n}} \alpha_x W_x,
\]
where the coefficients form a probability distribution when squared in magnitude, since $\sum_x |\alpha_x|^2 = 1$.

A convenient way to access this distribution is via the Choi state of $U$, defined as
\[
\ket{\Phi_U} \coloneqq \left(I^{\otimes n} \otimes U\right) \frac{1}{\sqrt{2^n}} \sum_{x \in \F_2^n}\ket{x} \otimes \ket{x}. 
\]
The states $\{\ket{\Phi_{W_x}} : x \in \F_2^{2n}\}$ form the \emph{Bell basis}. It is a standard fact (see, e.g., \cite{montanaro2010quantum}) that measuring $\ket{\Phi_U}$ in the Bell basis yields outcome $x$ with probability $|\alpha_x|^2$. Thus, repeated Bell-basis measurements of the Choi state provide us with independent samples from the distribution supported on $\supp(U)$.

\subsubsection{Learning Without the Inverse}\label{subsec:learning-without-the-inverse}

Our strategy to learn $\supp(U)$ without the inverse is to simply collect enough Bell samples and infer a low-dimensional subspace that captures most of the probability mass. 
The following sampling lemma (proven in \cite{grewal2023efficient}) guarantees that a polynomial number of samples suffices.

\begin{lemma}[{\cite[Lemma 2.3]{grewal2023efficient}}]\label{lem:subspace-sample-complexity}
    Let $\calD$ be a distribution over $\F_2^d$, let $\eta, \delta \in (0, 1)$ and suppose \[
    m \geq 2\frac{d + \log(1/\delta)}{\eta}.
    \]
    Let $S \subseteq \F_2^d$ be the subspace spanned by $m$ independent samples drawn from $\calD$. Then \[
    \sum_{x \in S} \calD(x) \geq 1-\eta
    \]
    with probability at least $1-\delta$.
\end{lemma}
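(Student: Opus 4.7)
The plan is to reduce the lemma to the Chernoff-type bound Lemma~\ref{lemma:bernoulli_bound_mult} via a coupling to i.i.d.\ Bernoulli trials. Let $X_1,\dots,X_m$ be the samples, $S_i := \spn(X_1,\dots,X_i)$ (with $S_0 := \{0\}$), and $\calE := \{\sum_{x \in S_m} \calD(x) < 1-\eps\}$ the failure event; the goal is $\Pr[\calE] \leq \delta$. The key observation is that on $\calE$ the running span is mass-deficient not just at the end but at every intermediate step: since $S_{i-1} \subseteq S_m$ for all $i \leq m$, the mass of $S_{i-1}$ is also strictly below $1-\eps$ on $\calE$. Consequently, on $\calE$, the conditional probability $p_i := \Pr[X_i \notin S_{i-1} \mid X_1,\dots,X_{i-1}]$ that the $i$-th sample extends the span strictly exceeds $\eps$.

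Next I would set up the coupling. Let $Y_i := \mathbbm{1}[X_i \notin S_{i-1}]$, so that $\sum_{i=1}^m Y_i = \dim(S_m) \leq d$ identically (each sample raises the dimension by at most one, and the dimension is capped by $d$). Introducing auxiliary i.i.d.\ uniforms $U_i \in [0,1]$ and using a step-by-step quantile construction, I can realize the sampling so that $Y_i = \mathbbm{1}[U_i < p_i]$ while $X_i$ still has the correct marginal distribution $\calD$. Then $B_i := \mathbbm{1}[U_i < \eps]$ are i.i.d.\ $\mathrm{Bern}(\eps)$ with $Y_i \geq B_i$ on $\calE$, so $\sum_i B_i \leq \sum_i Y_i \leq d$ on $\calE$, and therefore $\Pr[\calE] \leq \Pr[\sum_i B_i \leq d]$.

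Finally, applying Lemma~\ref{lemma:bernoulli_bound_mult} with $p = \eps$ (absorbing small constants to convert the ``$< d$'' threshold in its statement to a ``$\leq d$'' threshold) bounds this latter probability by $\delta$ under the stated hypothesis $m \geq 2(d + \log(1/\delta))/\eps$. The step I expect to require the most care is the coupling construction, since $\calE$ depends jointly on all $m$ samples and one must avoid conditioning on the future; the quantile construction sidesteps this by making each $Y_i$ a measurable function of $(U_1,\dots,U_i)$ and $(X_1,\dots,X_{i-1})$ only, so the dominance $Y_i \geq B_i$ on $\calE$ is a clean per-step consequence of $p_i > \eps$.
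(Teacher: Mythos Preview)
The paper does not prove this lemma; it is quoted verbatim from \cite[Lemma~2.3]{grewal2023efficient} and used as a black box. So there is no ``paper's own proof'' to compare against here.

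Your argument is correct and is essentially the standard one for this kind of statement: on the failure event every intermediate span is mass-deficient, so each sample has probability $>\eps$ of enlarging the span, and a coupling to i.i.d.\ $\mathrm{Bern}(\eps)$ trials together with Lemma~\ref{lemma:bernoulli_bound_mult} finishes it. One small remark: you do not actually need to ``absorb small constants'' to pass from the ``$<d$'' threshold in Lemma~\ref{lemma:bernoulli_bound_mult} to a ``$\leq d$'' threshold. On the failure event $\calE$ you have $\sum_{x\in S_m}\calD(x)<1-\eps<1$, which forces $S_m\subsetneq \F_2^d$ and hence $\dim(S_m)\leq d-1<d$. Thus $\sum_i B_i \leq \sum_i Y_i = \dim(S_m) < d$ on $\calE$, and Lemma~\ref{lemma:bernoulli_bound_mult} applies directly with the stated value of $m$.
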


This result immediately gives the following corollary for quantum states.

\begin{corollary}\label{cor:sampling-without-inverse}
Let $\ket{\psi} \coloneqq \sum_{x \in \F_2^n} \beta_x \ket{x}$ be an $n$-qubit quantum state, and let $\eta, \delta \in (0,1)$.
Suppose there exists a subspace $D \subseteq \F_2^n$ of dimension $d$ such that
\[
    \sum_{x \in D} \abs{\beta_x}^2 = 1. 
\]
Then there is an algorithm that, with probability at least $1-\delta$, outputs a subspace $A \subseteq \F_2^n$ satisfying
\[
    \sum_{x \in A} \abs{\beta_x}^2 \geq 1-\eta
\]
using at most $m = O\left(\tfrac{d + \log(1/\delta)}{\eta}\right)$ copies of $\ket{\psi}$, no additional gate complexity, and \[O\left(mn \cdot \min\{m, n\}\right)\] classical post-processing time.
\end{corollary}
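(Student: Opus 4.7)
The plan is to run the obvious sampling procedure and then invoke \cref{lem:subspace-sample-complexity}. Specifically, I would measure $\ket{\psi}$ in the computational basis $m$ times independently, obtaining samples $x_1, \dots, x_m \in \F_2^n$ drawn i.i.d.\ from the distribution $\calD(x) = \abs{\beta_x}^2$. The hypothesis $\sum_{x \in D} \abs{\beta_x}^2 = 1$ forces $\calD$ to be supported entirely on the $d$-dimensional subspace $D$, so every sample lies in $D$. The algorithm's output $A$ is just the $\F_2$-span of $\{x_1, \dots, x_m\}$.

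For the correctness analysis, I would identify $D$ with $\F_2^d$ via any fixed linear isomorphism, which transforms $\calD$ into a distribution on $\F_2^d$. Applying \cref{lem:subspace-sample-complexity} with dimension parameter $d$ (not $n$) and taking $m = 2(d + \log(1/\delta))/\eps$ guarantees that, except with probability at most $\delta$, the subspace spanned by the samples captures probability mass at least $1-\eps$. Since the $\F_2$-span commutes with the linear isomorphism, this same bound transfers back to the span computed in $\F_2^n$, giving $\sum_{x \in A} \abs{\beta_x}^2 \geq 1 - \eps$.

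The query complexity $m = O((d + \log(1/\delta))/\eps)$ is immediate, and since the procedure requires only computational-basis measurements there is no gate overhead. For the classical post-processing, computing the span of $m$ vectors in $\F_2^n$ via Gaussian elimination takes $O(m n \cdot \min\{m,n\})$ time, since the rank of the intermediate basis is always at most $\min\{m,n\}$.

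There is essentially no technical obstacle here: the only point requiring care is to apply \cref{lem:subspace-sample-complexity} with the subspace dimension $d$ rather than the ambient dimension $n$, which is legitimate because the samples are deterministically confined to $D$.
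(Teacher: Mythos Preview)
Your proposal is correct and follows the same approach as the paper: measure in the computational basis, apply \cref{lem:subspace-sample-complexity}, and compute the span via Gaussian elimination. Your treatment is in fact more explicit than the paper's, which leaves implicit the isomorphism step justifying why the sample complexity depends on $d$ rather than $n$.
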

\begin{proof}
    By simply measuring $\ket \psi$ in the computational basis, we can apply \cref{lem:subspace-sample-complexity} to get enough samples, then perform Gaussian elimination to return a succinct set of generators for the span.
\end{proof}

Applying this to Bell samples from the Choi state $\ket{\Phi_U}$ gives us an efficient procedure for learning a subspace that contains nearly all of $\supp(U)$.

\subsubsection{Learning With the Inverse}\label{subsec:learning-with-the-inverse}

If we also have access to $U^\dagger$, then we can use amplitude amplification to obtain a more efficient support-learning algorithm.
The key ingredient is fixed-point amplitude amplification, which allows us to boost the probability of detecting computational-basis strings with non-negligible support.

\begin{lemma}[Fixed-point amplitude amplification {\cite{Yoder2014fixed}}]\label{lem:fixed-point-amplification}
    Let $\ket \psi$ be an $n$-qubit quantum state and $\Pi$ a diagonal projector in the computational basis such that $\abs{\braket{\psi | \Pi | \psi}}^2 \geq \eta$ for some known $\eta > 0$.
    Suppose we have unitaries $U, U^\dagger$ with $U\ket{0^n} = \ket \psi$.
    Then there is an algorithm that outputs a computational-basis state $\ket x$ satisfying $\braket{x|\Pi|x} = 1$ and $\braket{x|\psi}\neq 0$ with probability at least $1-\delta$, using  $O\left(\frac{\log(1/\delta)}{\sqrt{\eta}}\right)$ queries to $U, U^\dagger$ and $O\left(\frac{\log(1/\delta)}{\sqrt{\eta}}\cdot n\right)$ gate complexity.
\end{lemma}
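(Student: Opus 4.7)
The plan is to invoke the fixed-point amplitude amplification routine of Yoder, Low, and Chuang as a black box and then argue that measuring the amplified state in the computational basis satisfies both stated conditions. First, I would identify the ``good'' subspace as $G \coloneqq \mathrm{image}(\Pi) \subseteq \C^{2^n}$, so that the hypothesis $\abs{\braket{\psi|\Pi|\psi}}^2 \geq \alpha$ becomes a lower bound on the squared overlap of $\ket{\psi} = U\ket{0^n}$ with $G$. Because $\Pi$ is diagonal in the computational basis, a reflection about $G$ can be implemented by a phase-flip circuit of size $O(n)$ with no additional oracle access, and a reflection about $\ket{0^n}$ is implementable as $U(2\ketbra{0^n}{0^n} - I)U^\dagger$ up to a global phase. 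These are the only primitives the amplification routine needs.

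Next, I would apply the Yoder--Low--Chuang fixed-point algorithm with target infidelity $\delta$ and overlap lower-bound $\alpha$. Their algorithm consists of a sequence of $L = O(\log(1/\delta)/\sqrt{\alpha})$ generalized reflections which, unlike standard amplitude amplification, does not overshoot even when the true overlap happens to exceed $\alpha$. The output is a state $\ket{\phi}$ with $\norm{\Pi\ket{\phi}}_2^2 \geq 1-\delta$, using $O(L)$ queries to $U$ and $U^\dagger$ plus $O(Ln)$ additional gate complexity, which matches the stated bounds. The technical heart of their analysis is a Chebyshev-polynomial construction that tunes the phase schedule so the overall amplification function is uniformly close to $1$ on the interval $[\sqrt{\alpha}, 1]$; this is the main obstacle in a self-contained proof, and I would cite Yoder--Low--Chuang for it rather than redo it.

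Finally, I would measure $\ket{\phi}$ in the computational basis and verify both output properties. The condition $\braket{x|\Pi|x} = 1$ holds with probability at least $1-\delta$ directly from $\norm{\Pi\ket{\phi}}_2^2 \geq 1-\delta$. The condition $\braket{x|\psi}\neq 0$ is automatic from the structure of the algorithm: each reflection preserves the two-dimensional invariant subspace spanned by the normalized components of $\Pi\ket{\psi}$ and $(I-\Pi)\ket{\psi}$, so every intermediate state, and in particular $\ket{\phi}$, lies in this subspace. Consequently, the only computational-basis outcomes in $G$ with nonzero amplitude in $\ket{\phi}$ are those in the support of $\Pi\ket{\psi}$. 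For any such $x$, we have $\braket{x|\psi} = \braket{x|\Pi|\psi} \neq 0$, which is the second claim.
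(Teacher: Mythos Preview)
The paper does not prove this lemma; it is stated as a direct citation to Yoder, Low, and Chuang and invoked as a black box. Your sketch is precisely the natural derivation: apply the YLC fixed-point routine with overlap lower bound $\alpha$ and target infidelity $\delta$, measure in the computational basis, and use the fact that the generalized reflections preserve the two-dimensional subspace spanned by $\Pi\ket{\psi}$ and $(I-\Pi)\ket{\psi}$ to conclude that any ``good'' outcome $x$ must have $\braket{x|\psi}\neq 0$. That last invariant-subspace observation is the only point not already explicit in the YLC paper, and you handle it correctly.

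One minor remark: the claim that the reflection about $\mathrm{image}(\Pi)$ costs $O(n)$ gates is not justified for an \emph{arbitrary} diagonal projector---it requires that membership in the image be efficiently decidable. This is equally an unstated hypothesis of the lemma as written, and in its sole use in the paper (\cref{cor:sampling-with-inverse}) the projector is onto the complement of a low-dimensional $\F_2$-subspace, where the reflection is indeed cheap. So this is not a gap in your argument so much as an implicit assumption you share with the statement itself.
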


This tool allows us to find basis states in the support of $\ket{\psi}$ more efficiently than by simple sampling. 
Combining it with the iterative spanning procedure from \cref{lem:subspace-sample-complexity} yields the following corollary, which improves the copy complexity by a quadratic factor in~$\eps$.

\begin{corollary}\label{cor:sampling-with-inverse}
Let $\ket{\psi} = \sum_{x \in \F_2^n} \beta_x \ket{x}$ be an $n$-qubit state, and let $\eta, \delta \in (0,1)$.
Suppose there exists a subspace $D \subseteq \F_2^n$ of dimension $d$ such that
\[
    \sum_{x \in D} \abs{\beta_x}^2 = 1.
\]
Then there is an algorithm that outputs a subspace $A \subseteq \F_2^n$ such that
\[
    \sum_{x \in A} \abs{\beta_x}^2 \geq 1-\eta 
\]
with probability at least $1-\delta$ using $O\left(\frac{d}{\sqrt{\eta}}\log(d/\delta)\right)$ queries to $U$ and $U^\dagger$, $O\left(\frac{dn}{\sqrt{\eta}} \log(d/\delta)\right)$ gate complexity. and $O\left(d^3 n\right)$ classical processing time.
\end{corollary}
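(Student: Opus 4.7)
The plan is to build $A$ iteratively, using fixed-point amplitude amplification (\cref{lem:fixed-point-amplification}) to extract one new vector of $\supp(\ket\psi)$ at each step. Initialize $A_0 \coloneqq \{0\}$ and, for $i = 1, \ldots, d$, invoke \cref{lem:fixed-point-amplification} on $\ket\psi$ with the diagonal projector $\Pi_i \coloneqq I - \Pi_{A_{i-1}}$ (projecting onto computational basis states \emph{not} in $A_{i-1}$), amplitude lower bound $\alpha = \eps$, and failure probability $\delta/d$. If $x_i$ denotes the returned string, set $A_i \coloneqq A_{i-1} + \spn(x_i)$ and output $A \coloneqq A_d$. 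Union-bounding over the $d$ amplification calls, with probability $\geq 1 - \delta$ every call whose precondition $\braket{\psi|\Pi_i|\psi} \geq \eps$ holds returns $x_i$ satisfying $x_i \notin A_{i-1}$ and $\beta_{x_i} \neq 0$; I condition on this event below.

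For correctness, I split into two cases. If at every iteration $\sum_{x \notin A_{i-1}} |\beta_x|^2 \geq \eps$, then each $x_i$ lies in $\supp(\ket\psi) \subseteq D$ and is linearly independent from $A_{i-1}$, so $\dim A_i = \dim A_{i-1} + 1$. After $d$ steps $\dim A_d = d$ and $A_d \subseteq D$, forcing $A_d = D$ and coverage $1$. Otherwise the residual mass first falls below $\eps$ at some iteration $i^* \leq d$, so $A_{i^*-1}$ already has coverage $\geq 1 - \eps$; since coverage is monotone under subspace extension, $A_d \supseteq A_{i^*-1}$ also has coverage $\geq 1 - \eps$. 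Either way, $A_d$ satisfies the stated guarantee.

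For complexity, each call to \cref{lem:fixed-point-amplification} uses $O(\log(d/\delta)/\sqrt{\eps})$ queries to $U$ and $U^\dagger$ and $O(n \log(d/\delta)/\sqrt{\eps})$ gates, where the $O(n)$ per-call overhead implements $\Pi_i$ via parity checks derived from a parity-check matrix of $A_{i-1}$; summing over the $d$ rounds yields the claimed query and gate bounds. Maintaining and updating the subspace representation classically via Gaussian elimination fits within $O(d^3 n)$.

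The main subtlety I expect is that \cref{lem:fixed-point-amplification} provides no guarantee once its precondition fails, so iterations occurring after the residual mass drops below $\eps$ could in principle return arbitrary strings. The case analysis above sidesteps this issue: the only iterations whose output matters are those where the precondition holds, and the monotonicity of coverage under subspace extension ensures the final $A_d$ is correct regardless of what amplification does in the remaining ``unnecessary'' rounds.
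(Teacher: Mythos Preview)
Your proposal is correct and follows essentially the same approach as the paper: iteratively apply fixed-point amplitude amplification with the projector onto the complement of the current subspace, union-bound over the $d$ calls, and split into the two cases ``residual mass never drops below $\eps$'' (forcing $A_d = D$) versus ``residual mass drops below $\eps$ at some step'' (so coverage is already $\geq 1-\eps$ and monotonicity finishes). Your explicit remark that amplification may behave arbitrarily once the precondition fails, and that monotonicity of coverage under subspace extension makes this irrelevant, is in fact slightly more careful than the paper's own wording.
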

\begin{proof}
We build up the target subspace iteratively. Let $A_0 = \{0\}$, and for $i=1,\dots,d$ repeat:
\begin{enumerate}
\item Define $\Pi_i$ to be the projector onto basis states outside of $A_{i-1}$.
\item Run \cref{lem:fixed-point-amplification} with parameters $\alpha=\eps$ and failure probability $\delta/d$, obtaining some $\ket{x}$ with $\braket{x|\Pi_i|x}=1$ and $|\braket{x|\psi}|>0$.
\item If such an $\ket{x}$ is obtained, update $A_i = \mathrm{span}(A_{i-1}, x)$ via Gaussian elimination; otherwise set $A_i = A_{i-1}$.
\end{enumerate}

    By a union bound, all $d$ calls to fixed-point amplification succeed with probability at least $1-\delta$, provided that at each iteration $i$ the current subspace $A_{i-1}$ still has weight $< 1-\eps$.
    If at some step the algorithm outputs an $\ket{x}$ with $\braket{x|\Pi_i|x}=0$, then by contrapositive this can only happen when $\sum_{x\in A_{i-1}}|\beta_x|^2 \geq 1-\eps$.
    Since $A_0 \subseteq A_1 \subseteq \cdots \subseteq A_d$, we conclude in this case that the final $A_d$ already satisfies $\sum_{x\in A_d}|\beta_x|^2 \geq 1-\eps$.

    On the other hand, if all $d$ iterations succeed in finding new basis states, then each $\ket{x}$ produced lies in $D$ (because the amplification subroutine always returns a string with nonzero overlap with $\ket{\psi}$).
    Thus after $d$ steps we have $A_d = D$.
    In either scenario, the output $A$ satisfies the desired guarantee.

    Finally, \cref{lem:fixed-point-amplification} uses $O\left(\tfrac{1}{\sqrt{\eps}}\log(d/\delta)\right)$ queries and $O\left(\tfrac{n}{\sqrt{\eps}}\log(d/\delta)\right)$ gates per iteration.
    Over $d$ iterations, this yields the claimed query and gate complexities.
    Each update of $A_i$ requires Gaussian elimination in $O(nd^2)$ time, so across all $d$ iterations the total classical cost is $O(d^3 n)$.
\end{proof}

\subsection{Mapping to a Block-Diagonal Unitary via Clifford Circuits}\label{ssec:map-to-block-diagonal}

Having identified an approximation to $\supp(U)$ in \cref{ssec:learning-support}, we now show how to map this support into the canonical form $\paulisupport_{a,b}$. 
Along the way, we also determine how good an approximation of $\supp(U)$ is needed for our reduction to go through.

A basic fact we rely on is that Clifford circuits normalize the Pauli group: for any Clifford $C$ and any $x \in \F_2^{2n}$, 
\[
    C W_x C^\dagger = \pm W_y
\]
for some $y \in \F_2^{2n}$. 
Thus, conjugating $U$ by a suitable Clifford circuit that transforms its Pauli support into $\paulisupport_{a,b}$ yields a unitary $C U C^\dagger$ that is block-diagonal. 
For a subspace $S \subseteq \F_2^{2n}$, we formally define
\[
    C S C^\dagger \coloneqq \{y \in \F_2^{2n} : \exists x \in S \text{ such that } C W_x C^\dagger = \pm W_y\}.
\]

Every Pauli subspace $S \subseteq \F_2^{2n}$ admits a canonical basis with respect to the symplectic inner product. 
Specifically, $S$ can always be generated by vectors of the form
\[
    \{x_1, z_1, \dots, x_a, z_a, z_{a+1}, \dots, z_{a+b}\},
\]
where the symplectic products satisfy $[x_i, x_j] = [z_i, z_j] = 0$ and $[x_i, z_j] = \delta_{ij}$ for all $i,j$. 
In this representation, $S$ naturally decomposes into a \emph{symplectic part} of dimension $2a$, spanned by the pairs $(x_i, z_i)$, and an \emph{isotropic part} of dimension $b$, spanned by $\{z_{a+1}, \dots, z_{a+b}\}$.
This $(a,b)$-decomposition directly determines the block structure: 
if $U$ has Pauli support contained in such an $S$, then after a suitable Clifford conjugation $U$ becomes $(a,b)$-block diagonal.

We can now state the main technical lemma of this section. 
It shows that if we learn $\supp(U)$ to sufficient accuracy, then we can efficiently construct a Clifford circuit $\widetilde{C}$ such that $\widetilde{C} U \widetilde{C}^\dagger$ is $(a,b,\eps)$-approximately block diagonal.

\begin{restatable}{lemma}{cliffordtoblock}\label{lem:clifford-to-block-junta}
    Let $U$ be a $k$-Pauli dimensional unitary, let $S \coloneqq \supp(U)$, and recall that $S$ admits an $(a,b)$-decomposition into a symplectic part of dimension $2a$ and an isotropic part of dimension $b$ (so that $2a+b = k$). 
    If one can learn a subspace $T \subseteq S$ satisfying
    \[
        \frac{1}{4^n}\sum_{x \in T} \abs{\tr(U W_x)}^2 \geq 1 - \frac{\eps^2}{2^{a+b+1}},
    \]
    then there exists a Clifford circuit $\widetilde{C}$ such that $\widetilde{C} U \widetilde{C}^\dagger$ is $(a^\prime, b^\prime, \eps)$-approximately block diagonal for some $a',b' \geq 0$ with $a^\prime + b^\prime \leq a+b$.
    Moreover, $\widetilde{C}$ can be constructed in time $O(nk^2)$.
\end{restatable}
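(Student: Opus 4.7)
The plan is to build $\widetilde C$ in three stages: extract a symplectic basis for $T$, realize that basis as the canonical form $\paulisupport_{a',b'}$ via standard Clifford synthesis, and control the approximation error incurred because $T$ may be a proper subspace of $S$.

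First, I run symplectic Gram--Schmidt over $\F_2^{2n}$ on any set of generators of $T$ to produce a basis $\{x_1,z_1,\dots,x_{a'},z_{a'},w_1,\dots,w_{b'}\}$ satisfying $[x_i,z_j]=\delta_{ij}$, $[x_i,x_j]=[z_i,z_j]=0$, and each $w_k \in T\cap T^{\sympcomp}$. This fixes the signature $(a',b')$ of $T$ with $\dim T = 2a'+b'$, and can be carried out in $O(nk^2)$ time via Gaussian elimination over $\F_2$. To see that $a'+b' \le a+b$, I note that $\langle z_1,\dots,z_{a'},w_1,\dots,w_{b'}\rangle \subseteq T \subseteq S$ is an isotropic subspace of dimension $a'+b'$, while any symplectic space of signature $(a,b)$ has maximal isotropic dimension (Witt index) equal to $a+b$.

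Next, I feed the symplectic basis into an Aaronson--Gottesman-style stabilizer-tableau routine to synthesize, in $O(nk^2)$ time, a Clifford $\widetilde C$ realizing $\widetilde C W_{x_i}\widetilde C^\dagger = X_i$, $\widetilde C W_{z_i}\widetilde C^\dagger = Z_i$, and $\widetilde C W_{w_k}\widetilde C^\dagger = Z_{a'+k}$ (up to a qubit permutation absorbed into $\widetilde C$), so $\widetilde C T \widetilde C^\dagger = \paulisupport_{a',b'}$ exactly. Setting $U' \coloneqq \widetilde C U \widetilde C^\dagger$, the Pauli projection $\Pi_{\paulisupport_{a',b'}}(U')$ is $(a',b')$-block-diagonal by \cref{fact:block-diagonal-equiv}, and the error $E \coloneqq U' - \Pi_{\paulisupport_{a',b'}}(U')$ is supported on $\widetilde C(S\setminus T)\widetilde C^\dagger$ with total squared Pauli mass at most $\eps/2^{a+b+1}$. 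To bound $\opnorm E$, I would decompose $E$ along the at most $2^{(a-a')+(b-b')}$ nontrivial cosets of $\paulisupport_{a',b'}$ inside $\widetilde C S \widetilde C^\dagger$: each coset contribution factors as a representative Pauli times a smaller block-diagonal operator whose operator norm is controlled by its per-block Frobenius mass, and combining these per-coset bounds with Cauchy--Schwarz and the calibration $2^{a+b+1}$ in the hypothesis should yield $\opnorm E \le \eps/2$. Invoking \cref{lem:round-to-unitary} then rounds $\Pi_{\paulisupport_{a',b'}}(U')$ to an $(a',b')$-block-diagonal unitary within $\eps$ of $U'$, completing the argument.

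The main obstacle will be executing the operator-norm bound tightly: a naive triangle inequality over all $|S\setminus T|\le 2^{2a+b}$ bad Paulis is too lossy, and recovering an $O(\eps)$ bound requires the coset-aware accounting above, leveraging that each per-coset sum is already block-diagonal so that its operator norm is dominated by its Frobenius mass rather than by the size of its support. The remaining steps (symplectic Gram--Schmidt, stabilizer-tableau synthesis, runtime accounting) are routine bookkeeping within the stabilizer formalism.
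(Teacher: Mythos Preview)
Your construction of $\widetilde C$ and the Witt-index argument for $a'+b'\le a+b$ are fine (and the latter is cleaner than what the paper does). The gap is in the operator-norm bound on $E = U' - \Pi_{\paulisupport_{a',b'}}(U')$.

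First, your coset count is off: the number of cosets of $T$ in $S$ is $2^{\dim S - \dim T} = 2^{2(a-a')+(b-b')}$, not $2^{(a-a')+(b-b')}$ (and in fact $a'\le a$, $b'\le b$ need not hold individually). More importantly, even with the correct count your Cauchy--Schwarz argument gives
\[
\opnorm{E} \;\le\; \sqrt{2^{a'+b'}}\cdot\sqrt{2^{2(a-a')+(b-b')}}\cdot\sqrt{\text{missing mass}}
\;=\;\sqrt{2^{a+b}}\cdot\sqrt{2^{\,a-a'}}\cdot\sqrt{\text{missing mass}},
\]
which is worse than the target by a factor of $\sqrt{2^{a-a'}}$. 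This factor is genuinely unbounded: if the sampled $T$ happens to be isotropic (say $T=\langle z_1,\dots,z_a\rangle$ inside a fully symplectic $S$), then $a'=0$ and your bound degrades by $\sqrt{2^{a}}$. The ``per-block Frobenius'' refinement you suggest does not help, since a single block can carry all of a coset's mass.

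The paper closes this gap by a different route that you are missing: it argues \emph{existentially} via a Clifford $C$ depending on \emph{both} $T$ and $S$ (from \cref{lem:symplectic-gram-schmidt,lem:standard-symplectic-basis}) that simultaneously sends $T\to\paulisupport_{a',b'}$ and confines all of $S$ to an $(a{+}b)$-qubit subsystem. On $CUC^\dagger$ one then gets the tight Frobenius-to-operator conversion of \cref{lem:pauli-support-to-op-bound} with only a $\sqrt{2^{a+b}}$ loss, independent of $a'$. The final step is to observe that $C\widetilde C^\dagger$ normalizes $\paulisupport_{a',b'}$, so conjugating back preserves $(a',b')$-block-diagonality and transfers the bound to $\widetilde C U\widetilde C^\dagger$ by unitary invariance. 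This normalizer trick is the missing idea; without it, a direct analysis through $\widetilde C$ alone cannot avoid the $\sqrt{2^{a-a'}}$ penalty.
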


To prove this lemma, we require two results about subspaces of symplectic vector spaces. 
The first is a semi-folklore decomposition procedure.
Specifically, given subspaces $T \subseteq S \subseteq \F_2^{2n}$, one can efficiently find a basis for $T$ split into symplectic and isotropic parts, and extend this to a basis for $S$ such that the basis of $T$ is contained in that of $S$. 
This result generalizes the Symplectic Gram-Schmidt procedure (see e.g., \cite[Lemma 2]{fattal2004entanglementstabilizerformalism}, \cite{wilde2009logical}, and \cite[Lemma 5.1]{grewal2023improved}) to work simultaneously on two subspaces (one a subspace of the other). 
An explicit algorithm for obtaining this decomposition is somewhat subtle, and, to the best of our knowledge, does not appear in the literature, so we provide it for completeness. 
The reader may safely skip these details of the proof in \cref{sec:deferred-proof} without missing any essential ideas of our main algorithm for learning $k$-dimensional unitary channels.

\begin{restatable}{lemma}{symplecticgramschmidt}\label{lem:symplectic-gram-schmidt}
    For subspaces $T \subseteq S \subseteq \F_2^{2n}$, there exist integers $a^\prime \leq a$ and $b^\prime \leq b$, together with an integer $\ell \leq b'$, such that
    \[T = \langle z_1, x_1, \dots, z_{a^\prime}, x_{a^\prime}, z_{a^\prime+1}, \dots, z_{a^\prime+b^\prime} \rangle\]
    and 
    \[S = \langle T, x_{a^\prime+1}, \dots, x_{a^\prime + \ell}, z_{a^\prime+b^\prime+1}, x_{a^\prime+b^\prime+1}, \dots, z_{a+b^\prime-\ell}, x_{a+b^\prime-\ell}, z_{a+b^\prime-\ell+1}, \dots, z_{a+b}\rangle \]
    where $[x_i, x_j] = [z_i, z_j] = 0$ and $[x_i, z_j] = \delta_{ij}$ for all $i,j$.
    Moreover, such a basis can be found in time $O(n (a+b)^2)$ given any generating sets for $T$ and $S$.
\end{restatable}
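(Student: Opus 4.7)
The plan is a constructive, two-stage symplectic Gram--Schmidt. In the first stage I build the canonical basis of $T$. Starting from any basis of $T$, I repeatedly search for a pair $u,v$ of basis vectors with $[u,v]=1$, relabel them as $(x_i, z_i)$, and update every remaining basis vector by $w \leftarrow w + [w, z_i]\, x_i + [w, x_i]\, z_i$, which makes $w$ symplectically orthogonal to both $x_i$ and $z_i$ without changing the span. Iterating, the leftover vectors eventually become pairwise symplectically orthogonal; these are declared $z_{a'+1}, \ldots, z_{a'+b'}$. This produces the stated form for $T$ with $2a' + b' = \dim T$, using $O(n(a+b)^2)$ bit operations.

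In the second stage I extend the basis to $S$. Pick any completion $w_1, \ldots, w_r$ of the basis of $T$ to a basis of $S$, and first reduce each $w_j$ against the pairs $(x_i, z_i)_{i \leq a'}$ via the same update rule, so every $w_j$ now commutes with the symplectic part of $T$. Form the $r \times b'$ matrix $M$ with $M_{ji} = [w_j,\, z_{a'+i}]$ and set $\ell = \operatorname{rank}(M)$. Bringing $M$ to a normal form via row operations on the $w_j$'s and simultaneous relabeling of $z_{a'+1}, \ldots, z_{a'+b'}$ yields $\ell$ vectors, which I name $x_{a'+1}, \ldots, x_{a'+\ell}$, satisfying $[x_{a'+i}, z_{a'+j}] = \delta_{ij}$ for $j \leq b'$, while the remaining $r - \ell$ extension vectors commute with every element of $T$.

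To finish, I enforce $[x_{a'+i}, x_{a'+j}] = 0$ for $i < j \leq \ell$ by the update $x_{a'+j} \leftarrow x_{a'+j} + [x_{a'+j}, x_{a'+i}]\, z_{a'+i}$, which cancels that symplectic product without disturbing any previously established relation since $[z_{a'+i}, z_{a'+k}] = 0$ for all $k$ and $[z_{a'+i}, x_{a'+i}] = 1$. Running stage one's Gram--Schmidt on the remaining $r - \ell$ vectors (after first making them orthogonal to $x_{a'+1}, \ldots, x_{a'+\ell}$ via the same adjustment, using $z_{a'+1}, \ldots, z_{a'+\ell}$ as the new partners) splits them into $a - a' - \ell$ symplectic pairs indexed $(z_i, x_i)$ for $i = a'+b'+1, \ldots, a+b'-\ell$ and $b - b' + \ell$ further isotropic generators indexed $z_{a+b'-\ell+1}, \ldots, z_{a+b}$. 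A direct count then verifies $a = a' + \ell + (a - a' - \ell)$ symplectic pairs and $b = (b' - \ell) + (b - b' + \ell)$ isotropic generators in $S$, matching the claimed form.

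The main obstacle is the bookkeeping in the second stage: after the $z$-pairings have been canonicalized, one must still enforce that the new $x$-partners commute among themselves, and that the leftover extension vectors commute with all of the structure previously introduced, all without destroying any relation already established. The single update rule $v \leftarrow v + [v, x_i]\, z_i + [v, z_i]\, x_i$ handles every such correction, and because each update is local to one symplectic pair it does not perturb the others. Each update costs $O(n)$, and with $O((a+b)^2)$ updates across both stages the total runtime is $O(n(a+b)^2)$ as claimed.
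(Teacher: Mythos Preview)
Your proposal is correct and follows essentially the same three-phase symplectic Gram--Schmidt as the paper: first canonicalize $T$, then pair the isotropic generators of $T$ with extension vectors from $S$, then Gram--Schmidt the leftovers. The only cosmetic difference is that you phrase the pairing step via the rank of the matrix $M_{ji}=[w_j,z_{a'+i}]$ rather than the paper's greedy search, but note that ``relabeling'' the $z$'s must include full $\F_2$-linear combinations (not just permutations) to get $[x_{a'+i},z_{a'+j}]=\delta_{ij}$ for all $j\le b'$.
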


The second result is another semi-folklore result that can be attributed to the techniques of \cite{aaronson2004simulation}. This algorithm is a minor generalization of \cite[Section 4.2]{berg2021simple} and \cite[Lemma 3.2]{grewal2023efficient}.
\begin{lemma}\label{lem:standard-symplectic-basis}
    Given generators 
    \[\{z_1, x_1, \dots, z_{a^\prime}, x_{a^\prime}, z_{a^\prime+1}, \dots, z_{a^\prime+b^\prime}\}\]
    for a subspace $T \subseteq \F_2^{2n}$, and additional generators
    \[\{x_{a^\prime+1}, \dots, x_{a^\prime + \ell}, z_{a^\prime+b^\prime+1}, x_{a^\prime+b^\prime+1}, \dots, z_{a+b^\prime-\ell}, x_{a+b^\prime-\ell}, z_{a+b^\prime-\ell+1}, \dots, z_{a+b}\}\] 
    that extend this to a generating set for a larger subspace $S \supseteq T$, 
    where for all $i, j$, $[x_i, x_j] = [z_i, z_j] = 0$, and $[x_i, z_j] = \delta_{ij}$, there is a Clifford circuit $C$ such that $CTC^\dagger = \paulisupport_{a^\prime, b^\prime}$ and
    \[
        CSC^\dagger = 0^{n-a-b^\prime+\ell} \times \F_2^{a-a^\prime-\ell} \times 0^{b^\prime-\ell} \otimes \F_2^{a^\prime + \ell} \times 0^{n-a-b} \times \F_2^{a+b}. \footnote{When viewed as the set of phaseless Paulis, this becomes $I^{\otimes n-a-b} \otimes \{I, Z\}^{\otimes b-b^\prime+\ell} \otimes \{I, X, Y, Z\}^{\otimes a-a^\prime-\ell}  \otimes \{I, Z\}^{\otimes b^\prime-\ell} \otimes \{I, X, Y, Z \}^{\otimes a^\prime + \ell} $ such that it can be easily seen that $CSC^\dagger \subseteq \paulisupport_{a+b^\prime-\ell, b-b^\prime+\ell} \subseteq \paulisupport_{a+b, 0}$.}
    \] 
    Moreover, this Clifford circuit can be found in time $O(n(a+b))$.
\end{lemma}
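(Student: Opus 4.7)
The plan is to construct $C$ by an iterative symplectic Gaussian elimination, processing one generator (or one anticommuting pair) at a time. The workhorse is the standard fact from the Aaronson--Gottesman tableau formalism \cite{aaronson2004simulation}: given any nontrivial Pauli $P$ supported on a specified subset of qubits, one can construct an $O(n)$-gate Clifford that maps $P$ to a single-qubit Pauli on any chosen target qubit within that subset; moreover, applied simultaneously to an anticommuting pair $(P_z, P_x)$, the same reduction can be arranged to produce $(Z_q, X_q)$ on a common target qubit $q$.

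First I would process the $a^\prime$ symplectic pairs $(z_i, x_i)$ of $T$, sending the $i$-th pair to $(Z, X)$ on qubit $n - a^\prime + i$. Next I would handle the isotropic generators of $T$: for $k = 1, \ldots, \ell$, reduce $z_{a^\prime+k}$ (the isotropic vectors of $T$ that will be paired in $S$) to $Z_{n - a^\prime - \ell + k}$, and for $k = 1, \ldots, b^\prime - \ell$, reduce $z_{a^\prime + \ell + k}$ (those that remain isotropic in $S$) to $Z_{n - a^\prime - b^\prime + k}$. Because Clifford conjugation preserves symplectic products, each newly reduced generator acts trivially on the qubits claimed by earlier ones, after cleaning up the tracked generator by multiplying by already-reduced generators. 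This yields $C T C^\dagger = \paulisupport_{a^\prime, b^\prime}$.

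For the extended generators of $S$, I would proceed analogously. For $k = 1, \ldots, \ell$, the (conjugated) generator $x_{a^\prime+k}$ anticommutes only with the already-placed $Z_{n - a^\prime - \ell + k}$ and commutes with every other fixed generator; after multiplying by appropriate already-reduced generators to strip any $Z$-components on other claimed qubits, its only remaining support lies on the unclaimed qubits together with an $X$-component on qubit $n - a^\prime - \ell + k$, and a further $O(n)$-gate reduction sends it to $X_{n - a^\prime - \ell + k}$. Then the $a - a^\prime - \ell$ new symplectic pairs $(z_{a^\prime+b^\prime+j}, x_{a^\prime+b^\prime+j})$ are sent to $(Z, X)$ on qubit $n - a - b^\prime + \ell + j$, and the $b - b^\prime + \ell$ remaining isotropic $z_{a + b^\prime - \ell + k}$ are sent to $Z$ on qubit $n - a - b + k$. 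Collecting the assignments, the $X$-support of $CSC^\dagger$ lies on $\{n - a - b^\prime + \ell + 1, \ldots, n - a^\prime - b^\prime\} \cup \{n - a^\prime - \ell + 1, \ldots, n\}$ and its $Z$-support on $\{n - a - b + 1, \ldots, n\}$, matching the claimed canonical form.

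The main obstacle is purely bookkeeping: verifying that the specific qubit assignments produce the asymmetric target described in the lemma and that, at each step, the current generator's residual support lies only on unclaimed qubits. Both facts follow from the stated symplectic relations among $\{z_i, x_i\}$, which are preserved under Clifford conjugation, together with the freedom to pre-multiply the tracked generator by already-reduced generators without changing the subspace $CSC^\dagger$. For the runtime, each of the $2a + b$ individual reductions uses $O(n)$ Clifford gates, giving total construction time $O(n(a+b))$.
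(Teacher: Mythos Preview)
Your proposal is correct. Note that the paper does not actually supply its own proof of this lemma; it states the result as semi-folklore, attributing it to the tableau techniques of \cite{aaronson2004simulation} and describing it as a minor generalization of \cite[Section~4.2]{berg2021simple} and \cite[Lemma~3.2]{grewal2023efficient}. Your iterative symplectic Gaussian elimination---reducing each anticommuting pair to $(Z_q,X_q)$ and each isotropic generator to $Z_q$ on a fresh qubit, using the preserved commutation relations (plus multiplication by already-reduced generators) to confine residual support to unclaimed qubits---is precisely the method underlying those references, and your qubit assignments correctly reproduce the asymmetric target layout claimed in the lemma.
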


Before proving the main lemma of this section, we record one final technical fact. It shows that if most of the Pauli weight of an $(a,b)$-block diagonal unitary is concentrated on a smaller subspace $\paulisupport_{a',b'}$, then the unitary is close (in operator norm) to its Pauli projection onto $\paulisupport_{a',b'}$.

\begin{lemma}\label{lem:pauli-support-to-op-bound}
Let $U$ be an $(a,b)$-block diagonal unitary, and suppose there exists $a',b' \in \N$ with $a^\prime + b^\prime \leq a + b$ such that 
\[
\frac{1}{4^n}\sum_{x \in \paulisupport_{a^\prime\!\!, b^\prime}} \abs{\tr(W_x U)}^2 \geq 1 - \frac{\eps^2}{2^{a+b}}
\]
and $\paulisupport_{a^\prime\!\!, b^\prime} \subseteq \supp(U)$.
Then $\opnorm{\Pi_{\paulisupport_{a^\prime, b^\prime}}(U) - U} \leq \eps$.
\end{lemma}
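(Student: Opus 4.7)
The plan is to exploit the $(a,b)$-block diagonal structure of $U$ to strip away an identity tensor factor, and then convert a Frobenius bound into an operator norm bound at the smaller $(a+b)$-qubit level where the conversion is lossless enough.

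First, since $U$ is $(a,b)$-block diagonal, \cref{fact:block-diagonal-equiv} gives $\supp(U) \subseteq \paulisupport_{a,b}$, and $U$ can be written as $U = I^{\otimes (n-a-b)} \otimes \wt{U}$ for some operator $\wt{U}$ on $a+b$ qubits. By hypothesis, $\paulisupport_{a',b'} \subseteq \supp(U) \subseteq \paulisupport_{a,b}$, so every Weyl operator appearing in the Pauli expansion of $\Pi_{\paulisupport_{a',b'}}(U)$ acts trivially on the first $n-a-b$ qubits. Consequently $\Pi_{\paulisupport_{a',b'}}(U) = I^{\otimes (n-a-b)} \otimes \wt{V}$ for some operator $\wt{V}$, and
\[
U - \Pi_{\paulisupport_{a',b'}}(U) = I^{\otimes (n-a-b)} \otimes (\wt{U} - \wt{V}).
\]

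Next, I bound the Frobenius norm of the difference using Parseval. Writing $U = \frac{1}{2^n}\sum_x \tr(W_x U)\, W_x$, \cref{fact:-weyl-to-frob} together with unitarity of $U$ (so $\sum_x |\tr(W_x U)|^2 = 4^n$) yields
\[
\fnorm{U - \Pi_{\paulisupport_{a',b'}}(U)}^2 \;=\; \frac{1}{2^n}\sum_{x \notin \paulisupport_{a',b'}} \abs{\tr(W_x U)}^2 \;\leq\; \frac{4^n}{2^n}\cdot \frac{\eps^2}{2^{a+b}} \;=\; \eps^2 \cdot 2^{n-a-b},
\]
where the inequality uses the hypothesis on the concentration of Pauli mass in $\paulisupport_{a',b'}$.

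Now I transfer this to the small register. By \cref{fact:frob-repeat-identity},
\[
\fnorm{U - \Pi_{\paulisupport_{a',b'}}(U)} \;=\; \sqrt{2^{n-a-b}} \cdot \fnorm{\wt{U} - \wt{V}},
\]
so $\fnorm{\wt{U} - \wt{V}} \leq \eps$. Finally, applying $\opnorm{\cdot} \leq \fnorm{\cdot}$ (\cref{fact:schatten-norm-conversion}) at the $(a+b)$-qubit level and then $\opnorm{I \otimes M} = \opnorm{M}$ (\cref{fact:frob-repeat-identity}) gives
\[
\opnorm{U - \Pi_{\paulisupport_{a',b'}}(U)} \;=\; \opnorm{\wt{U} - \wt{V}} \;\leq\; \fnorm{\wt{U} - \wt{V}} \;\leq\; \eps,
\]
as desired.

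The only subtle point is ensuring that the dimension factor $\sqrt{2^{n-a-b}}$ hidden in the Frobenius norm is canceled rather than charged against us; this is precisely what the identity tensor decomposition accomplishes, and it is crucial that $\paulisupport_{a',b'}$ sits inside $\paulisupport_{a,b}$ so that both $U$ and its projection share the same trivial factor $I^{\otimes (n-a-b)}$. Once the problem is restricted to the $(a+b)$-qubit register, the crude bound $\opnorm{\cdot} \leq \fnorm{\cdot}$ is tight enough, since the target accuracy in the hypothesis already carries the matching factor $1/\sqrt{2^{a+b}}$.
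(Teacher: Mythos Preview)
Your proof is correct and follows essentially the same approach as the paper's: strip off the $I^{\otimes(n-a-b)}$ tensor factor, bound the Frobenius norm of the difference on the $(a+b)$-qubit register via Parseval, and then pass to the operator norm using $\opnorm{\cdot}\le\fnorm{\cdot}$. The only cosmetic difference is that the paper computes the Frobenius norm directly at the $(a+b)$-qubit level after relating the trace coefficients, whereas you compute it at the $n$-qubit level and transfer via \cref{fact:frob-repeat-identity}; the two are equivalent.
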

\begin{proof}
    Because $U$ is $(a, b)$-block diagonal, we can write $U = I^{\otimes n-a-b} \otimes U^\prime$ for some $a+b$ qubit unitary $U^\prime$ (recall that an $(a,b)$-block diagonal matrix is also $(a+b, 0)$-block diagonal).
    Observe that
    \[
        \frac{1}{2^n}\tr((I^{\otimes n-a-b} \otimes W_x) \cdot U) = \frac{1}{2^n}\tr(I^{\otimes n-a-b} \otimes (W_x\cdot U^\prime)) = \frac{1}{2^{a+b}} \tr(W_x U^\prime).
    \]
    
    Now let \[\Pi_{\paulisupport_{a^\prime, b^\prime}}(U) = I^{\otimes n-a-b} \otimes \Pi_{\paulisupport_{a^\prime, b^\prime}}(U^\prime).\footnote{Technically, these $\paulisupport_{a^\prime, b^\prime}$ are different, as the number of preceding identity matrices changes from $n-a-b$ to $(a+b) - (a^\prime + b^\prime)$.}\]
    By \cref{fact:-weyl-to-frob} and the fact that $\frac{1}{4^{a+b}}\sum_{x \in \supp(U^\prime)} \abs{\tr(W_x U^\prime)}^2 = 1$, 
    \begin{align*}
        \fnorm{\Pi_{\paulisupport_{a^\prime, b^\prime}}(U^\prime)-U^\prime}
        &= \sqrt{2^{a+b}}\cdot \sqrt{ \frac{1}{4^{a+b}}\sum_{x \in \supp(U^\prime)\setminus \paulisupport_{a^\prime, b^\prime}} \abs{\tr(W_x U^\prime)}^2}\\
        &= \sqrt{2^{a+b}}\cdot\sqrt{1 - \frac{1}{4^{a+b}}\sum_{x \in \paulisupport_{a^\prime, b^\prime}} \abs{\tr(W_x U^\prime)}^2}\\
        &= \sqrt{2^{a+b}}\cdot\sqrt{1 - \frac{1}{4^n}\sum_{x \in \paulisupport_{a^\prime, b^\prime}} \abs{\tr((I^{\otimes n-a-b} \otimes W_x) \cdot  U)}^2}\\
        &\leq \eps.
    \end{align*}
    Finally, note that
    \begin{align*}
        \opnorm{\Pi_{\paulisupport_{a^\prime, b^\prime}}(U) - U}
        &= \opnorm{I^{\otimes n-a-b}\otimes\left( \Pi_{\paulisupport_{a^\prime, b^\prime}}(U^\prime) - U^\prime\right)}\\
        &= \opnorm{\Pi_{\paulisupport_{a^\prime, b^\prime}}(U^\prime)-U^\prime} && (\text{\cref{fact:frob-repeat-identity}})\\
        &\leq \fnorm{\Pi_{\paulisupport_{a^\prime, b^\prime}}(U^\prime)-U^\prime} && (\text{\cref{fact:schatten-norm-conversion}})\\
        & \leq \eps.\qedhere
    \end{align*}
\end{proof}

By combining \cref{lem:symplectic-gram-schmidt,lem:standard-symplectic-basis,lem:pauli-support-to-op-bound}, we can prove \cref{lem:clifford-to-block-junta}.

\cliffordtoblock*
\begin{proof}
    Let us assume we know what $S$ is for a moment.
    \Cref{lem:symplectic-gram-schmidt} guarantees that $T$ and $S$ satisfy the requirements to run \cref{lem:standard-symplectic-basis} such that there exists a Clifford circuit $C$ where $CTC^\dagger = \paulisupport_{a^\prime, b^\prime}$ and $CSC^\dagger \subseteq \paulisupport_{a, b}$ for some $a^\prime +b^\prime \leq a+b$.
    Let $U_\text{BD} \coloneqq C U C^\dagger$.
    Using \cref{lem:pauli-support-to-op-bound,lem:round-to-unitary} we see that $U_\text{BD}$ is $(a^\prime, b^\prime, \eps)$-approximately block diagonal.

    Of course, this $C$ requires knowledge of $S$, which we don't actually have, but it still exists nevertheless, as does $U_{\text{BD}}$.
    Now let $\widetilde{C}$ be the unitary that \emph{just} takes $CTC^\dagger = \paulisupport_{a^\prime, b^\prime}$.\footnote{This is just a weaker statement than that of \cref{lem:standard-symplectic-basis}.}
    It can be found with just generators of $T$ in time $O(nk^2)$.
    Importantly, take (Clifford) unitary $C_2 = C \widetilde{C}^\dagger$, which normalizes
    \[
        C_2 \left(\paulisupport_{a^\prime, b^\prime}\right) C_2^\dagger = C_2 \left(\widetilde{C} T \widetilde{C}^\dagger \right) C_2^\dagger = C T C^\dagger = \paulisupport_{a^\prime, b^\prime}.
    \]
    Observe that this also implies
    \[
        C_2^\dagger \left(\paulisupport_{a^\prime, b^\prime}\right) C_2 = \paulisupport_{a^\prime, b^\prime}.
    \]
    
    As $U_\text{BD}$ is within $\eps$-close to some $(a^\prime, b^\prime)$-block diagonal unitary $V$ in operator distance, by unitary invariance
    \[
        \opnorm{U_\text{BD} - V} = \opnorm{C_2^\dagger U_{\text{BD}} C_2 - C_2^\dagger V C_2} \leq \eps.
    \]
    
    Finally, by \cref{fact:block-diagonal-equiv} note that $\supp(V) \subseteq \paulisupport_{a^\prime, b^\prime}$.
    Therefore $\supp\left(C_2^\dagger V C_2\right) \subseteq \paulisupport_{a^\prime, b^\prime}$, so $C_2^\dagger V C_2$ is \emph{also} $(a^\prime,b^\prime)$-block diagonal, by \cref{fact:block-diagonal-equiv} once again.
    As we established that $C_2^\dagger U_{\text{BD}} C_2$ is close to this $(a^\prime,b^\prime)$-block diagonal unitary $C_2^\dagger V C_2$, it follows that \[C_2^\dagger U_{\text{BD}} C_2 = \widetilde{C} C^\dagger \left(C U C^\dagger\right) C \widetilde{C}^\dagger = \widetilde{C} U \widetilde{C}^\dagger\] is $(a^\prime, b^\prime, \eps)$-approximately block-diagonal, completing the proof.
\end{proof}

\begin{remark}
The techniques of \cref{cor:sampling-without-inverse,cor:sampling-with-inverse,lem:clifford-to-block-junta} immediately yield property testers for $k$-Pauli dimensionality: an $O(k/\eps^2)$-query tester with only forward access to $U$, and an $O(k \log k / \eps)$-query tester when queries to the inverse $U^\dagger$ are also allowed, both in \nameref{def:frob-dist}.
Related results for $k$-juntas are known: Chen, Nadimpalli, and Yuen~\cite{chen2023testing} give a $\widetilde{O}(\sqrt{k}/\eps)$-query tester in the same distance measure, assuming inverse access.
We conjecture that a $\widetilde{O}(\sqrt{k}/\eps)$-query property tester for $k$-Pauli dimensionality should also be possible.
\end{remark}

\section{Learning \texorpdfstring{$k$-Pauli}{k-Pauli} Dimensionality}\label{sec:complete-algo}

In this section, we present our algorithms for learning $k$-Pauli-dimensional unitary channels and quantum $k$-juntas. 
We build on the previous two sections, which reduced the problem to learning approximately block-diagonal unitaries and gave algorithms for that task. This section is organized into three parts. First, we describe a base algorithm that learns $k$-Pauli-dimensional unitaries whose query complexity scales quadratically with the desired precision. Next, we apply the bootstrap technique of \cite{haah2023query} to upgrade this algorithm to achieve Heisenberg scaling. Finally, we show how our results yield a query-optimal algorithm for learning quantum juntas.

\subsection{Base Algorithm for \texorpdfstring{$k$-Pauli}{k-Pauli} Dimensionality}\label{subsec:base-algo}

We begin by giving our algorithm for learning $k$-Pauli-dimensional unitary channels whose query complexity scales quadratically with the desired precision.

\begin{theorem}\label{thm:pauli-dimension-base}
Let $a,b,n \in \N$ with $a+b \leq n$.
Let $U \in \C^{2^n \times 2^n}$ be a $k$-Pauli-dimensional unitary whose support is a $k$-dimensional Pauli subspace that admits a decomposition into a $2a$-dimensional symplectic part and a $b$-dimensional isotropic part, i.e., $\supp(C U) C^\dagger = \paulisupport_{a,b}$ for some Clifford circuit $C$.
There is a tomography algorithm that, given query access to $U \in \C^{2^n \times 2^n}$ and parameters $\delta, \eps > 0$, outputs an estimate $V$ satisfying $\distphop(U, V) \leq \eps$ with probability at least $1-\delta$. 
Moreover, the algorithm satisfies the following properties:
\begin{itemize}
    \item $\supp(V) \subseteq \supp(U)$.
    \item The algorithm makes at most $O\left(2^{a+b}\left(2^{a}+ b\right) \frac{\log (1/\delta)}{\eps^2}\right)$ queries to $U$ and only requires forward access, i.e., it does not require $U^\dagger$ or controlled-$U$).
    \item The algorithm runs in time $\poly\!\left(n,\, 2^{2a+b},\, \varepsilon^{-1},\, \log \delta^{-1}\right)$.
    \item The algorithm uses $\max\{2n - 2a -b, n\}$ additional qubits of space.
\end{itemize}
\end{theorem}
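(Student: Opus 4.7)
The plan is to chain three ingredients from the preceding sections: support learning (\cref{ssec:learning-support}), the Clifford reduction to block-diagonal form (\cref{lem:clifford-to-block-junta}), and block-diagonal tomography (\cref{thm:approx-block-diag}). Fix an intermediate accuracy $\eps' = c\eps$ for a sufficiently small universal constant $c$. First I would prepare copies of the Choi state $\ket{\Phi_U}$ (one query to $U$ per copy) and measure in the Bell basis, so the outcomes are distributed as $|\alpha_x|^2$ over $\supp(U)$. Running \cref{cor:sampling-with-inverse} (when $U^\dagger$ is available) or \cref{cor:sampling-without-inverse} (when not) at appropriately chosen accuracy $\Theta(\eps'^2/2^{a+b})$ and failure probability $\delta/3$ returns a subspace $T \subseteq \supp(U)$ meeting the hypothesis of \cref{lem:clifford-to-block-junta}. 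Second, apply \cref{lem:clifford-to-block-junta} to $T$ to obtain a Clifford circuit $\widetilde{C}$ of size $O(nk)$ such that $\widetilde{C} U \widetilde{C}^\dagger$ is $(a',b',\eps')$-approximately block-diagonal for some $a'+b' \leq a+b$ with $2a'+b' \leq 2a+b$. Third, simulate each query to $\widetilde{C} U \widetilde{C}^\dagger$ by sandwiching a query to $U$ with $\widetilde{C}$ and $\widetilde{C}^\dagger$, and feed this to \cref{thm:approx-block-diag} at accuracy $\eps'$ and failure probability $\delta/3$, obtaining a $(a',b')$-block-diagonal unitary $V'$. Output $V \coloneqq \widetilde{C}^\dagger V' \widetilde{C}$.

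For correctness, \cref{thm:approx-block-diag} gives $\distphop(V', \widetilde{C} U \widetilde{C}^\dagger) \leq O(\eps')$, so unitary invariance of $\distphop$ yields $\distphop(V, U) \leq O(\eps') \leq \eps$ after adjusting $c$. For the support condition, \cref{fact:block-diagonal-equiv} gives $\supp(V') \subseteq \paulisupport_{a',b'}$, and since $\widetilde{C}$ is constructed via \cref{lem:standard-symplectic-basis} to map $T$ onto $\paulisupport_{a',b'}$, conjugating by $\widetilde{C}^\dagger$ yields $\supp(V) \subseteq T \subseteq \supp(U)$. A union bound over the three failure events keeps the total failure probability at $\delta$, and the runtime is dominated by the $\poly(n, 2^{2a+b}, \eps^{-1}, \log \delta^{-1})$ cost of \cref{thm:approx-block-diag}.

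For the query complexity, \cref{thm:approx-block-diag} contributes $O(2^{2a'+b'}/\eps'^2)\log(1/\delta) = O(2^{2a+b}/\eps^2)\log(1/\delta)$, which dominates in both regimes. With inverse access, \cref{cor:sampling-with-inverse} adds only $O(k \cdot 2^{(a+b)/2}/\eps \cdot \log(k/\delta))$ queries, absorbed into the dominant term. Without inverse access, \cref{cor:sampling-without-inverse} contributes $O((k + \log(1/\delta)) \cdot 2^{a+b}/\eps^2)$ queries; splitting $k = 2a+b$ and using $a \leq 2^a$ yields $(2a+b) \cdot 2^{a+b} \leq 2^{2a+b+1} + b \cdot 2^{a+b}$, which (after standard amplification from constant to $\delta$ failure probability) gives the claimed $O((2^{2a+b} + 2^{a+b} \cdot b)\log(1/\delta)/\eps^2)$ bound. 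The qubit count $\max\{2n-2a-b, n\}$ follows by comparing the $2n-2a-b$ ancillas inside \cref{thm:approx-block-diag} to the $n$ qubits needed to prepare the Choi state.

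The main obstacle I anticipate is threading the error budget cleanly: the accuracy demanded by \cref{lem:clifford-to-block-junta} (routed through \cref{lem:pauli-support-to-op-bound}) scales as $\Theta(\eps'^2/2^{a+b})$, and one must verify that this stringent requirement does not inflate the support-learning cost above the $O(2^{2a+b}/\eps^2)$ budget---the additive $2^{a+b}\cdot b$ term in the forward-access bound is precisely the residue of this accounting. A secondary subtlety is that the Choi-state preparation must be invertible in order to use fixed-point amplification, which is why the improved scaling is only available when queries to $U^\dagger$ are permitted.
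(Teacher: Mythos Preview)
Your proposal is correct and follows essentially the same three-phase structure as the paper's proof: Bell-sample the Choi state to learn the support, apply \cref{lem:clifford-to-block-junta} to conjugate into approximate block-diagonal form, then invoke \cref{thm:approx-block-diag}. If anything, your error accounting is slightly more careful than the paper's: you correctly trace the required support-learning accuracy through \cref{lem:pauli-support-to-op-bound} to $\Theta(\eps'^2/2^{a+b})$, whereas the paper's proof quotes \cref{lem:clifford-to-block-junta} (whose stated hypothesis is linear in $\eps$) yet still arrives at the $1/\eps^2$ scaling in the forward-access bound---the quadratic dependence you identify is the right one, and your explicit split $(2a+b)\cdot 2^{a+b} \le 2\cdot 2^{2a+b} + b\cdot 2^{a+b}$ cleanly isolates the additive $2^{a+b}\cdot b$ term.
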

\begin{proof}
    To reduce from $k$-Pauli dimensional to $(a,b,\eps)$-approximately block diagonal, we will sample from the Choi state of $U$ to learn elements of its support. This requires $n$ extra ancilla qubits.
    Call the span of our samples (the number of which is to be determined momentarily) $A$.
    Using \cref{lem:clifford-to-block-junta} we need
    \[
        \frac{1}{4^n}\sum_{x \in A} \abs{\tr(W_x U)}^2 \geq 1 - \frac{\eps^2}{K \cdot 2^{a+b}}
    \]
    to find a Clifford circuit $C$ such that $C U C^\dagger$ is $(a,b,\eps/K^\prime)$-approximately block diagonal for sufficiently large constant $K^\prime$.
    By \cref{cor:sampling-with-inverse,cor:sampling-without-inverse} we only need $O\left( \frac{\sqrt{2^{a+b}}}{\eps} \cdot (2a+b) \cdot \log(\frac{2a+b}{\delta})\right)$ queries with the inverse and $O\left( \frac{2^{a+b}}{\eps^2} \cdot (2a+b) \cdot \log(1/\delta)\right)$ without the inverse.
    Finally, apply \cref{thm:approx-block-diag} to learn this $(a,b, \eps/K^\prime)$-approximately block diagonal unitary channel to $\eps$ in $\distphop(\cdot, \cdot)$.
    Each query to $C U C^\dagger$ only queries $U$ once so we end up using $O\left(2^{2a+b}/\eps^2\right)$ queries as desired.

    We need $n$ ancilla qubits to construct the Choi state of $U$, and $2n-(2a+b)$ qubits from \cref{thm:approx-block-diag}.
    The Clifford circuit from \cref{lem:clifford-to-block-junta} requires $O(n(a+b)^2)$ time.
    The time complexity of either \cref{cor:sampling-with-inverse} or \cref{cor:sampling-without-inverse} is $O\left( n \cdot \frac{\sqrt{2^{a+b}}}{\eps} \cdot (2a+b) \cdot \log(\frac{2a+b}{\delta})\right)$ with the inverse, and $O\left(\frac{n}{\eps^4} 4^{a+b} \cdot (2a+b)^2 \log^2(1/\delta)\right)$ without the inverse, respectively.
    Combined with the time complexity of \cref{alg:base-approx-bd}, we get a total complexity of $\poly\!\left(n,\, 2^{2a+b},\, \varepsilon^{-1},\, \log \delta^{-1}\right)$.
\end{proof}

Note that our algorithm admits two different time complexities: one when we only have access to $U$, and another when we additionally have $U^\dagger$ available.
The distinction arises solely in the support-learning phase, depending on whether we use the algorithm from \cref{subsec:learning-without-the-inverse} or the more efficient variant from \cref{subsec:learning-with-the-inverse}.
However, while \cref{subsec:learning-with-the-inverse} is also more query efficient, the query complexity of \cref{thm:pauli-dimension-base} is ultimately dominated by \cref{thm:approx-block-diag} so the only benefit is a better time complexity.\footnote{If, however, one could improve \cref{thm:approx-block-diag} to have query complexity $Q = o(2^{a+b}(2^a+b)$ then using \cref{cor:sampling-with-inverse} would lead to an improved query complexity of $Q$ as well. This would not hold when using \cref{cor:sampling-without-inverse}, as it would become the dominating subroutine.}

\subsection{Bootstrapping to Heisenberg Scaling}

To achieve the optimal $1/\eps$ Heisenberg scaling, we augment \cref{thm:pauli-dimension-base} using the bootstrapping technique of \cite{haah2023query}. 
The high-level idea is as follows. Suppose we first learn an approximation $V_1$ with $\distphop(U, V_1) \leq \eps$.
Then $UV_1^\dagger$ is itself $\eps$-close to the identity. 
Applying the base algorithm to $(UV_1^\dagger)^2$ yields a unitary $V_2$ such that $\distphop(V_2, (UV_1^\dagger)^2) \leq \eps$.
Taking a square root, $\sqrt{V_2}$ provides an approximation to $UV_1^\dagger$ that is accurate up to $\eps/2$, so that $\distphop(U, \sqrt{V_2} V_1) \leq \eps/2$.
Iterating this process with higher powers progressively reduces the error, ultimately yielding Heisenberg scaling.
The procedure is captured in the following lemma.

\begin{lemma}[{\cite[Theorem 3.3, Remark 3.4]{haah2023query}}]\label{lem:bootstrap}
Suppose we are given oracle access to an unknown unitary $U \in \C^{d \times d}$ belonging to a subgroup $G$ of unitaries that is closed under fractional powers (i.e., $U^{1/p} \in G$ for all $U \in G$ and $p \in \N$).
Assume further that there exists an algorithm $\calA$ that, given oracle access to $U \in G$, outputs a unitary $V \in G$ with with $\distphop(U,V) \leq \frac{1}{600}$ with probability at least $0.51$.
Then, for any error parameters $\eps, \delta \in (0,1)$, there is an algorithm that outputs a unitary $V^\prime \in G$ with the following guarantees:
\begin{itemize}
\item $\distphop(U, V^\prime) \leq \eps$ with probability at least $1-\delta$;
\item $\Ex\left[\distphop(U, V^\prime)^2\right] \leq (1+32\delta)\eps^2$;
\item $V^\prime \in G$.
\end{itemize}
Moreover, if $\calA$ uses $Q$ queries, then the new algorithm uses only $O\left(\frac{Q}{\eps}\log(1/\delta)\right)$ queries. 

Let $T$ denote the runtime of $\calA$ (to achieve constant error and constant success probability), $D$ the time to compute distances between elements in $G$, $P$ the time to compute a fractional power of a unitary in $G$, $M$ the time to multiply elements of $G$, and $S$ the time to synthesize a circuit for elements of $G$ given their classical descriptions.
Then the overall runtime of the new algorithm is

\[
    O\left(\left(\frac{S \cdot Q}{\eps} + \left(T+P+M + D\log (1/\delta)\right)\log(1/\eps)\right) \log(1/\delta)\right). \qedhere
\]
\end{lemma}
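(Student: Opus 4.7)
The plan is to prove the lemma in two stages: first amplify $\calA$'s success probability to an arbitrary constant close to $1$ at essentially the same constant accuracy, then run an iterative Heisenberg-scaling loop that estimates progressively higher powers of the residual $U \hat{U}^\dagger$. This mirrors, at a high level, the bootstrapping philosophy already used in \cref{subsec:base-algo}, but the technical work is now about precision doubling rather than about Pauli structure.

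For the amplification step, I would run $\calA$ independently $k = O(1)$ times to obtain candidates $V_1, \dots, V_k \in G$ and output any $V_{i^\star}$ such that $\distphop(V_{i^\star}, V_j) \leq 2/600$ for at least a majority of indices $j$. A Chernoff bound over the independent trials guarantees that at least half of the $V_j$'s are $(1/600)$-close to $U$ with probability $1-\delta_0$, and the triangle inequality for $\distphop$ then forces $\distphop(V_{i^\star}, U) \leq 3/600$ whenever such a $V_{i^\star}$ exists. The selection is purely classical and costs $O(k^2 D)$ time, where $D$ is the cost of evaluating $\distphop$ on classical descriptions in $G$.

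For the Heisenberg loop, let $\eta_i$ denote the guaranteed accuracy of the running estimate $\hat{U}_i$, starting with $\eta_0 = O(1)$ from the amplified $\calA$. At step $i$ I would form the residual $R_i \coloneqq U \hat{U}_i^\dagger \in G$, which by unitary invariance of $\distphop$ satisfies $\distphop(R_i, I) \leq \eta_i$, and choose an integer $p_i = \Theta(1/\eta_i)$ so that $R_i^{p_i}$ still lies in a fixed operator-norm neighborhood of $I$. A single query to $R_i^{p_i}$ can be simulated using $p_i$ queries to $U$ interleaved with the known unitary $\hat{U}_i^\dagger$, so I can run amplified $\calA$ on $R_i^{p_i}$ to obtain $W_i \in G$ with $\distphop(W_i, R_i^{p_i}) \leq 1/600$, and then set $\hat{U}_{i+1} \coloneqq W_i^{1/p_i} \hat{U}_i$ using the principal $p_i$-th root in $G$, which exists by the closure-under-fractional-powers hypothesis. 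A Lipschitz bound on the principal root near $I$ gives $\distphop(W_i^{1/p_i}, R_i) = O(1/p_i)$, hence $\eta_{i+1} \leq \eta_i / c$ for a universal constant $c > 1$, yielding geometric decay. Iterating for $T = O(\log(1/\eps))$ rounds drives $\eta_T \leq \eps$; because the $p_i$ grow geometrically, the total query count is dominated by the last iteration and telescopes to $\sum_i p_i Q = O(Q/\eps)$. Amplifying each iteration to failure probability $\delta/T$ contributes the $\log(1/\delta)$ factor, and the runtime terms $P, M, D, S$ appear at the obvious steps (root, product, distance, synthesis).

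The main obstacle I anticipate is establishing the Lipschitz property of the principal $p$-th root in $\distphop$ uniformly in $p$, with constant of order $1/p$, since this is precisely what makes each iteration genuinely shrink the error. One has to be careful about the global phase ambiguity in $\distphop(\cdot,\cdot)$, because the principal root of $e^{i\theta} W$ and of $W$ differ by $e^{i\theta/p}$ rather than by $e^{i\theta}$; fortunately this only affects the overall phase of $\hat{U}_{i+1}$, which $\distphop$ already ignores, so the scheme is robust. The second-moment bound $\Ex[\distphop(U,V')^2] \leq (1 + 32\delta) \eps^2$ then follows by separating the success event (contributing at most $\eps^2$) from its complement (where $\distphop(\cdot,\cdot) \leq 2$ trivially) and weighting the latter by the overall failure probability $\delta$.
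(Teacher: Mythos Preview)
The paper does not supply a proof of this lemma; it is quoted as a black box from \cite[Theorem 3.3, Remark 3.4]{haah2023query} and then invoked in the proof of \cref{cor:pauli-dimension-bootstrap-finegrain}. Your outline---amplify $\calA$ by a distance-based majority vote, then iteratively learn powers of the residual $U\hat U_i^\dagger$ and extract principal roots---is precisely the mechanism the paper sketches informally at the top of Section~6.2, and the query telescoping and runtime accounting are correct.

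There is, however, a real gap in your derivation of the second-moment guarantee. Splitting into success and failure and bounding the failure contribution by the trivial $\distphop(\cdot,\cdot)\le 2$ yields at best
\[
\Ex\bigl[\distphop(U,V')^2\bigr]\;\le\;(1-\delta)\eps^2+4\delta,
\]
and this is \emph{not} dominated by $(1+32\delta)\eps^2$ once $\eps$ is small: equating the two would force $\eps\ge 2/\sqrt{33}$. Nor can you recover the bound by simply shrinking the target failure probability to $\delta'=\Theta(\delta\eps^2)$, since that inflates the query count by an unwanted $\log(1/\eps)$ factor. Obtaining $(1+32\delta)\eps^2$ together with the stated $O\!\bigl(\tfrac{Q}{\eps}\log\tfrac{1}{\delta}\bigr)$ query bound requires exploiting the iterative structure: when iteration~$i$ fails, the current estimate $\hat U_i$ still has error at most the previous guarantee $\eta_{i-1}$ (not $2$), and by staging the per-round failure probabilities and errors appropriately one arranges that the total failure contribution to the second moment is $O(\delta\eps^2)$. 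This is the content of \cite[Remark 3.4]{haah2023query} and is genuinely more delicate than a single success/failure split.
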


An immediate corollary of \cref{thm:pauli-dimension-base,lem:bootstrap} is an optimal algorithm for learning $k$-Pauli-dimensional unitary channels.
We emphasize that, for the bootstrapping to work, we crucially rely on the fact that the output $V$ of \cref{thm:pauli-dimension-base} satisfies $\supp(V) \subseteq \supp(U)$.
In other words, our learner is \emph{proper}: it always returns a unitary with support contained in that of $U$.

\begin{corollary}\label{cor:pauli-dimension-bootstrap-finegrain}
Let $a,b,n \in \N$ with $a+b \leq n$.
Let $U \in \C^{2^n \times 2^n}$ be a $k$-Pauli-dimensional unitary whose support is a $k$-dimensional Pauli subspace that admits a decomposition into a $2a$-dimensional symplectic part and a $b$-dimensional isotropic part, i.e., $\supp(CUC^\dagger) = \paulisupport_{a,b}$ for some Clifford circuit $C$.
There is a tomography algorithm that, given query access to $U \in \C^{2^n \times 2^n}$ and parameters $\delta, \eps > 0$, outputs an estimate $V$ satisfying $\distdiamond(U, V) \leq \eps$ with probability at least $1-\delta$. 
Moreover, the algorithm satisfies the following properties:
\begin{itemize}
    \item $\supp(V) \subseteq \supp(U)$.
    \item The algorithm makes at most $O\left(2^{a+b}(2^a+b) \frac{\log (1/\delta)}{\eps}\right)$ to $U$ and only requires forward access, i.e., it does not require $U^\dagger$ or controlled-$U$).
    \item The algorithm runs in time $\poly\!\left(n,\, 2^{2a+b},\, \varepsilon^{-1},\, \log \delta^{-1}\right)$.
    \item The algorithm uses between $n$ and $2n-1$ additional qubits of space.
\end{itemize}
\end{corollary}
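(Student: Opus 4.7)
The plan is to plug the base learner of \cref{thm:pauli-dimension-base} into the bootstrapping framework of \cref{lem:bootstrap}, then translate the output from phase-aligned operator distance to diamond distance via \cref{fact:op-to-diamond}. The heart of the argument is identifying the correct ambient group $G$ of unitaries and verifying that it is closed under fractional powers. I would take $G$ to be the set of unitaries whose Pauli support is contained in $\supp(U)$. Because $\supp(U)$ is a Pauli subgroup (by hypothesis it equals $\paulisupport_{a,b}$ up to Clifford conjugation), the $\C$-linear span of $\{W_x : x \in \supp(U)\}$ is a finite-dimensional $*$-subalgebra of $\C^{2^n \times 2^n}$: products of Paulis in the support stay in the support up to a global phase. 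Hence every polynomial in $U$ lies in $G$, and by the continuous functional calculus so does $U^{1/p}$ for each $p \in \N$. This is also where the properness guarantee $\supp(V) \subseteq \supp(U)$ from \cref{thm:pauli-dimension-base} becomes essential, since it is what ensures the bootstrap iterates stay inside $G$.

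Next, I would instantiate \cref{thm:pauli-dimension-base} with fixed constants $\eps_0 = 1/600$ and $\delta_0 = 0.49$, which produces a unitary $V \in G$ satisfying $\distphop(U,V) \leq 1/600$ with probability at least $0.51$, using $O(2^{2a+b})$ queries when $U^\dagger$ is available and $O(2^{2a+b} + 2^{a+b} b)$ queries otherwise. This exactly matches the hypothesis of \cref{lem:bootstrap}, whose application then yields an algorithm that outputs $V' \in G$ with $\distphop(U,V') \leq \eps/2$ with probability at least $1-\delta$, at a multiplicative query cost of $O(\log(1/\delta)/\eps)$. Applying \cref{fact:op-to-diamond} upgrades this to $\distdiamond(U,V') \leq \eps$ at the expense of a constant factor, giving the claimed query complexities $O(2^{2a+b}/\eps \cdot \log(1/\delta))$ and $O((2^{2a+b} + 2^{a+b} b)/\eps \cdot \log(1/\delta))$. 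The support constraint $\supp(V') \subseteq \supp(U)$ is inherited because $V' \in G$.

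The remaining bookkeeping consists of reading off the runtime from \cref{lem:bootstrap} and verifying the space bound. After conjugating by the Clifford $\widetilde C$ from \cref{lem:clifford-to-block-junta}, every element of $G$ is determined by a $2^{a+b} \times 2^{a+b}$ block, so the auxiliary operations appearing in \cref{lem:bootstrap} --- computing $\distphop$, fractional powers, multiplication, and circuit synthesis --- each run in $\poly(n, 2^{2a+b})$ time. Combined with the base runtime $T = \poly(n, 2^{2a+b}, \eps^{-1}, \log\delta^{-1})$ from \cref{thm:pauli-dimension-base}, the overall runtime is $\poly(n, 2^{2a+b}, \eps^{-1}, \log\delta^{-1})$. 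The space bound falls out of the fact that each call to \cref{thm:pauli-dimension-base} uses between $n$ and $2n-1$ ancillas and the bootstrap introduces no further workspace qubits.

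The main obstacle is the algebraic closure of $G$ under fractional powers. This is where the two structural features of our setting --- that $\supp(U)$ is a Pauli \emph{subgroup} (so that the associated operator space is a $*$-algebra) and that our base learner is \emph{proper} (so that it only ever returns unitaries in this algebra) --- are indispensable; without either, the bootstrap iterates could drift outside of $G$ and the invariant hypothesis of \cref{lem:bootstrap} would fail to be maintained.
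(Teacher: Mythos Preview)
Your proposal is correct and follows essentially the same route as the paper: define $G$ as the unitaries supported on $\langle\supp(U)\rangle$, observe that $G$ is closed under fractional powers (you make this explicit via the $*$-algebra structure and functional calculus, whereas the paper simply asserts it), feed the proper base learner of \cref{thm:pauli-dimension-base} into \cref{lem:bootstrap}, and convert to diamond distance via \cref{fact:op-to-diamond}. The runtime bookkeeping---reducing operations on $G$ to operations on $2^{a+b}\times 2^{a+b}$ blocks via the Clifford conjugation of \cref{lem:clifford-to-block-junta}---also matches the paper's argument, which additionally cites Solovay--Kitaev for circuit synthesis of elements of $G$.
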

\begin{proof}
    Let $A = \langle \supp(U)\rangle$ be the $k$-dimensional subspace that contains $\supp(U)$ and let $G$ be the unitary subgroup involving unitaries whose support lies within $A$.
    Because \cref{thm:pauli-dimension-base} always outputs an element of $G$, and because $G$ is closed under fractional powers, we can combine \cref{thm:pauli-dimension-base} and \cref{lem:bootstrap} to get Heisenberg scaling.
    Finally, use \cref{fact:op-to-diamond} to convert the distance to diamond distance.

    Using our reduction of \cref{lem:clifford-to-block-junta} to the block-diagonal case, we can compute the distance between elements of $G$, arbitrary fractional powers, and multiply elements all in time $O(8^a \cdot 2^b + n(2a+b)^2)$.\footnote{This technically requires full knowledge of $G$, rather than an approximation of $G$ like we will get from \cref{lem:clifford-to-block-junta,thm:pauli-dimension-base}. However, we only need to compute these values relative to the outputs of \cref{thm:pauli-dimension-base}, all of which we always know the exact support of.}
    Constructing an arbitrary element of $G$ can be done using $O(4^a \cdot 2^b \cdot \poly\log(2^{a+b}/\eps))$ many additional gates using the Solovay-Kitaev theorem.
    Including the runtime of \cref{thm:pauli-dimension-base}, we find the total runtime to be $\poly\!\left(n,\, 2^{2a+b},\, \varepsilon^{-1},\, \log \delta^{-1}\right)$.
\end{proof}

When $b = O(2^a)$, the algorithm is provably query-optimal (up to a constant).
An important instance of this is our algorithm for quantum $k$-juntas (\cref{subsec:juntas}), where $a=k$ and $b=0$.
We conjecture that the version without inverse access is query optimal for all parameters. Establishing this would likely require techniques along the lines of \cite{tang2025amplitudeamplificationestimationrequire}.

\begin{remark}[Time complexities of our algorithm]\label{remark:time-complexity}
    While calculating the exact runtime of \cref{cor:pauli-dimension-bootstrap-finegrain} is tricky, a rough upper bound on the complexity is:
    \[
        \widetilde{O}\left(\left(2^{3a+b} \left(\frac{2^{a+b}}{\eps} + 8^b + \log(1/\delta)\right) + n\left(4^{a+b} + \log(1/\delta)\right)\right) \log(1/\delta)\right)
    \]
    using only forward access to $U$.
    If one were to use \cref{cor:sampling-with-inverse}, then the resulting runtime would be
    \[
        \widetilde{O}\left(\left(2^{3a+b}\left(\frac{2^{a+b}}{\eps} + 8^b + \log(1/\delta)\right) + n\left(2^{2a+b} + \log(1/\delta)\right)\right) \log(1/\delta)\right)
    \]
    using queries to $U^\dagger$ as well.
\end{remark}

Finally, let us state the performance of our algorithm solely in terms of $k$ (as $a$ and $b$ are generally unknown quantities). Because $2a+b = k$ and $a+b \leq k$, this follows easily from \cref{cor:pauli-dimension-bootstrap-finegrain}.

\begin{corollary}\label{cor:pauli-dimension-bootstrap}
Let $U \in \C^{2^n \times 2^n}$ be a $k$-Pauli dimensional unitary.
There is a tomography algorithm that, given query access to an $k$-Pauli dimensional unitary $U \in \C^{2^n \times 2^n}$ as well as parameters $\delta, \eps > 0$, outputs an estimate $V$ satisfying $\distdiamond(U, V) \leq \eps$ with probability at least $1-\delta$. 
Moreover, the algorithm satisfies the following properties:
\begin{itemize}
    \item $\supp(V) \subseteq \supp(U)$.
    \item the algorithm makes at most $O\left(2^k \cdot k\frac{\log (1/\delta)}{\eps}\right)$ queries to $U$ and only requires forward access, i.e., it does not require $U^\dagger$ or controlled-$U$).
    \item The algorithm runs in time
        \[
        \widetilde{O}\left(4^k\left(n + \frac{1}{\eps}\right)\log(1/\delta) + \left(n + \left(2 \sqrt{2}\right)^k \right) \log^2(1/\delta)\right)
    \] when only given forward access to $U$. 
    \item The algorithm runs in time 
    \[
        \widetilde{O}\left(2^k\left(n + \frac{2^k}{\eps}\right)\log(1/\delta) + \left(n + \left(2 \sqrt{2}\right)^k \right) \log^2(1/\delta)\right)
    \] when given access to both $U$ and $U^\dagger$. 

    \item The algorithm uses between $n$ and $2n-1$ additional qubits of space.
\end{itemize}
\end{corollary}

\section{Learning \texorpdfstring{$s$}{s}-Pauli Sparsity}
\label{sec:sparsity}

In this section, we present our algorithm for learning unitaries with sparse Pauli support. 
At a high level, we reduce the problem to quantum state learning. 
In particular, we show that obtaining a $\poly(s)$-time \emph{improper} learning algorithm reduces to the following state tomography task: given copies of a state $\ket{\psi}$ supported on $s$ computational basis states, output an approximate classical description of $\ket{\psi}$ in $\poly(s)$ time.
We then provide an optimal algorithm for this state tomography task.

The resulting algorithm outputs an estimate $\hat{A}$ that is not necessarily unitary. 
To obtain a proper learner, one must round $\hat{A}$ to a nearby unitary that remains sparse and close to the unknown unitary $U$. 
It is unclear how to perform this rounding efficiently in general, and we leave this as an interesting open problem. 
We do, however, identify several natural settings in which efficient rounding is possible; these are discussed at the end of this section.
All circuit classes considered in this work fall into these settings, and hence all of our resulting learning algorithms run in polynomial time.
We note that the classical rounding step can be avoided altogether if one only needs to apply a quantum channel that close to the unknown unitary; see \cref{remark:quantum-polar} for details. 

\begin{definition}[Pauli sparsity]
    A matrix $A$ is $s$-Pauli sparse if $\abs{\supp(A)} \leq s$.
\end{definition}

The following lemma establishes that we can efficiently learn the support of the unknown sparse unitary.

\begin{lemma}\label{lem:sparse-sample-bound}
    Given an unknown $s$-Pauli sparse $n$-qubit unitary $U = \sum_{x \in \supp(U)} \alpha_x W_x$, using $2\frac{s+\log(1/\delta)}{\eps^2}$ queries, $O\left(\frac{n}{\eps^2}(s + \log(1/\delta))\right)$ time, and $n$-ancilla qubits, one can learn a set $S \subseteq \supp(U)$ such that \[\sum_{x \not \in S} \abs{\alpha_x}^2 \leq \eps^2.\]
\end{lemma}
\begin{proof}
    We will use Bell sampling on $U$ to learn $\supp(U)$. Let $Y_1, \dots, Y_m$ be (dependent) the Bernoulli random variable that is $1$ if we learn a new element of $\supp(U)$ \emph{or} if we have already learned enough element of $\supp(U)$ to satisfy $\sum_{x \not \in S} \abs{\alpha_x}^2 \leq \eps^2$.
    Otherwise let $Y_i$ be zero.
    Observe that if $\sum_{i=1}^m Y_i \geq s$ then we have succeeded, as we have either learned all of $\supp(U)$ or $\sum_{x \not \in S} \abs{\alpha_x}^2 \leq \eps^2$.
    
    We can see that $\Pr\left[Y_i = 1\right] \geq \eps^2$, independent of the other random variables.
    So we can define i.i.d Bernoulli random variables $Z_i$ such that $Y_i = Z_i + E_i$ for $E_i = 1$ iff $\sum_{x \not \in S} \abs{\alpha_x}^2 \leq \eps^2$ \emph{and} $Z_i = 0$, and $E_i = 0$ otherwise.
    It follows by \cref{lemma:bernoulli_bound_mult} that $\Pr\left[\sum_{i=1}^m Y_i \leq s\right] \leq \Pr\left[\sum_{i=1}^m Z_i \leq s\right] \leq \delta$ if $m = \frac{2}{\eps^2}(s + \log(1/\delta))$.
\end{proof}

Similar to \cref{subsec:learning-with-the-inverse}, with the inverse one can use \cref{lem:fixed-point-amplification} to get Heisenberg scaling using $O(\frac{s}{\eps}\log(s/\delta))$ query complexity.

Next, we give an optimal state tomography algorithm for states supported on a small number of computational basis states. 

\begin{lemma}[Copy-optimal tomography of sparse quantum states]\label{lem:sparse-tomo-state}
Let $\ket{\psi}$ be an $n$-qubit quantum state supported on at most $s$ computational basis states.
Given copies of $\ket{\psi}$, there is an algorithm that outputs a classical description of a state $\ket{\hat{\psi}}$ such that, with probability at least $1-\delta$,
\begin{itemize}
    \item $\ket{\hat{\psi}}$ is $\eps$-close to $\ket{\psi}$ in trace distance, for $\eps \in (0,1]$;
    \item $\ket{\hat{\psi}}$ is supported on a subset of the support of $\ket{\psi}$;
    \item the algorithm uses at most $O\!\left(\frac{s + \log(1/\delta)}{\eps^2}\right)$ copies of $\ket{\psi}$;
    \item the algorithm runs in time $O\!\left(ns\,\frac{s + \log(1/\delta)}{\eps^2} + s^3\right)$.
\end{itemize}
Moreover, $\Omega(s/\eps^2)$ are necessary for this task.
\end{lemma}
\begin{proof}
    Let $\supp(\psi) \subseteq \F_2^n$ be the computational basis state that $\ket \psi$ is supported on and define $\ket\psi \coloneqq \sum_{x \in \supp(\psi)} \beta_x \ket{x}$.
    We can start by measuring in the computational basis to learn $\supp(S)$.
    Using \cref{lem:sparse-sample-bound}, we can learn a subset $S \subseteq \supp(\psi)$  such that $\sum_{x \in S} \beta_x^2 \geq 1-\eps^2/4$ using $\frac{s + \log(1/\delta)}{\eps^2}$ samples to $\ket \psi$ and $O\left(n \frac{s + \log(1/\delta)}{\eps^2}\right)$ time.

    Now let $\ket \phi \coloneqq \frac{1}{\sqrt{\sum_{x \in S} \beta_x^2}}\sum_{x \in S}\beta_x \ket{x}$ be the pure state that results in post-selecting on getting an outcome in $S$ when measuring in the computational basis.
    Such a post-selection takes $O(sn)$ time per sample to go through a list of size $s$ of $n$-bit strings to check for inclusion.
    See that this is just a state on an $s$-dimensional space.
    Using \cref{lem:state-tomo}, we can get an $\eps/2$-distance estimate of $\ket \phi$ using at most $O\left(\frac{s + \log(2/\delta)}{\eps^2}\right)$ samples and $O\left(s\frac{s + \log(1/\delta)}{\eps^2} + s^3\right)$ time with probability at least $1-\delta/2$.
    By \cref{lemma:bernoulli_bound_mult} and a union bound, we only need \[\frac{2}{1-\eps^2/4}\left(\frac{s + \log(2/\delta)}{\eps^2} + \log(2/\delta)\right) \leq O\left(\frac{s + \log(1/\delta)}{\eps^2}\right)\] post-selections of $\ket \psi$ to get the needed samples of $\ket \phi$.

    Finally, note that the trace distance between $\ket \phi$ and $\ket \psi$ goes as 
    \begin{align*}
        \sqrt{1 - \abs{\braket{\phi | \psi}}^2} &\leq \sqrt{1 - \left|\left(\frac{1}{\sqrt{\sum_{x \in S} \beta_x^2}} \sum_{x \in S} \beta_x^* \bra{x}\right)\left(\sum_{y \in \supp(\psi)} \beta_y \ket{y}\right)\right|^2}\\
        &\leq \sqrt{1- \sqrt{\sum_{x \in S} \abs{\beta_x}^2}}\\
        &\leq \sqrt{1-\sqrt{1-\eps^2/4}}\\
        &\leq \eps/2.
    \end{align*}
    So by the triangle inequality, a $\eps/2$ estimate of $\ket{\phi}$ results in a $\eps$-error estimate of $\ket \psi$ in trace distance.

    For the lower bound, it is known that $\Omega(d/\eps^2)$ copies are necessary to learn a $d$-dimensional pure state to $\eps$ accuracy in trace distance~\cite{scharnhorst2025optimallowerboundsquantum}.
    Thus, the lower bound $\Omega(s/\eps^2)$ follows by observing that if an $o(s/\eps^2)$ algorithm exists, it would contradict this lower bound. 
\end{proof}

We can now present our algorithm for learning unitaries with sparse Pauli support. The algorithm goes by reducing to the task solved in \cref{lem:sparse-tomo-state}.

\begin{theorem}[Tomography of Pauli-sparse unitary channels]
\label{thm:learning-sparsity}
    Let $s \in \N$ with $s \leq n$.
    Let $U \in \C^{2^n \times 2^n}$ be a $s$-Pauli-sparse unitary.
    There is a tomography algorithm that, given query access to $U \in \C^{2^n \times 2^n}$ and parameters $\delta, \eps > 0$, outputs a classical description of a matrix $V$ satisfying $\distphop(U, V) \leq \eps$ with probability at least $1-\delta$. 
    Moreover, the algorithm satisfies the following properties:
\begin{itemize}
    \item $\supp(V) \subseteq \supp(U)$.
    \item The algorithm makes at most $O\left(\frac{s}{\eps^2}(s+ \log(1/\delta))\right)$ queries to $U$ and only requires forward access, i.e., it does not require $U^\dagger$ or controlled-$U$).
    \item The algorithm runs in time $O\left(ns^2 \frac{s+\log(1/\delta)}{\eps^2} + s^3\right)$.
    \item The algorithm uses $n$ additional qubits of space.
\end{itemize}
\end{theorem}
\begin{proof}
    Observe that the Choi state of $U$ can be defined as $\ket{\Phi_U} \coloneqq \sum_{x \in \supp(U)} \alpha_x \ket{\Phi_{W_x}}$ when written in the Bell basis.
    If we can learn the Choi state of $U$ using only the Bell states that $\ket{\Phi_U}$ is composed of to $\frac{\eps}{\sqrt{s}}$ error in trace distance, then we have learned a $2n$-qubit state $\ket{\hat{\psi}} \coloneqq \sum_{x \in \supp(S)} \hat{\alpha_x} \ket{\Phi_{W_x}}$ such that (up to global phase) $\sqrt{\sum_{x \in \supp(U)} \abs{\alpha_x - \hat{\alpha_x}}^2} \leq \frac{\eps}{2\sqrt{s}}$.\footnote{Note that $\ket{\hat{\psi}}$ is not necessarily a valid Choi state, which is why we do not refer to it as something like $\ket{\Phi_{\hat{U}}}$.}
    By Cauchy-Schwarz, this says that $\sum_{x \in \supp(U)} \abs{\alpha_x - \hat{\alpha_x}} \leq \eps/2$.
    Therefore, if we can define the matrix $A \coloneqq \sum_{x \in \supp(U)} \hat{\alpha_x} W_x$, then by the triangle inequality it is $\eps/2$-close to $U$ in \nameref{def:phase-op-dist}.
    It therefore suffices to perform sufficiently accurate state tomography on $\ket{\Phi_U}$.

    Moreover, $\ket{\Phi_U}$ can be viewed as $s$-sparse in the Bell basis.
    This means that tomography of $\ket{\Phi_U}$ reduces to that of tomography of an $s$-sparse $2n$-qubit quantum state in the \emph{computational basis}.
    Using \cref{lem:sparse-tomo-state}, we can get such an estimate of $\ket{\Phi_U}$ using $O\left(\frac{s}{\eps^2}(s+ \log(1/\delta))\right)$ samples.
\end{proof}

One can replace \cref{lem:sparse-tomo-state} with alternative algorithms for sparse state tomography. 
For example, \cite[Theorem 26]{Apeldoorn2023tomography} yields $\widetilde{O}(s^{1.5}/\eps)$ scaling, at the cost of requiring inverse queries to $U$.
As another example, suppose one is given a set $S$ such that 
\[
\sqrt{\sum_{x \in \supp(U)} \abs{\alpha_x - \hat{\alpha}_x}^2} \;\le\; \frac{\eps}{2\sqrt{s}},
\]
as in the proof of \cref{thm:learning-sparsity}. 
Then we believe that the algorithm of \cite{chen2025inversefreequantumstateestimation} can be used to obtain an $O(s^2/\eps)$-query algorithm using only forward queries.

As discussed, \cref{thm:learning-sparsity} yields an \emph{improper} learner, as it outputs a matrix that is not necessarily unitary. 
Information-theoretically, this output suffices to recover a nearby unitary. For example, one could compute the nearest unitary via the polar decomposition, but this requires $\exp(n)$ time. 
Ideally, one would instead have a rounding procedure that outputs a nearby unitary that remains $s$-sparse and can be computed efficiently. 
At present, it is unclear how to perform such rounding efficiently (or whether it is possible in general), and we leave this as an open problem. 
Below, we identify three natural settings in which this rounding can be carried out efficiently while preserving sparsity.

\begin{fact}\label{fact:very-sparse-rounding}
    Any matrix $A$ that can be expressed as a sum over $s < n$ Weyl operators can be rounded to its closest unitary operator (in any unitarily invariant distance) in time $O(\exp(s))$ while increasing the Pauli sparsity to at most $2^s$.
\end{fact}
\begin{proof}
The Pauli sparsity being $s$ also implies that the Pauli dimension is at most $s$.
This lets one compute the polar decomposition in time $O(\exp(s))$.\footnote{We note that for Fourier sparsity $s$, the Fourier dimensionality is bounded above by $O(\sqrt{s}\log s)$~\cite{sanyal2019fourier}. It is an open problem if a similar non-trivial bound exists for Pauli sparsity and Pauli dimensionality.}
\end{proof}

\begin{fact}\label{fact:commute-rounding}
    Let $s \in \mathbb{N}$ and let $A$ be a matrix that can be expressed as a sum over at most $s$ mutually commuting Weyl operators and let $U$ be the closest Hermitian unitary operator to $A$ in operator distance with Pauli sparsity $s$ and $\supp(A) \subseteq \supp(U)$.
    If $\opnorm{U-A} < 1/2$ then $U$ can be identified from a classical description of $A$ in time $O(ns^2)$.
\end{fact}
\begin{proof}
    Let $A \coloneqq \sum_{x \in \supp(A)} \alpha_x W_x$ represent the sum over mutually commuting Weyl operators.
    Because operator distance is unitarily invariant, WLOG we can assume that the mutually commuting Weyl operators in $\supp(A)$ lie in $0^n \times \F_2^n$ (i.e., Pauli $\{I, Z\}^{\otimes n}$ operators) such that $\supp(A) \subseteq 0^n \times \F_2^n$.
    This is because we can find a Clifford circuit that maps $\supp(A)$ to $0^n \times \F_2^n$ in time $O(ns^2)$ using \cref{lem:symplectic-gram-schmidt}, which will be the dominating time complexity.
    The result is that  $A$ is simply a diagonal matrix and $U$ only has real-valued Weyl coefficients.
    The resulting closest Hermitian unitary $U$ is then a diagonal matrix with a Boolean function $f : \F_2^n \rightarrow \{\pm 1\}$ along the diagonal.

    To make the Weyl coefficients easily computable, it suffices to the Weyl coefficients $A$ be real numbers, and then further round these real numberrs to multiples of $\frac{1}{4s}$, to generate matrix $A^\prime \in \mathbb{R}^{2^n \times 2^n}$.
    This only moves the real part of each Weyl coefficient by at most $\frac{1}{2s}$, such that by the triangle inequality the operator distance of this $A^\prime$ to $U$ is at most $\frac{1}{2} + \eps < 1$.
    By the fact that $\opnorm{U-A}^\prime \leq \eps < 1$, the real part of the Weyl coefficients of $A$ must have the same sign as that of $U$. 
    It therefore suffices to compute the sign of the real part of the diagonal of $A$ to get $f$.
    
    As this function is a weighted sum over $s$ Parity functions, we can use a threshold gate to see if this sum is positive or negative to implement this Boolean function.
    Since the weights are multiples of $\frac{1}{4s}$ such that the $\ell_1$ norm of these coefficients is bounded by $O(\sqrt{s})$, this reduces to implementing a Majority gate on $O(s^{1.5})$-bits (each bit is the output of a Parity function).
    As the parity function and majority function on $k$-bits have circuit complexity $O(k)$ and circuit depth $O(\log k)$, the total circuit implementing $U$ is implementable by a $O(\log (sn))$-depth classical (resp. quantum) circuits with circuit complexity $O(s^{1.5} + n)$ and is therefore efficient.
\end{proof}

\begin{fact}\label{fact:majorana-rounding}
    Any Hermitian matrix $A$ that can be expressed as a sum over mutually anti-commuting $s$ Weyl operators can be rounded, in time $O(ns^2)$, to it's closest Hermitian unitary operator $U$ (in Frobenius distance) such that $\supp(U) = \supp(A)$.
\end{fact}
\begin{proof}
    Let $S$ be a set of mutually non-commuting Weyl operators such that $\abs{S} = s$ and let $A \coloneqq \sum_{x \in S} \alpha_x W_x$ be a Hermitian matrix, meaning that $\alpha_x$ are real-valued.
    Then we can see that
    \begin{align*}
        A^2 &= \sum_{x, y \in S} \alpha_x \alpha_y W_x W_y\\
        &= I \cdot (\sum_{x \in S} \alpha_x^2) +\sum_{x < y \in S } (\alpha_x \alpha_y - \alpha_x\alpha_y) W_x W_y\\
        &= I \cdot (\sum_{x \in S} \alpha_x^2).
    \end{align*}
    It follows that $\frac{A}{\sqrt{\sum_{x \in S} \alpha_x^2}}$ must be a Hermitian unitary operator.

    To show that this is the closest Hermitian unitary operator with the same Pauli support, note that the Frobenius distance between Hermitian operator $A$ and Hermitian unitary $U$ with $\supp(A), \supp(U) \subseteq S$ such that $A = \sum_{x \in S} \alpha_x W_x$ and $Y = \sum_{y \in S} \beta_y W_y$ can be defined as
    \[
        \norm{U-V}_F^2 = \sum_{x \in S} (\alpha_x - \beta_x)^2 = \sum_{x \in S} \alpha_x^2 + \beta_x^2 - 2\alpha_x \beta_x = \left(\sum_{x \in S} \alpha_x^2 + 1\right) -2 \sum_{x \in S} \alpha_x \beta_x.
    \]
    Therefore, if $A$ is fixed as the Hermitian matrix we want to be close to, it follows that we simply want to maximize $\sum_{x \in S} \alpha_x \beta_x$ subject to $\sum_{x \in S} \beta_x^2 = 1$, as $\sum_{x \in S} \alpha_x^2$ is fixed.
    This is easily done by setting $\beta_x = \frac{\alpha_x}{\sqrt{\sum_{x \in S} \alpha_x^2}}$.

    To actually efficiently synthesize a circuit for this, we can use \cref{lem:symplectic-gram-schmidt} to map the support of $A$ to the \nameref{def:majorana} and therefore synthesize it as a \nameref{def:matchgate}.
\end{proof}

\begin{remark}\label{remark:quantum-polar}
One can avoid classically rounding to the nearest unitary if it suffices to implement a CPTP \emph{quantum channel} that approximates the unknown unitary. 
Let $A = \sum_{x \in \supp(A)} \alpha_x W_x$ be the output of our algorithm, which is not necessarily unitary.
Using \cref{lemma:lcu}, one can efficiently build a block-encoding $U_A$ of $\frac{A}{\sum_{x \in \supp(A)} \abs{\alpha_x}}$ and then apply \cite[Corollary 1, Eq.\ 8]{quek2022fast} to approximately apply the polar decomposition of $A$. 
(Note that scaling $A$ by a positive constant does not affect the polar decomposition.) 
This yields an implementation within diamond distance $\eps$ using $O\left(\frac{1}{\kappa}\log(1/\eps)\right)$ queries to $U_A$ and $U_A^\dagger$, where $1/\kappa$ is the smallest singular value of $\frac{A}{\sum_{x \in \supp(A)} \abs{\alpha_x}}$.
Since $A$ is $\eps$-close to a unitary, we have $1/\kappa \geq \frac{1-\eps}{\sum_{x \in \supp(A)} \abs{\alpha_x}} \geq \frac{1-\eps}{\sqrt{s}}$.
So for $\eps \leq 1/2$, the query complexity to $U_A$ is $O(\sqrt{s}\log(1/\eps))$.

Finally, this polar decomposition approach can also applies to the framework we present in \cref{sec:qnc-plus-clifford}: if it suffices to implement a quantum channel close to $U^\dagger \otimes U$, the same technique yields an efficient implementation. In particular, applying this technique to \cref{thm:efficient-learning-thm,cor:efficient-learning-thm} yields an algorithm that implements a quantum channel close to $U^\dagger \otimes U$ in diamond distance.
\end{remark}

\section{A Framework for Learning Structured Quantum Circuits}
\label{sec:qnc-plus-clifford}

\subsection{The Framework}
\label{subsec:framework}

We present our framework that yields efficient learning algorithms for structured unitaries. In particular, we identify a sufficient condition that implies efficient learning algorithms. 
To state this condition precisely, we must define the notion of a generating set.

\begin{definition}[Pauli Generating Set]\label{def:pauli-generating-set}
    A subset $G \subset \mathcal{P}_n$ of the $n$-qubit Pauli group is called a \emph{generating set} if every Pauli operator $P \in \{I, X, Y, Z\}^{\otimes n}$ can be expressed as a product of elements from $G$, up to a global phase. 
    Formally, for each such $P$, there exist a phase $c_P \in \{\pm 1, \pm i\}$ and a sequence of $\ell_P \le L$ generators $g_1, \dots, g_{\ell_P} \in G$ such that $P = c_P \prod_{j=1}^{\ell_P} g_j$. The minimum such upper bound $L$ is called the \emph{length} of $G$.
\end{definition}

We will prove the following. Let $U$ be an unknown unitary channel, and suppose we have a known generating set $G$ of Pauli operators for which $U^\dagger P U$ is $k$-Pauli-dimensional (resp. $s$-sparse) for all $P \in G$. Then there is an efficient learning algorithm that learns $U$ to diamond distance $\eps$ using $\poly(2^k, n)\cdot \frac{\log1/\delta}{\eps}$ (resp. $\poly(s, n) \cdot \frac{\log 1/\delta}{\eps^2}$) queries and time.
In the most general terms, if the Heisenberg evolution of a generating set is efficiently learnable, then the unitary itself is efficiently learnable.

After we establish the above theorem, we will show how this framework lifts to learning an infinite hierarchy of unitary channels that contains several natural classes of quantum circuits as special cases.

First, we establish that learning the Heisenberg-evolution of the generating set is information-theoretically sufficient to determine the global unitary.

\begin{theorem}[Learning generating sets suffices]\label{thm:generating-set-suffices}
    Let $G$ be a Pauli generating set with length $L$. Let $U$ and $V$ be $n$-qubit unitary channels. 
    If $V$ matches the Heisenberg evolution of $U$ on every generator $P \in G$ to operator norm accuracy $\eps'$:
    \[
        \opnorm{V^\dagger P V - U^\dagger P U} \le \eps',
    \]
    then the diamond distance between the unitary channels is strictly bounded by
    \[
        \distdiamond(U, V) \le 4 L \eps'.
    \]
\end{theorem}
\begin{proof}
    Let $W = V U^\dagger$. By assumption, for all $P \in G$,
    \[
        \opnorm{ W^\dagger P W - P}
        = \opnorm{ U V^\dagger P V U^\dagger - U U^\dagger P U U^\dagger} 
        = \opnorm{V^\dagger P V - U^\dagger P U} \le \eps'.
    \]
    By the unitary invariance of the operator norm, multiplying by $W$ on the left bounds the commutator: $\| W g - g W \|_\infty \le \eps'$.
    
    By \cref{def:pauli-generating-set}, any $n$-qubit Pauli operator $P \in \{I, X, Y, Z\}^{\otimes n}$ can be written as a product of at most $\ell \le L$ generators, $P = c_P \prod_{j=1}^{\ell} g_j$.
    We use a telescoping sum to bound the commutator of $W$ with any Pauli $P$. Since $\opnorm{g_j}=1$, we have
    \[
        \opnorm{W P - P W} \le \sum_{j=1}^{\ell} \opnorm{ W g_j - g_j W} \le \ell \eps' \le L \eps'.
    \]
    
    To bound the global distance between the channels, we apply a standard Pauli twirling identity: 
    \[
    \frac{1}{4^n} \sum_{P \in \{I,X,Y,Z\}^{\otimes n}} P W P = \frac{\Tr(W)}{2^n} I \eqqcolon \alpha I.\]
    Specifically, we rewrite the difference between $W$ and its identity component strictly in terms of the commutators. 
    \[
        \alpha I - W = \frac{1}{4^n} \sum_{P \in \{I, X, Y, Z\}^{\otimes n}} \left(P W P - W\right) = \frac{1}{4^n} \sum_{P \in \{I,X,Y,Z\}^{\otimes n}} (P W - W P) P.
    \]
    Taking the operator norm and applying the triangle inequality, we have 
    \[
        \opnorm{W - \alpha I} \le \frac{1}{4^n} \sum_{P \in \{I,X,Y,Z\}^{\otimes n}} \opnorm{(P W - W P) P} = \frac{1}{4^n} \sum_{P \in \{I,X,Y,Z\}^{\otimes n}} \opnorm{W P - P W} \le L \eps'.
    \]
    
    Because $W$ is unitary, $\opnorm{W} = 1$. By the reverse triangle inequality, $1 = \opnorm{W} \le \opnorm{\alpha I} + \opnorm{W - \alpha I} = |\alpha| + \opnorm{W - \alpha I}$, forcing $|\alpha| \ge 1 - L \eps'$.
    Setting $\alpha \coloneqq |\alpha| e^{i\theta}$, we factor out the global phase:
    \[
        \opnorm{e^{-i\theta} W - |\alpha| I} = \opnorm{ W - \alpha I } \le L \eps'.
    \]
    Thus, we can bound the distance from $W$ to the identity matrix:
    \[
        \opnorm{ e^{-i\theta} W - I} \le \opnorm{ e^{-i\theta} W - |\alpha| I }+ \big| |\alpha| - 1 \big| \le L \eps' + L \eps' = 2 L \eps'.
    \]
    
    The diamond distance between $U$ and $V$ is equal to the diamond distance between $W$ and the identity. Thus, we conclude the proof by applying \cref{fact:op-to-diamond}. 
\end{proof}

It is easy to see that $L \leq 2n$ for any generating set $G$. This is because Pauli operators (up to phase) can be identified with elements of $\F_2^{2n}$, so any such operator can always be expressed with $2n$ generating elements.

\cref{thm:generating-set-suffices} establishes that learning the Heisenberg-evolved operators of a generating set information-theoretically suffices to determine an unknown unitary operator.
However, to obtain an algorithm, we must also specify how to construct a description of the unknown unitary operator from the learned Heisenberg-evolved operators. 
We present such a construction now, which generalizes the approach of Huang, Liu, Broughton, Kim, Anshu, Landau, and McClean~\cite{huang2024shallow}.
The runtime of the resulting algorithm will depend on the number of generators needed to express the weight-$1$ Pauli operators, which we call the \emph{local length} of a generating set.

\begin{definition}[Local length of a generating set]
\label{def:local-length}
Let \(G\subseteq \mathcal{P}_n\) be a generating set. Define the \emph{local length} of $G$ as
\[
\ell_G \coloneqq \max_{i\in[n],\,Q\in\{X_i,Y_i,Z_i\}}
\min\Bigl\{
t : \exists\, P_1,\dots,P_t\in G,\ \omega\in\{\pm1,\pm i\}
\text{ such that }
Q=\omega \prod_{j=1}^t P_j
\Bigr\}.
\]
\end{definition}

\begin{theorem}[Efficient unitary learning via local length generating sets]\label{thm:efficient-learning-thm}
    Let $G$ be a known generating set.
    Let $U$ be any $n$-qubit unitary channel such that for every generator $g \in G$, the conjugated operator $ U_g \coloneqq U^\dagger g U$ is learnable to operator distance $\eps$ with probability at least $1-\delta$ using $q(\eps, \delta)$ queries to $U_g$ and $t(\eps, \delta)$ time. 
    Then, given queries to $U$ and $U^\dagger$, there is an algorithm that outputs a $2n$-qubit unitary channel $V$ satisfying $\distdiamond(U^\dagger \otimes U, V) \leq \eps$ with probability at least $1-\delta$.
    The algorithm uses $O\left(m \cdot q(\eps/L, \delta/m)\right)$ queries and $O\left(m \cdot t(\eps/L, \delta/m)\right)$ queries time, where $L$ is the local length of $G$ (\cref{def:local-length}) and $m \coloneqq \abs{G}$.
\end{theorem}
\begin{proof}
    We begin by querying the unknown channel to learn the Heisenberg evolution of each generator $g \in G$. 
    Each generator's conjugation is learnable by assumption.
    We set the target accuracy to $\eps' = \tfrac{\eps}{6L}$ and the failure probability to $\delta' = \tfrac{\delta}{m}$.
    By the union bound, all $m$ generators are successfully learned with probability at least $1-\delta$. 
    
    We round each output to the nearest unitary via singular value decomposition (see e.g. the proof of \cref{lem:round-to-unitary}), incurring at most a factor-$2$ loss. This yields a unitary approximation $\widehat{V}_g$ for each generator satisfying $\opnorm{\widehat{V}_g - U^\dagger g U} \le \tfrac{\eps}{3nL}$.

    Next, for each single-qubit Pauli $P \in \{X_i, Y_i, Z_i\}_{i=1}^n$, we classically compute its unitary approximation $\widehat{V}_P = c_P \prod_{j=1}^{\ell_P} \widehat{V}_{g_j}$.
    Because the exact evolution distributes perfectly as $U^\dagger P U = c_P \prod_{j=1}^{\ell_P} \left(U^\dagger g_j U\right)$, and both $\widehat{V}_{g_j}$ and $U^\dagger g_j U$ are strictly unitary, a standard telescoping sum bounds the operator norm error of the product by the sum of individual errors:
    \[
        \opnorm{\widehat{V}_P - U^\dagger P U} \le \sum_{j=1}^{\ell_P} \opnorm{ \widehat{V}_{g_j} - U^\dagger g_j U} \le L \left(\frac{\eps}{3nL}\right) = \frac{\eps}{3n}.
    \]

    Recall the simple identity used in~\cite{huang2024shallow}:
    \begin{equation}\label{eq:swap}
        U^\dagger \otimes U = \left(U^\dagger \otimes I^{\otimes n}\right) \cdot \mathrm{SWAP} \cdot \left( U \otimes I^{\otimes n} \right)\cdot \mathrm{SWAP},
    \end{equation}
    where $U^\dagger \otimes U$ is a $2n$-qubit unitary, and $\SWAP = \prod_i \SWAP_i$, with $\SWAP_i$ denoting the two-qubit swap gate between qubit $i$ and qubit $i+n$. 
    Note that the order of the $\SWAP_i$ gates does not matter since they act on disjoint pairs of qubits. 
    Expanding this product, we can rewrite \cref{eq:swap} as 
    \begin{equation}\label{eq:double-system-identity}
        U^\dagger \otimes U = \left[\prod_{i=1}^n\left(U^\dagger \otimes I^{\otimes n}\right) \cdot  \mathrm{SWAP}_i \cdot \left( U \otimes I^{\otimes n} \right) \right]\cdot \mathrm{SWAP}.
    \end{equation}
    By \cref{fact:op-to-diamond,fact:dist-composition}, it therefore suffices to learn each term $O_i \coloneqq (U^\dagger \otimes I^{\otimes n}) \cdot  \mathrm{SWAP}_i \cdot \left( U \otimes I^{\otimes n} \right)$ to accuracy $\frac{\eps}{n}$ in operator norm to learn $U^\dagger \otimes U$ to accuracy $\eps$ in diamond distance. 
    
    We now substitute the local approximations $\widehat{V}_P$ into \cref{eq:double-system-identity}. The $i$-th cross-term expands as:
    \[
        O_i = \frac{1}{2}\left(I \otimes I + (U^\dagger X_i U) \otimes X_i + (U^\dagger Y_i U) \otimes Y_i + (U^\dagger Z_i U) \otimes Z_i\right).
    \]
    Let $\widehat{O}_i$ denote the classical estimate of this term using our estimates $\widehat{V}_{X_i}, \widehat{V}_{Y_i}, \widehat{V}_{Z_i}$. By the triangle inequality, 
    \[
        \opnorm{\widehat{O}_i - O_i} \le \frac{1}{2}\left( 0 + \frac{\eps}{3n} + \frac{\eps}{3n} + \frac{\eps}{3n} \right) = \frac{\eps}{2n}.
    \]
    
    As $\widehat{O}_i$ is not strictly unitary, we again round it to the nearest unitary $\widehat{W}_i$ via singular value decomposition, which incurs at most another factor-$2$ loss. Thus, the final operator norm error per tensor factor is bounded by $\tfrac{\eps}{n}$. 
    
    By \cref{fact:dist-composition} together with \cref{eq:double-system-identity}, the global product $\left(\prod_{i=1}^n \widehat{O}_i\right) \cdot \mathrm{SWAP}$ approximates $U^\dagger \otimes U$ to within an error of $n \cdot \tfrac{\eps}{n} = \eps$ in diamond distance. 
    Substituting the generator accuracy $\eps' = \tfrac{\eps}{6nL}$ and failure probability $\delta' = \tfrac{\delta}{m}$ into the base algorithm's bounds yields the claimed query and time complexities, understanding that we have to run the learning algorithm $m$ times.
\end{proof}

\begin{corollary}\label{cor:efficient-learning-thm}
    Let $G$ be a known generating set.
    Let $U$ be any $n$-qubit unitary channel such that for every generator $g \in G$, the conjugated operator $U^\dagger g U$ is $k$-Pauli dimensional (resp. $s$-Pauli sparse \emph{and efficiently roundable to a unitary}). 
    Then, given queries to $U$ and $U^\dagger$, there is an algorithm that outputs a $2n$-qubit unitary channel $V$ satisfying $\distdiamond(U^\dagger \otimes U, V) \leq \eps$ with probability at least $1-\delta$.
    The algorithm uses $\poly(n, 2^k, m, L, \log(1/\delta)/\eps$ (resp. $\poly(n,s, m, L)\log(1/\delta)/\eps^2$) queries and time, where $L$ is the local length of $G$ (\cref{def:local-length}) and $m \coloneqq \abs{G}$.
\end{corollary}
\begin{proof}
    Combine \cref{cor:pauli-dimension-bootstrap} (resp. \cref{thm:learning-sparsity}) with \cref{thm:efficient-learning-thm}.
\end{proof}

\begin{remark}[Explicit query complexity of \cref{cor:efficient-learning-thm}]
\label{remark:explicity-query-time}
   The algorithm uses $O\left(m n L 2^k k \frac{\log(m/\delta)}{\eps}\right)$ (resp. $O\left(m n^2 L^2 s \frac{s + \log(m/\delta)}{\eps^2}\right)$) queries.
\end{remark}

\subsection{Generalizing to an Infinite Hierarchy of Circuits}
\label{subsec:hierarchy}

We now generalize \cref{cor:efficient-learning-thm} to an infinite hierarchy of unitary circuits.
The overarching message of \cref{subsec:framework} is that efficiently learning $U^\dagger g U$ for all $g$ in a generating set $G$ suffices to efficiently learn $U$.
\cref{cor:efficient-learning-thm} instantiates this theorem in the case where the operators $U^\dagger g U$ are learnable in the case they are $k$-Pauli dimensional or $s$-Pauli sparse. 

We will show that this framework is closed under a natural recursive lifting. 
Let $\calD_0$ (resp. $\calS_0$) denote the class of $n$-qubit unitary circuits that are $k$-Pauli dimensional (resp. $s$-Pauli sparse and efficiently round-able to a unitary).
For $d \geq 1$, define 
\[
\calD_d \coloneqq \{ U : \text{$\exists$ known generating set $G$ such that $U^\dagger g U \in \calD_{d-1}$ for all $g \in G$} \}, 
\]
and define $\calS_d$ analogously.

\begin{theorem}\label{thm:hierarchy}
    Fix $d \geq 0$. Then every unitary in $U \in \calD_d$ (resp. $U \in \calS_d$) can be learned to diamond distance $\eps$ with success probability at least $1 - \delta$ using $\poly(2^k, n^{2d}, m, L)\cdot \frac{\log1/\delta}{\eps}$ (resp. $\poly(s, n^d, m, L)\cdot \frac{\log (1/\delta)}{\eps^2}$) queries and time.
\end{theorem}
\begin{proof}
The proof proceeds by induction on $d$. 
The base cases $d=0$ and $d=1$ follow from \cref{cor:pauli-dimension-bootstrap} (resp. \cref{thm:learning-sparsity}) and \cref{cor:efficient-learning-thm}.
Now suppose the claim holds for $d-1$, and let $U \in \calD_d$ (resp. $U \in \calS_d$).
By definition, there exists a known generating set $G$ such that for every $g \in G$, the unitary
\[
U_g \coloneqq U^\dagger g U \in \calD_{d-1} \quad (\text{resp. } U_g \in \calS_{d-1}).
\]
To reconstruct $U$, it suffices (by \cref{thm:efficient-learning-thm}) to learn each $U_g$ to diamond distance $\eps' = O(\frac{\eps}{nL_d})$ and $\delta' = O(\delta/m_d)$, repeated $m$ times for each $U_g$.
\end{proof}

\begin{remark}[Explicit query complexity of \cref{thm:hierarchy}]
\label{remark:explicity-query-time-hierarchy}
    Let $G_i$ be the generating set used to learn level $\calD_{i}$ (resp. $\calS_i$)  and let $L_i$ be the local length of $G_i$.
    Then define $m \coloneqq \prod_{i=1}^d \abs{G_i}$ and $L \coloneqq \prod_{i=1}^d L_i$.
    The algorithm uses $O\left(m n^d L 2^k k \frac{\log(m/\delta)}{\eps}\right)$ (resp. $O\left(m n^{2d} L^2 s \frac{s + \log(m/\delta)}{\eps^2}\right)$) queries.
\end{remark}

\section{Applications}
\label{sec:applications}

We now apply the framework developed in \cref{sec:qnc-plus-clifford} to obtain efficient learning algorithms for several concrete and well-studied classes of quantum circuits. 
Specifically, we obtain learning algorithms for near-Clifford circuits, quantum $k$-juntas~\cite{chen2023testing}, the Clifford hierarchy~\cite{low2009learning}, fermionic matchgate circuits, and the matchgate hierarchy~\cite{matchgate-hierarchy}. 
Here, near-Clifford circuits are those composed of Clifford gates and $O(\log n)$ single-qubit non-Clifford gates. 
We also obtain learning algorithms for compositions of shallow circuits with near-Clifford circuits, as well as compositions of fermionic matchgate circuits with Clifford circuits.

At a high level, these results follow by showing that each of these circuit classes satisfies the condition on Heisenberg-evolved generating sets in \cref{thm:efficient-learning-thm}. 
The main message of this section is that this condition provides a unifying principle that both recovers and extends a wide range of existing learning algorithms, while in several cases yielding improved guarantees.

\subsection{Quantum \texorpdfstring{$k$}{k}-Juntas}
\label{subsec:juntas}

We present our query-optimal algorithm for learning quantum $k$-juntas in diamond distance. We begin by recalling the definition of a quantum $k$-junta.

\begin{definition}[Quantum $k$-junta]
    A quantum $k$-junta unitary channel is a unitary channel that only acts non-trivially on $k$-qubits.
\end{definition}
In other words, up to permutation of qubits, it acts as $I^{\otimes n-k} \otimes U$, where $U$ is some arbitrary $k$-qubit unitary matrix.
A query-optimal tomography algorithm for quantum junta channels follows from \cref{cor:pauli-dimension-bootstrap-finegrain}.

\begin{corollary}[Query optimal junta learner without inverse]\label{cor:optimal-junta}
Let $U \in \C^{2^n \times 2^n}$ be a $k$-junta unitary.
There is a tomography algorithm that, given query access to $U \in \C^{2^n \times 2^n}$ as well as parameters $\delta, \eps > 0$, outputs an estimate $V$ satisfying $\distdiamond(U, V) \leq \eps$ with probability at least $1-\delta$. 
Moreover, the algorithm satisfies the following properties:
\begin{itemize}
    \item $V$ is a $k^\prime$ junta for $k^\prime \leq k$.
    \item The algorithm makes at most $O\left(4^k \frac{\log (1/\delta)}{\eps}\right)$ queries to $U$ (and only requires forward access, i.e., it does not require $U^\dagger$ or controlled-$U$).
    \item The algorithm runs in time \[
        \widetilde{O}\left(\left( 4^k\left(n+\frac{4^{k}}{\eps} \right)\log(1/\delta) + \left(n + 8^{k}\right)\log^2(1/\delta)\right)\right).
    \]
    \item The algorithm uses between $n$ and $2n-1$ additional qubits of space.
\end{itemize}
\end{corollary}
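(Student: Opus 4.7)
The plan is to derive \cref{cor:optimal-junta} as a direct specialization of \cref{cor:pauli-dimension-bootstrap-finegrain} to the case of purely symplectic Pauli support. The key observation is that every quantum $k$-junta is a structured special case of a $2k$-Pauli-dimensional unitary whose Pauli support sits inside an ambient subspace with $(a,b)$-decomposition $(k,0)$.

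First, I would unpack the junta structure in the Pauli basis. Up to a permutation of qubits---which is a Clifford operation, and hence handled by the ``up to Clifford conjugation'' clause of \cref{cor:pauli-dimension-bootstrap-finegrain}---a $k$-junta has the form $U = I^{\otimes (n-k)} \otimes U'$ for some $k$-qubit unitary $U'$. Expanding $U'$ in the $k$-qubit Pauli basis shows that $\supp(U)$ is contained in the $2k$-dimensional subspace of $\F_2^{2n}$ corresponding to the Pauli operators $I^{\otimes (n-k)} \otimes \{I,X,Y,Z\}^{\otimes k}$. This ambient subspace is generated by the pairs $\{X_i, Z_i\}_{i=n-k+1}^{n}$, which obey the canonical symplectic relations $[X_i,Z_j] = \delta_{ij}$ and $[X_i,X_j] = [Z_i,Z_j] = 0$. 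Hence the ambient subspace is purely symplectic, and any $(a',b')$-decomposition of $\supp(U)$ itself must satisfy $2a' + b' \leq 2k$ together with the Lagrangian bound $b' \leq k$.

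Next, I would invoke \cref{cor:pauli-dimension-bootstrap-finegrain} without inverse access on $U$ with these parameters $(a',b')$. A short calculation under the constraints $2a' + b' \leq 2k$ and $b' \leq k$ shows that both $2^{2a'+b'}$ and $2^{a'+b'}\cdot b'$ are bounded by $O(4^k)$ (for the latter, $a' + b' \leq k + b'/2 \leq 3k/2$, so $2^{a'+b'}\cdot b' \leq 2^{3k/2} k = o(4^k)$), giving the claimed query complexity $O(4^k \log(1/\delta)/\eps)$ using only forward queries. The stated runtime and space bounds follow by propagating the same constraints through the runtime expression in \cref{cor:pauli-dimension-bootstrap-finegrain}. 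Finally, the properness guarantee $\supp(V) \subseteq \supp(U)$ from that corollary immediately implies $V = I^{\otimes (n-k)} \otimes V'$ for some unitary $V'$, so $V$ is itself a $k'$-junta with $k' \leq k$.

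The main (mild) obstacle is just verifying that no parameter regime for $(a',b')$ inflates the query complexity past $O(4^k/\eps)$; as noted above, this reduces to the Lagrangian upper bound $b' \leq k$, which keeps both the $2^{2a'+b'}$ and $2^{a'+b'}\cdot b'$ terms under $O(4^k)$. Beyond this bit of symplectic bookkeeping and the standard runtime accounting, the argument is a clean reduction.
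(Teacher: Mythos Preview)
Your proposal is correct and follows essentially the same approach as the paper: reduce to \cref{cor:pauli-dimension-bootstrap-finegrain} by observing that a $k$-junta has Pauli support contained in $\paulisupport_{k,0}$. The paper's proof is slightly more direct---it simply plugs in $a=k$, $b=0$, so the troublesome $2^{a+b}\cdot b$ term vanishes identically---whereas you work with the actual decomposition $(a',b')$ of $\supp(U)$ and invoke the Lagrangian bound $b'\le k$ to control that term; this extra bookkeeping is sound but unnecessary.
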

\begin{proof}
    A $k$-junta WLOG takes the form of $I^{\otimes n-k} \otimes U$, so the support resides within $\paulisupport_{k, 0}$.
    Apply \cref{cor:pauli-dimension-bootstrap-finegrain} without the inverse for Pauli dimension $2k$ and parameters $a=k$ and $b=0$.
    The resulting query complexity is just $O\left(4^k \frac{\log (1/\delta)}{\eps}\right)$.
    Note that, using the bounds from \cref{remark:time-complexity}, the time complexity would be the same (up to logarithmic factors) with or without access to $U^\dagger$, so we will only choose to use forward queries.
\end{proof}

\subsection{Composition of Shallow and Near-Clifford Circuits}

We now give an algorithm for learning unitary channels that can be expressed as the composition of a shallow circuit with a near-Clifford circuit (in either order). 
By a shallow circuit we mean a depth-$d$ quantum circuit with arbitrary one- and two-qubit gates, and by a near-Clifford circuit we mean a Clifford circuit augmented with at most $t$ arbitrary single-qubit gates. 
Our algorithm learns such unitaries to accuracy $\eps$ in diamond distance in polynomial time, provided $d = O(\log \log n)$ and $t = O(\log n)$. 
The class of unitaries covered by our result includes the first level of the recently introduced \emph{Magic Hierarchy} \cite{parham2025quantumcircuitlowerbounds}.

\subsubsection{Clifford Nullity}

\emph{A priori}, it is not obvious which natural classes of unitary channels (besides shallow depth circuits) satisfy the conditions of \cref{cor:efficient-learning-thm}. 
In this subsection, we show that Clifford circuits doped with a small number of single-qubit non-Clifford gates do satisfy these conditions, which in turn yields an alternative learning algorithm for this class.
In fact, in \cref{ssec:nullity-and-shallow} we will learn shallow circuits composed with the more general class of near-Clifford unitaries involving small Clifford nullity \cite{jiang2023lower}, which we define below.\footnote{We chose the name `Clifford nullity', rather than `unitary stabilizer nullity' from \cite{jiang2023lower}.}

\begin{definition}[Clifford nullity]\label{def:clifford-nullity}
    An $n$-qubit unitary channel $U \in \C^{2^n \times 2^n}$ has Clifford nullity $t$ if there exists a subspace $S \subseteq \F_2^{2n}$ of co-dimension $t$ such that for every $x \in S$, there exists some $y\in \F_2^{2n}$ satisfying $U^\dagger W_x U = \pm W_y$.
\end{definition}

Intuitively, Clifford nullity measures how much of the Pauli group is normalized by $U$: nullity $0$ corresponds exactly to Clifford circuits.
Importantly, if $S$ is the subspace from \cref{def:clifford-nullity}, then for all $x \in S$, the conjugate $U^\dagger W_x U$ lies in a subspace $S^\prime$ of co-dimension $t$.
Additionally, if $S$ admits a decomposition into an $2a$-dimensional symplectic part and a $b$-dimensional isotropic part then $S^\prime$ does as well.

To understand Clifford nullity, we establish the following structural result.

\begin{fact}\label{fact:clifford-nullity-structure}
    Let $U \in \C^{2^n \times 2^n}$ be an $n$-qubit unitary channel with Clifford nullity $t$.
    Then $U$ can be decomposed into $C_2 U^\prime C_1$ where $C_2$ and $C_1$ are Clifford unitary channels and $U^\prime$ is $2n-t$-Pauli dimensional.
    Furthermore, if the subspace $S$ that is stabilized by $U$ has $(n-a-b, b)$ decomposition into a symplectic part of dimension $2(n-a-b)$ and isotropic part of dimension $b$ (so that $2a+b = t$), then $U^\prime$ has Pauli support in $\paulisupport_{a, b}$.
\end{fact}
\begin{proof}
    For conciseness in this proof, we will ignore positive and negative signs when conjugating Pauli operators by a Clifford.
    Define subspace $T \coloneqq \paulisupport_{a,b}^\sympcomp$.\footnote{One should think of $T$ as the Paulis corresponding to $\{I, X, Y, Z\}^{\otimes(n-a-b)}\otimes \{I, Z\}^{b} \otimes I^{\otimes a}$.}
    Let $S^\prime \coloneqq U^\dagger S U$ be the image of conjugating the stabilized subspace $S$ by $U$.
    Let $C_1$ be a Clifford circuit such that $C_1^\dagger S C_1 = T$.
    Then let $C_2$ be a Clifford circuit such that $C_2^\dagger T C_2 = S^\prime$.
    Finally, we will stipulate that for $x \in S$, $C_2^\dagger C_1^\dagger W_x C_1 C_2 = U^\dagger W_x U$.
    
    We will now show that $U^\prime \coloneqq C_2^\dagger U C_1^\dagger$ commutes with any $W_x$ where $x \in T$.
    Observe that for $x \in \paulisupport_{n-a-b, b}^\sympcomp$, $C_1 W_x C_1^\dagger = W_y$ for $y \in S$.
    Therefore,
    \[
        (U^\prime)^\dagger W_x U^\prime = (C_2 U^\dagger C_1) W_x (C_1^\dagger U C_2^\dagger) = C_2 (U^\dagger W_y U) C_2^\dagger = C_2 (C_2^\dagger C_1^\dagger W_y C_1 C_2) C_2^\dagger = C_1^\dagger W_y C_1 = W_x.
    \]
    To simultaneously commute with everything in $T$, it follows that $\supp(U^\prime) \subseteq T^\sympcomp = \paulisupport_{a,b}$.
\end{proof}

It is not too difficult to show that the converse of \cref{fact:clifford-nullity-structure} is also true: any unitary with the form $U \coloneqq C_2 U^\prime C_1$ where $C_1$ and $C_2$ are Clifford unitary channels and $\supp(U^\prime) \subseteq \paulisupport_{a, b}$ such that $2a+b = t$.
As such, it completely characterizes Clifford nullity.
We also note that $k$-Pauli dimensional unitary channels are therefore a strict subset of Clifford nullity $k$ unitary channels, as we can take $C_2 = I = C_1$.

To better understand these Heisenberg-evolve Pauli operators, we will use the following fact about conjugation of a Pauli dimensional matrix by a Pauli dimensional matrix.

\begin{fact}\label{fact:pauli-dimension-conjugation}
    Let $A$ and $B$ be $k$- and $\ell$-Pauli dimensional matrices, respectively.
    Then $A^\dagger B  A$ is $(k+\ell)$-Pauli dimensional, with support over the subspace spanned by $\langle \supp(A), \supp(B) \rangle$.
\end{fact}
\begin{proof}
    We can observe that $A = \sum_{x \in G_1} \alpha_x W_x$ and $B = \sum_{y \in G_2} \alpha_y W_y$ for subspaces $G_1$ and $G_2$ with dimension $k$ and $\ell$ respectively.
    Then \begin{align*}
        A^\dagger B A = \sum_{x_1, x_2 \in G_1} \sum_{y \in G_2} \alpha_{x_1}^* \alpha_{x_2} \alpha_y W_{x_1} W_{y} W_{x_2} = \sum_{x_1, x_2 \in G_1} \sum_{y \in G_2} \alpha_{x_1}^* \alpha_{x_2} \alpha_y (-1)^{[y, x_2]} W_{x_1} W_{x_2} W_y
    \end{align*}
    As $x_1, x_2$ in a subspace $G_1$, we see that $\supp(A^\dagger B A)$ must lie in the subspace generated by $G_1$ and $G_2$, which is at most $(k+\ell)$-dimensional.
\end{proof}

\begin{remark}
    One should note that for two $k$ and $\ell$ Pauli dimensional matrices $A$ and $B$, $AB$ must also be $(k+\ell)$-Pauli dimensional.
    The benefit of \cref{fact:pauli-dimension-conjugation} is that $A^\dagger B A$ is still only $(k+\ell)$-Pauli dimensional, rather than the na\"ive $(2k+\ell)$-Pauli dimensional bound.
\end{remark}

Using \cref{fact:clifford-nullity-structure,fact:pauli-dimension-conjugation} we can now show that $\supp(U^\dagger A U)$ is low Pauli dimensional when $A$ is low Pauli dimensional and $U$ is a unitary channel with small Clifford nullity.

\begin{lemma}\label{lemma:nullity-conjugation}
    $U \in \C^{2^n \times 2^n}$ have Clifford nullity $t$ and let $A$ be a matrix that is $k$-Pauli dimensional.
    Then $U^\dagger A U$ is at most $(k+t)$-Pauli dimensional.
\end{lemma}
\begin{proof}
    From \cref{fact:clifford-nullity-structure}, $U = C_2 U^\prime C_1$ for Clifford unitary channels $C_2$ and $C_1$ and $t$-Pauli dimensional unitary channel $U^\prime$.
    Observe that 
    \begin{align*}
        U^\dagger A U = (C_2 U^\prime C_1)^\dagger A (C_2 U^\prime C_1)\\
        = C_1^\dagger \left((U^\prime)^\dagger (C_2^\dagger A C_2) U^\prime\right) C_1
    \end{align*}
    and that $B \coloneqq C_2^\dagger A C_2$ is still $k$-Pauli dimensional.
    Therefore $D \coloneqq (U^\prime)^\dagger B U^\prime$ is $(k+t)$-Pauli dimensional by \cref{fact:clifford-nullity-structure}.
    Finally, $C_1^\dagger D C_1$ is again still $(k+t)$-Pauli dimensional as the Clifford unitary just permutes the Pauli decomposition.
\end{proof}

Last but not least, to connect this definition to circuits one may be more familiar with, we recall a standard fact from the stabilizer formalism literature (see, e.g., \cite{leone-stabilizer-nullity,grewal2023improved}), which shows that Clifford circuits doped with a small number of single-qubit non-Clifford gates has small Clifford nullity.

\begin{fact}\label{fact:doped-to-nullity}
    A Clifford circuit augmented by $t$ single-qubit non-Clifford gates has Clifford nullity at most $2t$.\footnote{If the non-Clifford gates are limited to single-qubit Pauli rotations, such as the $T$ gate, then the nullity is at most $t$.}
\end{fact}

This is actually a special case of the more general fact that alternations of juntas and Clifford circuits have bounded Clifford nullity.
It  follows that such circuits can also be efficiently learned by us, even in composition with a shallow-depth circuit.

\begin{fact}\label{fact:junta-to-nullity}
    Let $U$ be a circuit that alternates between unitaries with nullity $t_i$ and quantum juntas on $k_i$ qubits.
    Then $U$ has Clifford nullity at most $\sum_i 2k_i + t_i$.
\end{fact}

\subsubsection{Tomography Algorithm}\label{ssec:nullity-and-shallow}

We now establish the key technical fact: for $U$ decomposable into a shallow circuit and a unitary of bounded Clifford nullity, the Heisenberg-evolved single-qubit Paulis remain low-dimensional.

\begin{lemma}\label{lem:qnc-plus-clifford-heisenberg-pauli}
    Let $U \in \C^{2^n \times 2^n}$ be expressible as $U = QC$, where $Q$ is a depth-$d$ quantum circuit and $C$ has Clifford nullity $t$.
    Then, for every weight-one Pauli operator $P_i$ that acts non-trivially only on qubit $i$, $U^\dagger P U$ is $(2^{d+1} + t)$-Pauli dimensional.
    Moreover, $\supp(C_2 U^\dagger P U C_2^\dagger) \subseteq \paulisupport_{2^{d}+\ell, t-2\ell} $ for some \emph{other} Clifford circuit $C_2$ and integer $\ell \leq \lfloor \frac{t}{2}\rfloor$.
\end{lemma}
\begin{proof}
    Observe (as we argued earlier in this section) that $Q^\dagger P_i Q$ is a $2^d$-junta. Because $k$-juntas are $2k$-Pauli dimensional, the Heisenberg-evolved operator is therefore $2^{d+1}$-Pauli dimensional with a $(2^d, 0)$ structure.
    Finally, apply \cref{lemma:nullity-conjugation} to show that $C^\dagger (Q^\dagger P_i Q) C$ is $(2^{d+1}+t)$-Pauli dimensional.
\end{proof}

We are now ready to state our learning algorithm for unitary channels that decompose into a shallow circuit composed with a unitary of bounded Clifford nullity.
Together with \cref{fact:doped-to-nullity}, which shows that Clifford circuits augmented with $t$ single-qubit gates have Clifford nullity $O(t)$, this yields the main result of the section.

\begin{theorem}\label{thm:learn-magic-heirarchy}
    Let $U \in \C^{2^n \times 2^n}$ be an $n$-qubit unitary that can be written either as $U = QC$ or $U = CQ$, where $Q$ is a depth-$d$ circuit and $C$ has Clifford nullity $t$.
    Then, given query access to $U$ and $U^\dagger$, there exists an algorithm that, with probability at least $1-\delta$, outputs a $2n$-qubit unitary channel $V$ such that $\distdiamond(U^\dagger \otimes U, V) \leq \eps$ using \[O\left(2^{2^d + t}\left(2^{2^d} + t\right) \, \tfrac{n^2}{\varepsilon} \log(n/\delta)\right)\]
    queries to $U$ and $U^\dagger$, and 
    \[
        \widetilde{O}\left(\left(2^{3\cdot 2^d + t} \left(8^t + \log(n/\delta)\right) + n\left(\frac{2^{4\cdot 2^d +2t}}{\eps}  + \log(n/\delta)\right)\right) n \log(n/\delta)\right)
    \]
    time.
\end{theorem}
\begin{proof}
Since we have access to both $U$ and $U^\dagger$, we may assume without loss of generality that $U = Q C$.
By \cref{lem:qnc-plus-clifford-heisenberg-pauli}, for every weight-one Pauli $P_i$, the conjugate $U^\dagger P_i U$ is $(2^{d+1}+t)$-Pauli dimensional.
Furthermore, for every weight-one Pauli $P_i$, there exists a Clifford circuit $C_2$ and integer $t \leq \lfloor \frac{t}{2}\rfloor$ where $\supp(C_2 U^\dagger P_i U C_2^\dagger) \subseteq \paulisupport_{2^d + \ell, t-\ell}$.
Applying \cref{cor:efficient-learning-thm} with worst-case parameter $a=2^d$ and $b = t$ then yields the claimed query and time complexities.
\end{proof}

Let us conclude with a few remarks about \cref{thm:learn-magic-heirarchy}.
First, our algorithm is \emph{improper} in the sense that the output is not itself a shallow circuit composed with a near-Clifford circuit. 
Instead, the algorithm returns a circuit consisting of $\widetilde{O}(2^d + t)$ alternating layers of Clifford circuits and depth-$\widetilde{O}(4^{2^{d}+t})$ circuits.
We leave it as an open problem to design a proper learning algorithm. 

Second, the $1/\eps$ Heisenberg scaling from \cref{cor:pauli-dimension-bootstrap-finegrain} results in an improved $n$ dependence in \cref{thm:learn-magic-heirarchy} for this concept class.
Had our algorithm scaled as $1/\eps^2$, there would be an extra factor of $n$ in the query and time complexities of \cref{thm:learn-magic-heirarchy} wherever $\eps$ appears.

Third, notion of Clifford nullity is quite powerful, as shown in \cref{fact:junta-to-nullity}.
This means that the following can be learned as a corollary of \cref{thm:learn-magic-heirarchy}.

\begin{corollary}\label{cor:alternating-junta-clifford}
    Let $U \in \C^{2^n \times 2^n}$ be an $n$-qubit unitary that can be written as
    \[
        U \coloneqq \prod_i (J_i C_i)
    \]
    where $J_i$ is a junta on $k_i$ qubits and $C_i$ is a unitary with Clifford nullity $t_i$.
    Then, given query access to $U$ and $U^\dagger$, there exists an algorithm that, with probability at least $1-\delta$, outputs a $2n$-qubit unitary channel $V$ such that $\distdiamond(U^\dagger \otimes U, V) \leq \eps$ using \[O\left(2^{t} t \, \tfrac{n^2}{\varepsilon} \log(n/\delta)\right)\]
    queries to $U$ and $U^\dagger$, and 
    \[
        \widetilde{O}\left(\left(2^{t} \left(8^t + \log(n/\delta)\right) + n\left(\frac{2^{2t}}{\eps}  + \log(n/\delta)\right)\right) n \log(n/\delta)\right)
    \]
    time, where $t = \sum_i 2k_i + t_i$.
\end{corollary}
\begin{proof}
    By \cref{fact:junta-to-nullity}, we can see that the Clifford nullity of $U$ is at most $t$.
    We then apply \cref{thm:learn-magic-heirarchy}.
\end{proof}

\subsection{Composition of Matchgate and Clifford Circuits}
Like Clifford circuits, matchgates are a set of highly expressive, but non-universal circuits that are classically simulable~\cite{valiant2002quantum,bravyi2019approximation}.
They are equivalent to fermionic Gaussian Unitaries \cite{terhal2002classical,knill2001fermioniclinearopticsmatchgates}, which model non-interacting fermions, after undergoing the Jordan--Wigner transformation to map them to a qubit system.
They are therefore important in condensed matter physics, quantum chemistry, and many-body physics.

\begin{definition}[Jordan--Wigner Majoranas]\label{def:majorana}
    We define the Majorana operators on a qubit system, having undergone the Jordan--Wigner transformation, to be 
    \[
    \gamma_{2a-1} \coloneqq Z^{\otimes a-1} \otimes X \otimes I^{\otimes n-a},\,\, \gamma_{2a} \coloneqq Z^{\otimes a} \otimes Y \otimes I^{\otimes n-a}
    \]
    for $a \in [n]$.
\end{definition}

Observe that the Jordan--Wigner Majoranas are a generating set of minimal size $2n$, but maximal local length $2n$.
They are also mutually anti-commuting, which we will need to apply \cref{fact:majorana-rounding}.

\begin{definition}[Matchgate circuit]\label{def:matchgate}
    We define the set of Match gates circuits to be the circuits such that for all $a \in [2n]$
    \[
    U^\dagger \gamma_a U = \sum_{\ell = 1}^{2n} M_{b a} \gamma_b
    \]
    for some orthogonal matrix $M \in O(2n)$.
\end{definition}

From the definition, we can see that the Heisenberg-evolved Jordan--Wigner Majoranas of a Match gate circuit are $2n$-sparse.

\begin{theorem}\label{thm:learn-match-plus-clifford}
    Let $U \in \C^{2^n \times 2^n}$ be an $n$-qubit unitary that can be written either as $U = MC$ or $U = CM$, where $M$ is a Matchgate circuit and $C$ is a Clifford circuit.
    Then, given query access to $U$ and $U^\dagger$, there exists an algorithm that, with probability at least $1-\delta$, outputs a $2n$-qubit unitary channel $V$ such that $\distdiamond(U^\dagger \otimes U, V) \leq \eps$ using \[O\left(n^6 \frac{n + \log(1/\delta)}{\eps^2}\right)\]
    queries to $U$ and $U^\dagger$, and 
    \[
        O\left(n^7 \frac{n + \log(1/\delta)}{\eps^2}\right)
    \]
    time.
\end{theorem}
\begin{proof}
    Observe that conjugating the Jordan--Wigner Majorana operator by a Matchgate circuit leaves us with a $2n$-Pauli sparse unitary composed of mutually anti-commuting Pauli operators.
    By then conjugating with a circuit with Clifford circuit, these anti-commuting Pauli operators will get mapped to a different set of mutually anti-commuting Pauli operators, whilst preserving sparsity.
    This means that we can time efficiently round the results of \cref{thm:learning-sparsity} to a unitary via \cref{fact:majorana-rounding}.
    We then apply \cref{cor:efficient-learning-thm} with $\abs{G} = 2n$, the number of Majoranas operators, Pauli sparsity $s  = O(2^t n)$, and $L = 2n$ the local length.
\end{proof}

\subsection{The Clifford and Matchgate Hierarchies}

We now show that our techniques yields learning algorithms for both the Clifford hierarchy~\cite{gottesman1999demonstrating} and the recently introduced matchgate hierarchy~\cite{matchgate-hierarchy}. 
Learning algorithms for these hierarchies were previously given in ~\cite{low2009learning,matchgate-hierarchy}.
First, we define the hierarchies. 

\begin{definition}[Clifford hierarchy]
    For $n \in \mathbb{N}$, let $\calC_0$ be the set of Pauli operators.
    Then for $k \geq 1$, we define $\calC_{k} \coloneqq \{U \in U(2^n) : \forall P \in \calC_0, U^\dagger P U \in \calC_{k-1}\}$ to be the $k$-the level of the Clifford Hierarchy.
\end{definition}

\begin{definition}[Matchgate hierarchy]
    For $n \in \mathbb{N}$, let $\calC_0$ be the set of unitaries supported solely on the Jordan--Wigner Majorana operators.
    Then for $k \geq 1$, we define $\calC_{k} \coloneqq \{U \in U(2^n) : \forall P \in \calC_0, U^\dagger P U \in \calC_{k-1}\}$ to be the $k$-the level of the matchgate hierarchy.
\end{definition}

It is immediately evident from their recursive definitions that both of these hierarchies lie in the hierarchy defined in \cref{subsec:hierarchy} and are therefore efficiently learnable by \cref{thm:hierarchy}.

\section{Lower Bounds}\label{sec:lower-bounds}
In this section, we prove query lower bounds for various unitary learning tasks.
In \cref{subsec:lower-bounds-dimenstionality-etc}, we prove lower bounds for learning low-Pauli-dimensional and Pauli-sparse unitaries, as well as quantum juntas.  
In \cref{subsec:lower-bounds-composition}, we discuss lower bounds for learning unitaries that can be expressed as the composition of near-Clifford unitaries and shallow circuits.

\subsection{Lower Bounds for Pauli Dimensionality, Sparsity, and Quantum Juntas}
\label{subsec:lower-bounds-dimenstionality-etc}
We prove lower bounds for learning quantum $k$-juntas, $s$-Pauli-sparse, and $k$-Pauli dimensional unitary channels. 
Our lower bounds leverage \cite[Theorem 1.2]{haah2023query}, which showed that $\Omega(d^2/\eps)$ queries are necessary to learn $d \times d$ unitary matrices, along with padding arguments.
These lower bounds establish the query optimality of our \cref{cor:pauli-dimension-bootstrap} and \cref{cor:optimal-junta}. For sparsity, we prove an $\Omega(s/\eps)$ lower bound, so a gap remains between that and our $O(s^2/\eps^2)$ upper bound in \cref{thm:learning-sparsity}.

\begin{lemma}[{\cite[Theorem 1.2]{haah2023query}}]\label{lem:unitary-lowerbound}
Let $\calA$ be an algorithm that, for an unknown unitary
$U \in \C^{d \times d}$ accessible through black box oracles that implement $U$, $U^\dagger$, $cU = \ketbra{0}{0} \otimes I_d + \ketbra{1}{1} \otimes U$,
and $cU^\dagger = \ketbra{0}{0} \otimes I_d + \ketbra{1}{1} \otimes U^\dagger$, can output a classical description of a unitary $V$
such that $\distdiamond(U, V) < \eps < \frac{1}{8}$ with probability $\geq \frac{2}{3}$.
Then $\calA$ must use $\Omega(d^2/\eps)$ oracle queries.
\end{lemma}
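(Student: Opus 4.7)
The plan is to prove the $\Omega(d^2/\eps)$ lower bound via a combined packing and Heisenberg-scaling argument, simultaneously exploiting the $d^2$-dimensional parameter count and the Heisenberg limit for each parameter direction. The first step is to exhibit an $\eps$-packing of $U(d)$: a family $\{U_i\}_{i=1}^N \subseteq U(d)$ of size $N = \exp(\Omega(d^2))$ with pairwise diamond distance at least $4\eps$. A volume argument on the unitary group (which has real dimension $d^2$) gives such a packing, e.g.\ by taking $U_i = \exp(i \eps H_i)$ for traceless Hermitian $H_i$ with $\opnorm{H_i} \leq 1$ chosen from an $\Omega(1)$-net on the unit ball of traceless Hermitian matrices. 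The $\eps$-scaling ensures pairwise diamond distance on the order of $\eps$, while the $d^2$ real dimensions give the $\exp(\Omega(d^2))$ count. By the triangle inequality, any algorithm achieving $\distdiamond(U,V)<\eps$ with probability $\geq 2/3$ identifies the correct $i$ in this packing with probability $\geq 2/3$, reducing tomography to a hypothesis testing problem.

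The next step is a query lower bound for the distinguishing task. I would frame this as a hybrid / adversary argument: consider the mixture of inputs over $\{U_i\}$, and track the trace distance between the algorithm's final state under $U_i$ and under a reference $U$ (say the identity). Since $U_i = \exp(i\eps H_i)$ is $O(\eps)$-close to $U$ in operator norm, each of the $T$ coherent queries can perturb the algorithm's state by at most $O(\eps)$ in trace distance; by a standard hybrid argument, after $T$ queries the two states are $O(T\eps)$-close in trace distance. For the algorithm to reliably distinguish a constant fraction of the packing, this perturbation must be $\Omega(1)$ \emph{in every direction} of the $d^2$-dimensional tangent space, not just in a single direction. The hard step is converting this per-direction bound into a simultaneous bound of the claimed form $T = \Omega(d^2/\eps)$.

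The main obstacle is precisely this last conversion: separately, a one-parameter family $U_\theta = \exp(i\theta H)$ gives only $\Omega(1/\eps)$ via Heisenberg scaling, and the packing argument alone (combined with Holevo's bound, which caps information per query at $O(\log d)$ bits) yields only $\widetilde{\Omega}(d^2)$. Combining them multiplicatively requires a more refined analysis — for example, a quantum Fisher information / Cram\'er--Rao argument applied to the full $d^2$-parameter family, showing that the quantum Fisher information matrix at a generic unitary has operator norm $O(T^2)$, so that simultaneously estimating $d^2$ parameters to precision $\eps$ demands $T^2 \cdot \eps^2 \gtrsim d^2$. This is the technically delicate step, because the $d^2$ parameter directions share queries and their estimators can be correlated; handling this properly is the content of the Haah--Kothari--O'Donnell--Tang proof, and I would follow their adversary-method framework to turn the per-direction $\Omega(1/\eps)$ bound into a collective $\Omega(d^2/\eps)$ bound, completing the lower bound.
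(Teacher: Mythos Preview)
The paper does not prove this lemma at all: it is stated as \cite[Theorem~1.2]{haah2023query} and imported verbatim as a black box, with no argument given. The paper's only use of it is the padding reduction in \cref{thm:junta-lowerbound} and \cref{cor:dimension-lowerbound}. So there is nothing to compare your proposal against here --- any proof you write goes strictly beyond what the paper does.

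On the content of your sketch: you have correctly identified both ingredients (the $\exp(\Omega(d^2))$ packing from the real dimension of $U(d)$, and the $\Omega(1/\eps)$ Heisenberg scaling per direction) and correctly located the hard step, namely getting the multiplicative combination $\Omega(d^2/\eps)$ rather than $\widetilde\Omega(d^2)+\Omega(1/\eps)$. Your hybrid argument as written only gives the latter: the bound ``each query perturbs the state by $O(\eps)$, so $T\eps=\Omega(1)$'' yields $T=\Omega(1/\eps)$ for distinguishing a \emph{single} pair, and the Holevo/packing side gives $T=\widetilde\Omega(d^2)$ independently; nothing in the sketch forces these to multiply. The Fisher-information heuristic you mention (QFI matrix has operator norm $O(T^2)$, so $T^2\eps^2\gtrsim d^2$) is morally the right picture, but turning it into a rigorous query lower bound is exactly the nontrivial content of the Haah--Kothari--O'Donnell--Tang argument, which does not proceed via Fisher information directly. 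If you want to actually prove this, you should consult their proof rather than reinvent it; for the purposes of the present paper, a citation suffices.
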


\begin{theorem}\label{thm:junta-lowerbound}
    Let $\calA$ be an algorithm that, for an unknown $k$-junta
    $U \in \C^{2^n \times 2^n}$ accessible through black box oracles that implement $U$, $U^\dagger$, $cU = \ketbra{0}{0} \otimes I_d + \ketbra{1}{1} \otimes U$,
    and $cU^\dagger = \ketbra{0}{0} \otimes I_d + \ketbra{1}{1} \otimes U^\dagger$, can output a classical description of a unitary $V$
    such that $\distdiamond(U, V) < \eps < \frac{1}{8}$ with probability $\geq \frac{2}{3}$.
    Then $\calA$ must use $\Omega(4^k/\eps)$ oracle queries.
\end{theorem}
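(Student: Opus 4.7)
My plan is to prove this by a padding/reduction argument from \cref{lem:unitary-lowerbound}. Suppose toward contradiction that $\calA$ is an algorithm that learns arbitrary $k$-juntas on $n$ qubits in diamond distance with fewer than $\Omega(4^k/\eps)$ queries. I will use $\calA$ as a black box to construct an algorithm that learns arbitrary $k$-qubit unitaries in diamond distance with the same query count, contradicting the $\Omega(d^2/\eps) = \Omega(4^k/\eps)$ lower bound of \cref{lem:unitary-lowerbound} when $d = 2^k$.

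Given oracle access to an arbitrary $k$-qubit unitary $U' \in \C^{2^k \times 2^k}$ (with access to $U'$, $(U')^\dagger$, $cU'$, $c(U')^\dagger$), define the $n$-qubit $k$-junta $U \coloneqq I^{\otimes n-k} \otimes U'$. Each of the four oracles for $U$ can be simulated using exactly one query to the corresponding oracle for $U'$: the oracles for $U$ and $U^\dagger$ follow by applying $U'$ or $(U')^\dagger$ to the last $k$ qubits; the controlled versions $cU$ and $cU^\dagger$ can be implemented by applying $cU'$ (respectively $c(U')^\dagger$) to the control together with the last $k$ qubits, since the first $n-k$ qubits receive the identity regardless of the control. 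Running $\calA$ on this simulated junta returns an $n$-qubit unitary $V$ with $\distdiamond(U, V) < \eps$ with probability at least $2/3$, using the same number of queries as $\calA$ uses on $U$.

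It remains to extract a $k$-qubit estimate $V'$ of $U'$ from the $n$-qubit output $V$. By \cref{fact:op-to-diamond}, there exists a phase $\theta$ with $\opnorm{e^{i\theta} U - V} \leq \eps$. Define the compression $V_{00} \coloneqq (\bra{0^{n-k}} \otimes I_{2^k})\, V\, (\ket{0^{n-k}} \otimes I_{2^k}) \in \C^{2^k \times 2^k}$. Since $(\bra{0^{n-k}} \otimes I) (e^{i\theta} U) (\ket{0^{n-k}} \otimes I) = e^{i\theta} U'$ and compression does not increase operator norm, we get $\opnorm{e^{i\theta} U' - V_{00}} \leq \eps$. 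Now let $V'$ be the nearest unitary to $V_{00}$ in operator norm, obtained via polar decomposition; since $e^{i\theta} U'$ is a unitary, $\opnorm{V' - V_{00}} \leq \opnorm{e^{i\theta} U' - V_{00}} \leq \eps$, so the triangle inequality gives $\opnorm{e^{i\theta} U' - V'} \leq 2\eps$, i.e., $\distphop(U', V') \leq 2\eps$. Applying \cref{fact:op-to-diamond} once more yields $\distdiamond(U', V') \leq 4\eps$.

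Thus, an algorithm for learning $k$-juntas to diamond-distance $\eps < 1/32$ with $Q$ queries yields an algorithm for learning arbitrary $k$-qubit unitaries to diamond-distance $4\eps < 1/8$ with $Q$ queries. By \cref{lem:unitary-lowerbound} applied with $d = 2^k$ and error parameter $4\eps$, we must have $Q = \Omega(4^k / (4\eps)) = \Omega(4^k/\eps)$, completing the proof. The only subtlety is the extraction step: I briefly verified that the compression-plus-polar-decomposition construction loses only a constant factor in the error; the rest of the argument is essentially the padding trick and a query-preserving simulation of the four oracles.
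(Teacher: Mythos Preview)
Your proof is correct and follows essentially the same padding/reduction argument as the paper: embed an arbitrary $k$-qubit unitary as a $k$-junta via $U = I^{\otimes n-k} \otimes U'$, simulate the junta oracles from the $U'$ oracles, and invoke \cref{lem:unitary-lowerbound}. You are in fact more careful than the paper, which simply asserts the reduction without spelling out how to extract a $k$-qubit estimate from the $n$-qubit output $V$; your compression-plus-polar-decomposition step (losing only a constant factor via \cref{fact:op-to-diamond}) fills that gap cleanly.
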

\begin{proof}
    It's clear from \cref{lem:unitary-lowerbound} that the set of unitaries on $k$-qubits requires $\Omega(4^k/\eps)$ queries.
    If we take those unitaries and create a set of $n$-qubit unitaries by padding $I^{\otimes n-k}$ to the first (or last) register, then we get a set of $k$-junta.
    Learning this set of $k$-junta using $o(4^k/\eps)$ queries would contradict \cref{lem:unitary-lowerbound}, as querying $I^{\otimes n-k} \otimes U$ is just as easy as querying $U$ (modulo the $n-k$ extra ancilla qubits).
\end{proof}

\begin{corollary}\label{cor:dimension-lowerbound}
    Let $\calA$ be an algorithm that, for an unknown $k$-Pauli dimensional unitary
    $U \in \C^{2^n \times 2^n}$ accessible through black box oracles that implement $U$, $U^\dagger$, $cU = \ketbra{0}{0} \otimes I_d + \ketbra{1}{1} \otimes U$,
    and $cU^\dagger = \ketbra{0}{0} \otimes I_d + \ketbra{1}{1} \otimes U^\dagger$, can output a classical description of a unitary $V$
    such that $\distdiamond(U, V) < \eps < \frac{1}{8}$ with probability $\geq \frac{2}{3}$.
    Then $\calA$ must use $\Omega(2^k/\eps)$ oracle queries.
\end{corollary}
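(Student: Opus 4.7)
The plan is to derive this lower bound as a direct corollary of the junta lower bound (Theorem 6.2), by observing that quantum juntas are a special case of low-Pauli-dimensional unitaries. Specifically, every $m$-junta channel (up to a qubit permutation) acts as $I^{\otimes n-m} \otimes U'$ for some $m$-qubit unitary $U'$, and its Pauli support is contained in $I^{\otimes n-m} \otimes \{I,X,Y,Z\}^{\otimes m}$, which spans a Pauli subspace of dimension exactly $2m$. Therefore an $m$-junta is $2m$-Pauli-dimensional, and, more generally, $(2m')$-Pauli-dimensional for any $m' \geq m$.

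To prove the $\Omega(2^k/\eps)$ bound, set $m \coloneqq \lfloor k/2 \rfloor$ and consider the class of $m$-juntas on $n$ qubits. Since $2m \leq k$, every such junta is in particular $k$-Pauli-dimensional, so any algorithm that learns arbitrary $k$-Pauli-dimensional unitaries to diamond distance $\eps < 1/8$ with success probability at least $2/3$ must in particular succeed on the subclass of $m$-juntas, using the same oracles $U, U^\dagger, cU, cU^\dagger$. By Theorem 6.2, learning $m$-juntas to this accuracy requires $\Omega(4^m/\eps)$ queries. Since $4^{\lfloor k/2\rfloor} = \Omega(2^k)$ (the hidden constant is $1$ when $k$ is even and $1/2$ when $k$ is odd), we obtain the claimed $\Omega(2^k/\eps)$ lower bound.

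There is essentially no obstacle here: the only mild subtlety is handling both parities of $k$, which is absorbed into the constant. No new adversary construction or information-theoretic argument is needed beyond citing the previously established junta lower bound. The proof will simply note that the promise of being $k$-Pauli-dimensional is (strictly) weaker than the promise of being a $\lfloor k/2\rfloor$-junta, apply Theorem 6.2, and rewrite $4^{\lfloor k/2\rfloor}$ as $\Omega(2^k)$ to conclude.
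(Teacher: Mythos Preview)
Your proposal is correct and takes essentially the same approach as the paper: both reduce from the junta lower bound (Theorem~\ref{thm:junta-lowerbound}) by noting that an $m$-junta is $2m$-Pauli-dimensional, so an efficient learner for $k$-Pauli-dimensional unitaries would yield an efficient $\lfloor k/2\rfloor$-junta learner. You are even slightly more explicit than the paper about handling the parity of $k$ via the floor, which the paper glosses over.
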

\begin{proof}
    Every $k$-junta is $2k$-Pauli dimensional, so an algorithm for $2k$-Pauli dimensional unitaries that runs in time $o(2^{(2k)}/\eps) = o(4^k/\eps)$ would violate \cref{thm:junta-lowerbound}.
\end{proof}

We can also show an $\Omega(s/\eps)$ lower bound for Pauli sparsity.

\begin{corollary}
\label{cor:pauli-sparsity-LB}
    Let $\calA$ be an algorithm that, for an unknown $s$-Pauli sparse unitary
    $U \in \C^{2^n \times 2^n}$ accessible through black box oracles that implement $U$, $U^\dagger$, $cU = \ketbra{0}{0} \otimes I_d + \ketbra{1}{1} \otimes U$,
    and $cU^\dagger = \ketbra{0}{0} \otimes I_d + \ketbra{1}{1} \otimes U^\dagger$, can output a classical description of a unitary $V$
    such that $\distdiamond(U, V) < \eps < \frac{1}{8}$ with probability $\geq \frac{2}{3}$.
    Then $\calA$ must use $\Omega(s/\eps)$ oracle queries.
\end{corollary}
\begin{proof}
    For $s = 2^k$ for some integer $k$, we can see that the set of $k$-Pauli dimensional unitary channels requires $\Omega(2^k/\eps) = \Omega(s/\eps)$ queries.
    As a $k$-Pauli dimensional unitary channel is also $2^k$-Pauli sparse, the lower bound follows.
\end{proof}

\subsection{Lower Bounds for the Composition of Shallow and Near-Clifford Circuits}
\label{subsec:lower-bounds-composition}

We now prove lower bounds for learning compositions of near-Clifford circuits with shallow circuits. 
An $\Omega(2^t)$ dependence in the query complexity of \cref{thm:learn-magic-heirarchy} is unavoidable in general, since unitary channels with Clifford nullity $2t$ form a strict superset of quantum $t$-juntas, and thus the lower bound of \cref{thm:junta-lowerbound} applies. 
However, it remains open whether the same lower bound holds in the more restricted setting of Clifford circuits augmented with $t$ single-qubit non-Clifford gates. 
This situation is analogous to the state-learning lower bounds discussed in \cite{grewal2023efficient}.

We also give a short proof that $\exp(\exp(\Omega(d)))$ query complexity is necessary even when inverse queries are allowed. 
This follows from the lower bound of \cite{bennet1997strengths} for Grover search via a padding argument. 
The corresponding non-padded argument appears in \cite[Proposition 3]{huang2024shallow}, which shows a weaker $\Omega(\exp(n))$ lower bound for $d = O(\log n)$.

\begin{lemma}
    Circuits of depth $d$ require $\exp(\exp(\Omega(d)))$ queries to learn to diamond distance $\eps < 1$ with constant success probability.
\end{lemma}
\begin{proof}
    The multi-controlled Toffoli gate with $k$ control qubits has been shown to be implemented by circuits using depth $d = O(\log k)$.\footnote{This is equivalent to the statement that $\mathsf{QAC^0} \subset \mathsf{QNC}^1$.}
    This multi-controlled Toffoli can be used to build the following family of unitary channels $\{U_y\}_{y \in \{0, 1\}^k}$ on $k$-qubits using two extra layers of Pauli $X$ gates:
    \[ U_y\ket{x} = \begin{cases}
        (-1)\ket{x} & x=y\\
        \ket{x} & x\neq y
    \end{cases}\]
    meaning that this gate can also be implemented in depth $d = O(\log k)$.
    Deciding if an unknown $k$-qubit unitary channel is the $k$-qubit identity matrix or one of $U_y$ (for all $y \in \{0, 1\}^k$) is equivalent to computing the AND function on $2^k$ bits.
    This provably requires $\Omega(2^{k/2}) = 2^{\exp(\Omega(d))}$ queries to achieve constant success probability \cite{bennet1997strengths}.
    
    Each $U_y$ is maximally far from identity in diamond distance (i.e., $\distdiamond(I^{\otimes k}, U_y) = 2$) by reducing to distinguishing the $k+1$-qubit state $\ket{\psi_y} \coloneqq \frac{\ket{y} + \ket{y \oplus 0\dots0 1}}{\sqrt{2}} = I^{\otimes k} \ket{\psi_y}$ from the orthogonal state $\frac{-\ket{y} + \ket{y \oplus 0\dots0 1}}{\sqrt{2}} = U_y \ket{\psi_y}$.
    Therefore, learning to diamond distance strictly less than $1$ allows one to distinguish the identity channel from the $U_y$.
    It follows that such an algorithm must use $\exp(\exp(\Omega(d)))$ queries even with inverse-access, as $U_y = U_y^\dagger$ and $I = I^\dagger$.
\end{proof}

\section*{Acknowledgements}
We thank Nick-Hunter Jones, Vishnu Iyer, William Kretschmer, Ewin Tang, and Fang Song for useful discussions.
DL is supported by US NSF Award CCF-222413.
SG is supported in part by an IBM PhD Fellowship.
This work was done in part while SG was visiting the Simons Institute for the Theory of Computing, supported by NSF Grant QLCI-2016245.

\bibliographystyle{alphaurl}
\bibliography{refs}

\appendix

\section{\texorpdfstring{Proof of \cref{lem:symplectic-gram-schmidt}}{Proof of Lemma 5.6}}\label{sec:deferred-proof}

\symplecticgramschmidt*
\begin{proof}
    The algorithm runs in three main phases that are each akin to Gram-Schmidt, with minor variations between each one.
    The first phase gets the generators of $T$, the second grabs the $x_{a^\prime + 1}, \dots, x_{a^\prime+k}$ generators of $S$, and the third grabs the remaining generators of $S$.
    This is in order to properly preserve the generators of $T$ by not mixing with generators of $S$.

    Instantiate counter $\ell \gets 1$ and set $A \gets \emptyset$.
    Let $G \gets \{t_1, \dots, t_c\}$ where $T = \langle t_1, \dots, t_d \rangle$ are the generators we have as input.
    Let $H \gets \{s_1, \dots, s_{d-c}\}$ be the additional generators of $S$.

    We then repeat the following until $G$ is the empty set \textbf{(phase one)}:
    \begin{itemize}
    \item Grab arbitrary $t_i \in G$ and remove it from $G$.
    \item Iterate through the rest of $G$ to find $t_j$ such that $[t_i, t_j] = 1$, should it exist.
    \item If such an $t_j$ \emph{does} exist:
    \begin{itemize}
        \item Remove $t_j$ from $G$ as well
        \item Label $x_\ell \gets t_j$ and $z_\ell \gets t_i$
        \item Iterate through the remaining $t_k \in G$ and if $[x_\ell, t_k] = 1$ then $t_k \gets t_k + z_\ell$ and if $[z_\ell, t_k] = 1$ then $t_k \gets t_k + x_\ell$.
        \item Iterate through $s_k \in H$ and if $[x_\ell, s_k] = 1$ then $s_k \gets s_k + z_\ell$ and if $[z_\ell, s_k] = 1$ then $s_k \gets s_k + x_\ell$.
        \item Increment $\ell \gets \ell + 1$.
    \end{itemize}
    \item If such an $t_j$ \emph{does not} exist then add $t_i$ to $A$.
    \end{itemize}
    
    We now repeat this next process until $A$ is the empty set \textbf{(phase two)}:
    \begin{itemize}
        \item Grab arbitrary $t_i \in A$ and remove it from $A$.
        \item Iterate through $H$ to find $s_j$ such that $[t_i, s_j] = 1$.
        \item If such an $s_j$ \emph{does} exist:
        \begin{itemize}
            \item Remove $s_j$ from $H$
            \item Label $x_\ell \gets s_j$ and $z_\ell \gets t_i$
            \item Iterate through $t_k \in A$ and if $[x_\ell, t_k] = 1$ then $t_k \gets t_k + z_\ell$.
            \item Iterate through $s_k \in H$ and if $[x_\ell, s_k] = 1$ then $s_k \gets s_k + z_\ell$ and if $[z_\ell, s_k] = 1$ then $s_k \gets s_k + x_\ell$.
            \item Increment $\ell \gets \ell + 1$.
        \end{itemize}
        \item If such an $s_j$ \emph{does not} exist then add $t_i$ to $G$ (recall that $G$ was emptied earlier).
    \end{itemize}
    Set elements of $G$ to $z_{a^\prime+k+1}, \dots, z_{a^\prime + b^\prime}$.
    
    Finally, we repeat this last process until $H$ is the empty set \textbf{(phase three)}:
    \begin{itemize}
        \item Grab arbitrary $s_i \in H$ and remove it from $H$.
        \item Iterate through $H$ to find $s_j$ such that $[s_i, s_j] = 1$.
        \item If such an $s_j$ \emph{does} exist:
        \begin{itemize}
            \item Remove $s_j$ from $H$
            \item Label $x_\ell \gets s_j$ and $z_\ell \gets s_i$
            \item Iterate through $s_k \in H$ and if $[x_\ell, s_k] = 1$ then $s_k \gets s_k + z_\ell$ and if $[z_\ell, s_k] = 1$ then $s_k \gets s_k + x_\ell$.
            \item Increment $\ell \gets \ell + 1$.
        \end{itemize}
        \item If such an $s_j$ \emph{does not} exist then add $s_i$ to $A$ (recall that $A$ was emptied earlier).
    \end{itemize}
    At the very end, set the generators in $A$ to be $z_{a+b^\prime-k+1}, \dots z_{a+b}$.

    Since we only add generators to generators, we still have a set of generators for $S$.
    Importantly, since we only ever add generators of $T$ to generators of $T$, we also still have generators for $T$.
    
    To satisfy the symplectic product relations, we note that after the first phase $A$ contains generators that commute will all other generators within $T$, so the (future) $z_{a^\prime+1}, \dots, z_{a^\prime + b^\prime}$ satisfy all of their requirements \emph{within $T$}.
    For $x_1, z_1$, we can see that $[x_1, z_1] = 1$, as we do not touch them once set.
    Furthermore, we force all remaining basis elements to commute with both $x_1$ and $z_1$.
    This includes all future $x_i, z_i$ pairs once we assign their label, for $i \leq a^\prime$ by induction.

    Moving onto the second phase, we find elements in $H$ that anti-commute with those in $A$.
    By the same logic as before, the $x_i, z_i$ pairs all satisfy their requirements for $i \leq a^\prime + k$.
    Note that we don't need to check if $[t_k, z_\ell] = 1$ since everything in $A$ commutes with everything in $A$.
    At the end we can set $z_{a^\prime + k + 1}, \dots, z_{a^\prime + b^\prime}$ without worry, as everything in $H$ must commute with everything left in $A$.

    Finally, we only work with remaining elements of $H$ in the last phase and the correctness holds by the same logic as the first phase.
    
    The total runtime is $O(n(a+b)^2)$ as we need to double-iterate through $G$, $A$, and $H$ respectively, leading to $O((a+b)^2) $ many symplectic product calculations.
    Since each symplectic product takes $O(n)$ time to compute we get a total time of $O(n(a+b)^2)$.
\end{proof}

\end{document}